\newcommand{\avec}{{\bf{a}}}
\newcommand{\bvec}{{\bf{b}}}
\newcommand{\pvec}{{\bf{p}}}
\newcommand{\yvec}{{\bf{y}}}
\newcommand{\wvec}{{\bf{w}}}
\newcommand{\xvec}{{\bf{x}}}
\newcommand{\vvec}{{\bf{v}}}
\newcommand{\gvec}{{\bf{g}}}
\newcommand{\onevec}{{\bf{1}}}
\newcommand{\zerovec}{{\bf{0}}}
\newcommand{\Lambdamat}{{\bf{\Lambda}}}
\newcommand{\Amat}{{\bf{A}}}
\newcommand{\Bmat}{{\bf{B}}}
\newcommand{\Cmat}{{\bf{C}}}
\newcommand{\Dmat}{{\bf{D}}}
\newcommand{\Gmat}{{\bf{G}}}
\newcommand{\Imat}{{\bf{I}}}
\newcommand{\Lmat}{{\bf{L}}}
\newcommand{\Smat}{{\bf{S}}}
\newcommand{\Vmat}{{\bf{V}}}
\newcommand{\Wmat}{{\bf{W}}}
\newcommand{\define}{\stackrel{\triangle}{=}}
\def\muvec{{\mbox{\boldmath $\mu$}}}
\newcommand{\be}{\begin{equation}}
\newcommand{\ee}{\end{equation}}
\newcommand{\beqna}{\begin{eqnarray}}
\newcommand{\eeqna}{\end{eqnarray}}
\newcommand{\cmark}{\ding{51}}%
\newcommand{\xmark}{\ding{55}}%
\DeclareMathAlphabet{\pazocal}{OMS}{zplm}{m}{n}
\DeclareMathOperator*{\argmin}{argmin}
\DeclareMathOperator*{\argmax}{argmax}
\DeclareMathOperator{\EX}{\mathbb{E}}
\newtheorem{Claim}{Claim}
\newtheorem{Theorem}{Theorem}
\newtheorem{corollary}{Corollary}
\newtheorem{definition}{Definition}
\begin{document}
\title{MAP Estimation of  Graph Signals}
	\author{
	  Guy Sagi  and Tirza Routtenberg, \IEEEmembership{Senior Member, IEEE} 
		\thanks{G. Sagi and T. Routtenberg are with the School of Electrical and Computer Engineering, Ben-Gurion University of the Negev, Beer Sheva, Israel.  Tirza Routtenberg is  also with the  Department of Electrical and Computer
Engineering, Princeton University, Princeton, NJ 08544 USA.
e-mail: sagix@post.bgu.ac.il, tirzar@bgu.ac.il. This work is partially supported
by THE ISRAEL SCIENCE FOUNDATION (grant No. 1148/22) and by
the  Israeli Ministry of National Infrastructure, Energy, and Water Resources.\\
© 20XX IEEE.  Personal use of this material is permitted.  Permission from IEEE must be obtained for all other uses, in any current or future media, including reprinting/republishing this material for advertising or promotional purposes, creating new collective works, for resale or redistribution to servers or lists, or reuse of any copyrighted component of this work in other works.}
		}

	\maketitle

\begin{abstract}
In this paper,
we consider the problem of recovering random graph signals from nonlinear measurements. 
We formulate the  maximum
{\em a-posteriori} probability (MAP) estimator, which results in a 
nonconvex
optimization problem. 
Conventional iterative methods for minimizing nonconvex problems are sensitive to the initialization, have high computational complexity, and do not utilize the underlying graph structure behind the data.
In this paper we propose three new estimators that are based on the Gauss-Newton method:
1) the elementwise graph-frequency-domain MAP (eGFD-MAP) estimator;
2) the sample graph signal processing MAP (sGSP-MAP) estimator; and 3) the GSP-MAP estimator. At each iteration, these estimators are updated by the outputs of two graph filters, with the previous state estimator and the residual as
the input graph signals.
The eGFD-MAP estimator is based on neglecting the mixed-derivatives of different graph frequencies in the Jacobian matrix and the off-diagonal elements in the covariance matrices. 
Consequently, it updates the elements of the graph signal in the graph-frequency domain independently, which reduces the computational complexity compared to the conventional MAP estimator.
The sGSP-MAP and GSP-MAP estimators are based on optimizing the graph filters at each iteration of the   Gauss-Newton algorithm.
We state conditions under which the new estimators coincide with the MAP estimator in the case of an observation model with orthogonal graph frequencies.
We evaluate the performance of the estimators for nonlinear graph signal recovery tasks, both with synthetic data and with the real-world problem of state estimation in power systems.
These simulations show the advantages of the proposed estimators in terms of computational complexity, mean-squared-error, and robustness to the initialization of the algorithms.
\end{abstract}

\begin{keywords}
Graph signal processing (GSP),
graph filters,
graph-frequency domain,
nonlinear estimation,
maximum
{\em a-posteriori} probability (MAP) estimation
\end{keywords}

\section{Introduction}
The emerging field of graph signal processing (GSP) deals with processing data indexed by general graphs with concepts and techniques inspired by traditional digital signal processing (DSP). These techniques include graph Fourier transform (GFT), graph filter designs \cite{8347162,Shuman_Ortega_2013,Isufi_Leus2017,6319640}, and the sampling and recovery of graph signals \cite{9244650,6854325,7352352}. Most of these techniques have been used for various tasks under a linear measurement model. However, modern networks are often large and complex, contain heterogeneous datasets, and are characterized by nonlinear models \cite{9343697,8347160}, and as a result, graph signals are often difficult to recover in these networks.
Examples of applications with such networks include brain network connectivity \cite{shen2016nonlinear}, environmental monitoring \cite{8496842}, and power flow equations in power systems \cite{drayer2018detection,Grotas2019,shaked2021identification}.
Hence, the development of GSP methods for the estimation of graph signals in nonlinear models has considerable practical significance.

The mean-squared-error (MSE) is one of the most commonly-used criteria
of accuracy for estimation and reconstruction purposes. In Bayesian estimation, 
the optimal minimum MSE (MMSE)  estimator 
usually lacks a closed-form expression in nonlinear models and is often computationally intractable. 
Therefore,   the linear MMSE (LMMSE) estimator and other
low-complexity estimators  (see e.g. \cite{Edfors_Borjesson_1996,dd,Berman_letter}) are widely used in practice. 
The linear 
 GSP-LMMSE estimator, which minimizes the MSE among estimators that are represented as an output of a graph filter, has been suggested in \cite{kroizer2021bayesian} for the recovery of graph signals, and its properties are discussed therein. In particular, it has been shown that
 the GSP-LMMSE estimator has low computational complexity and the ability to adapt to changes in the graph topology. In addition, it has been extended to the widely-linear estimation of complex-valued signals in \cite{Amar_Routtenberg}.
However,  linear estimators 
fail to take into account the nonlinearity of the system, which
may lead to degraded performance compared with nonlinear methods. 
Consequently, developing nonlinear estimation methods that take advantage of the graph structure and GSP theory have the potential to significantly improve the estimation performance in these cases.

Nonlinear methods can significantly outperform linear estimators in terms of MSE.
For example, the superiority of the nonlinear maximum
{\em a-posteriori} probability (MAP) estimator compared with the LMMSE estimator in nonlinear filtering is well known \cite{Fatemi_Svensson_Morelande2012}.
In order to implement
nonlinear methods, such as 
the MAP estimator, in nonlinear settings,  iterative approaches are used.
In particular, the Gauss-Newton method is commonly used to find the MAP \cite{Fatemi_Svensson_Morelande2012} and other nonlinear estimators in various applications \cite{Abur_book,Monticelli_2000,Cosovic_Vukobratovic_2019,Mensing_Plass_2006,Stoica_Moses_Friedlande1989,Bell_Cathey1993,schweiger2005gauss}.
While the Gauss-Newton method provides fast convergence  and accurate estimates for 
proper initial values 
\cite{bjorck1996numerical,Li_Scaglione2013}, it
is sensitive to initialization, and may converge to local minima or even diverge \cite{Blaschke1997}. 
Thus, integrating the graph information, e.g. by using graph filters,  has the potential to enhance the robustness of iterative implementations of the MAP estimator. 

Graph filters have been used for many signal processing tasks, such as denoising \cite{7032244,ZHANG20083328}, classification \cite{6778068}, and anomaly detection \cite{drayer2018detection}. 
Model-based recovery of  graph signals  by GSP filters for linear models was treated in \cite{7117446,7032244,Isufi_Leus2017,7891646}.
Nonlinear graph filters were considered in \cite{8496842}, but they require higher-order statistics that are not completely specified in the general nonlinear case. Recently,
graph neural network approaches were considered in \cite{Isufi_Ribeiro,gama2020graphs}.
In addition, 
  a Gauss-Newton unrolled neural network method was developed in \cite{Yang_Giannakis_Sun2020}  for the application of power system state estimation (PSSE). 
However, data-based methods necessitate extensive stationary training sets, and do not necessarily utilize the model information. Using the model at hand, one can design estimators  with improved performance in terms of MSE, interpretability,
robustness, complexity, and flexibility.
Fitting graph-based models to given data was considered in \cite{7763882,Hua_Sayed_2020,confPaper}.  However, model-fitting approaches minimize the modeling error, and in general have worse performance than estimators that minimize the estimation error directly \cite{lecture_notes}. 

In this paper, we consider the nonlinear estimation of random graph signals with a nonlinear Gaussian observation model.
First, we present the MAP estimator, as well as its implementations in the vertex and in the graph-frequency domains by using the Gauss-Newton method. 
Then, we propose three new GSP estimators: 1) the elementwise graph-frequency-domain MAP (eGFD-MAP) estimator; 2) the sample GSP-MAP (sGSP-MAP) estimator; and 3) the GSP-MAP estimator. The eGFD-MAP estimator updates the coordinates of the estimator in the graph-frequency domain separately,
and has significantly lower computational complexity than the MAP estimator. The sGSP-MAP and the GSP-MAP estimators are based on optimizing the graph filters at each iteration of the Gauss-Newton algorithm.
We show that for 
models with measurement functions that have orthogonal graph frequencies, i.e. separable in the graph-frequency domain, 
  the eGFD-MAP, sGSP-MAP, and the GSP-MAP estimators coincide with the MAP estimator.
  We perform numerical simulations for: 1) synthetic data with orthogonal graph frequencies; and 2)  PSSE.
For the first case, it is shown that the eGFD-MAP, sGSP-MAP, and the GSP-MAP estimators achieve the same MSE as the MAP estimator, while the eGFD-MAP estimator significantly reduces the computational complexity, especially in large networks where the
other
estimators are intractable.
For the PSSE simulations, it is shown that the sGSP-MAP attains the MSE of the MAP estimator. Moreover, the eGFD-MAP and the GSP-MAP estimators have performance close to that of the MAP estimator
and outperform the linear estimator,
where the eGFD-MAP and GSP-MAP estimators are more robust to perturbed initialization than the MAP estimator.

The rest of this paper is organized as follows. In Section \ref{background_sec} we introduce the basics of GSP required for this paper.
In Section \ref{MAP_objectives_sec}, we formulate the estimation problem,  present the MAP 
estimation approaches (in the vertex and graph-frequency domains), and  describe the  Gauss-Newton implementation of the MAP 
estimators.
In Section \ref{eGFD_MAP_and_GSP_MAP}, we develop the eGFD-MAP, sGSP-MAP,  and GSP-MAP estimators and discuss their properties.
Simulations are presented in Section \ref{simulation}.
Finally, 
 conclusions are outlined  in Section \ref{conclusion}.

In the following, we denote vectors by boldface lowercase letters and matrices by boldface uppercase letters.
The operators $(\cdot)^T$ and $(\cdot)^{-1}$ denote the transpose and inverse,  respectively. The vector $\onevec$ denotes a vector of all ones, and  $\circ$ denotes the Hadamard product. The $m$th element of the vector $\avec$ is denoted by $a_m$ or $[\avec]_m$. The $(m,q)$th element of the matrix $\Amat$ is written as $A_{m,q}$ or $[\Amat]_{m,q}$.
For a vector $\avec$, ${\text{diag}}(\avec)$ is a diagonal matrix whose $i$th diagonal entry is $a_{i}$; when applied to a matrix, ${\text{diag}}(\Amat)$ is a vector collecting the diagonal elements of $\Amat$. In addition, ${\text{ddiag}}(\Amat)={\text{diag}}({\text{diag}}(\Amat))$ is the diagonal matrix whose entries on the diagonal are those
of $\Amat$.
The identity matrix and the zero vector are written as $\Imat$ and $\zerovec$, respectively.
The cross-covariance matrix of the vectors $\avec$ and $\bvec$ is denoted by $\Cmat_{\avec\bvec}\triangleq \EX[(\avec-\EX[\avec]) (\bvec-\EX[\bvec])^T]$.
The Jacobian matrix of a vector function $\gvec(\xvec)$, $\nabla_{\xvec}\gvec(\xvec)$, is a matrix in $\mathbb{R}^{K\times M}$, with the $(k,m)$th element equal to $\frac{\partial g_k}{\partial x_m}$, where $\gvec=\left[g_1,\ldots,g_K\right]^T$ and $\xvec=\left[x_1,\ldots,x_M\right]^T$, and
 $\nabla_\xvec^T  \gvec(\xvec) \define (\nabla_\xvec  \gvec(\xvec))^T$.
For a scalar function, $g(\xvec)$, we denote
$\nabla_\xvec^2 g(\xvec) \define\nabla_\xvec \nabla_\xvec^T  g(\xvec)$. Finally,  
$||\cdot||$ is the Euclidean norm.

\section{Background: Graph Signal Processing (GSP)}
\label{background_sec}
GSP is an emerging field that deals with developing signal processing methods for representing, analyzing, and processing signals that lie on a graph \cite{8347162,Shuman_Ortega_2013}.
Consider an undirected, connected, weighted graph ${\pazocal{G}}({\pazocal{V}},\xi,\Wmat)$, where $\pazocal{V}$ and $\xi$ are sets of vertices and edges, respectively. 
The matrix $\Wmat \in \mathbb{R}^{N \times N}$ is the nonnegative weighted adjacency matrix of the graph, where $N \define |\pazocal{V}|$ is the number of vertices in the graph. If there is an edge connecting vertices $i$ and $j$, i.e. 
 if $(i, j) \in \xi$,  the entry $\Wmat_{i,j}$ represents the weight of the edge; otherwise, $\Wmat_{i,j} = 0$.
The Laplacian matrix of the given graph ${\pazocal{G}}({\pazocal{V}},\xi,\Wmat)$, which  contains the information
on the graph structure, is defined by
\begin{equation}
    \Lmat \define \text{diag}\left(\Wmat\onevec\right) - \Wmat.
    \end{equation}
    Thus, 
the Laplacian matrix, $\Lmat$, is a real and positive semidefinite matrix
that satisfies
the null-space property, $\Lmat\onevec_M=\zerovec$,
and has nonpositive off-diagonal elements. In particular, 
its associated eigenvalue decomposition  is given by
 \begin{equation}
\Lmat = \Vmat\Lambdamat \Vmat^{T},   \label{SVD_new_eq}
 \end{equation}
where $\Lambdamat$ is a diagonal matrix consisting of the eigenvalues of 
$\Lmat$, $0= \lambda_1 < \lambda_2 \leq \ldots \leq \lambda_N $, $\Vmat$ is a matrix whose $n$th column, $\vvec_N$, is the eigenvector of $\Lmat$ that is associated with $\lambda_n$, and $\Vmat^{T}=\Vmat^{-1}$.

In this paper, a {\em{graph signal}} is 
an $N$-dimensional vector, $\avec$, that assigns a scalar value to each vertex, i.e. each entry $a_n$ denotes the signal value at vertex $n$, for $n=1,\ldots, N$.
The GFT of the graph signal $\avec$ is defined as \cite{Shuman_Ortega_2013}
 \begin{equation}
\label{GFT}
\tilde{\avec} \triangleq \Vmat^{T}\avec. 
 \end{equation}
Similarly, the inverse GFT (IGFT) of $\tilde{\avec}$ is given by $\Vmat\tilde{\avec}$.
The GFT with respect to (w.r.t.)  the graph Laplacian describes the variation in a graph signal while taking into account the underlying connectivity.
A graph signal is a graph-bandlimited signal with cutoff graph frequency $N_s$ if it satisfies \cite{8347162}
\begin{equation} \label{bandlimited_def}
    \tilde{a}_n =0,~ n =N_s+1,\dots,N,
\end{equation}
 which implies sparsity
in the signal's representation in the spectral, graph-frequency domain. Intuitively, similar to signals bandlimited in the discrete Fourier transform (DFT) domain,  a graph-bandlimited signal, satisfying \eqref{bandlimited_def}, has a small variation across neighboring nodes with small weights \cite{dabush2023verifying}. 

 Graph filters are useful tools for various GSP tasks. 
Linear and shift-invariant graph filters w.r.t. the graph shift operator (GSO) play essential roles in GSP. 
A graph filter is a function $f(\cdot)$ applied to a GSO, where here we use the Laplacian $\Lmat$ as the GSO, which allows the following eigendecomposition \cite{8347162}:
 \begin{equation} \label{laplacian_graph_filter}
  	f(\Lmat)= \Vmat f(\Lambdamat)\Vmat^T,
  \end{equation}
 where $f(\Lambdamat)$ is a diagonal matrix.
That is, $f(\lambda_n)$ is the graph frequency response of the filter at graph frequency $\lambda_n$, $n=1,\dots,N$,
and $f(\Lmat)$ is  diagonalized by the eigenvector matrix of  $\Lmat$, $\Vmat$.
We assume that the graph filter,  $f(\cdot)$, is a well-defined function on the spectrum of $\Lmat$, $\{\lambda_1,\ldots,\lambda_N\}$.

\section{MAP estimator of graph signals}
\label{MAP_objectives_sec}
In this section we formulate the MAP estimator for the problem of estimating a random graph signal by observing its noisy nonlinear function. First, the measurement model and assumptions are introduced in Subsection \ref{model_sub_sec}.
Then, we derive the MAP estimator in both the vertex and the graph-frequency domains in Subsection \ref{MAP_subsec}. In Subsection \ref{iterative_subsection}, we show the implementation of the MAP estimator by the Gauss-Newton method. 
\subsection{Model}
\label{model_sub_sec}
Consider the problem of recovering a random input graph signal, $\xvec\in\mathbb{R}^N$, based on the following nonlinear measurement model: 
\begin{equation} \label{Model}
	\yvec = \gvec(\Lmat,\xvec) + \wvec,
\end{equation}
where the measurement function, $\gvec:\mathbb{R}^{N \times N} \times{\mathbb{R}}^N \rightarrow {\mathbb{R}}^N$, and the Laplacian matrix, $\Lmat$, which represents the influence of the graph topology, are assumed to be known. 
In addition, it is assumed that 
$ \gvec(\Lmat,\xvec)$ is
continuously differentiable w.r.t. $\xvec$.
We assume that
the graph input signal,  $\xvec$, is  a Gaussian vector with mean $\muvec_\xvec$ and covariance matrix ${\Cmat}_{\xvec\xvec}$. The noise, $\wvec$,  is assumed to be a zero-mean Gaussian vector with covariance matrix $\Cmat_{\wvec\wvec}$. Finally, we assume that $\wvec$ and $\xvec$ are independent.
The nonlinear measurement model in \eqref{Model} is adequate for many applications within networked data and the Internet-of-Things (IoT) 
\cite{sahu2018communication,Sahu_Kar_Moura_Poor2016,kroizer2021bayesian}. In particular, this
model arises in PSSE, which is discussed in Section \ref{simulation}.

\subsection{MAP estimator}
\label{MAP_subsec}
In the following, we develop the MAP estimator for the measurement model in \eqref{Model}. 
The MAP estimator of $\xvec$ from $\yvec$ is given by
 \beqna
 \label{MAP_estimator1}
 \hat{\xvec}= \argmax_{{\xvec} \in \mathbb{R}^N}
  f(\xvec|\yvec)
  =\argmax_{{\xvec} \in \mathbb{R}^N}
 \log f(\yvec|\xvec)+\log f(\xvec),
\eeqna
where the second equality  is obtained by 
using Bayes’s rule,
applying the monotonically-increasing  logarithm function and removing constant terms w.r.t. $\xvec$.
By substituting  in
\eqref{MAP_estimator1} the considered model,  in which $\yvec$ given $\xvec$ is also a Gaussian vector with mean $\gvec(\Lmat,\xvec)$ and covariance ${\Cmat}_{\wvec\wvec}$, 
we obtain
 \beqna
 \label{MAP_estimator2}
 \hat{\xvec}=\argmin_{{\xvec} \in \mathbb{R}^N} Q(\xvec),
 \eeqna
where, after removing constant terms w.r.t. $\xvec$, we have
\beqna
\label{Q_def}
Q(\xvec)\define
\frac{1}{2}(\xvec-\muvec_\xvec)^{T}{\Cmat}_{\xvec\xvec}^{-1}(\xvec-\muvec_\xvec)\hspace{1.75cm}\nonumber\\ + \frac{1}{2}(\yvec-\gvec(\Lmat,\xvec))^{T}{\Cmat}_{\wvec\wvec}^{-1}(\yvec-\gvec(\Lmat,\xvec)).
\eeqna
The left term of the objective function on the r.h.s. of \eqref{Q_def} corresponds to
 the prior, and the right term corresponds to the noisy measurement model.
This objective function can be minimized by iterative algorithms to approximate the MAP estimator, as described in Section \ref{iterative_subsection}.
The objective function in \eqref{Q_def} discards information about the relationship between the graph signal and its underlying graph structure. As a result, it is less robust to perturbations of the initialization that are due to changes in the graph topology.
In addition, the MAP estimator only uses the measurement function and does not exploit additional GSP information on the graph signal, such as smoothness or graph-bandlimitness, that could be utilized to improve estimation performance.

In general, representing data in the graph-frequency domain can yield substantial data reduction, and minimize the computational requirements and memory use.
Thus, as a first step, we suggest to transform  $Q(\xvec)$ into its graph-frequency domain representation (as a function of $\tilde{\xvec}$):
\beqna
\label{Q_def_GSP}
Q_{freq}(\tilde{\xvec})\define
\frac{1}{2}(\tilde{\xvec}-\tilde{\muvec}_\xvec)^{T}{\Cmat}_{\tilde{\xvec}\tilde{\xvec}}^{-1}(\tilde{\xvec}-\tilde{\muvec}_\xvec)\hspace{1.5cm}\nonumber\\ + \frac{1}{2}(\tilde{\yvec}-\tilde{\gvec}(\Lmat,\Vmat\tilde\xvec))^{T}{\Cmat}_{\tilde\wvec\tilde\wvec}^{-1}(\tilde\yvec-\tilde\gvec(\Lmat,\Vmat\tilde\xvec)),
\eeqna
where $\tilde\yvec$, $\tilde\xvec$, $\tilde{\muvec}_\xvec$, and $\tilde\gvec$ are the GFT representations of $\yvec$, $\xvec$, ${\muvec}_\xvec$, and $\gvec$, respectively, as defined in \eqref{GFT}, such that
 $\tilde{\gvec}(\Lmat,\Vmat\tilde{\xvec})=\Vmat^T\gvec(\Lmat,\xvec)$. In addition, we use the fact that 
 \begin{equation*}
     {\Cmat}_{\tilde\xvec\tilde\xvec}\define {\rm{E}}\left[\Vmat^{T}(\xvec-\muvec_{\xvec})(\xvec-\muvec_{\xvec})^{T}\Vmat\right]=\Vmat^{T}{\Cmat}_{\xvec\xvec}\Vmat,
 \end{equation*}
 and similarly ${\Cmat}_{\tilde\wvec\tilde\wvec}=\Vmat^{T}{\Cmat}_{\wvec\wvec}\Vmat$. It can be verified that the r.h.s. of \eqref{Q_def} and the r.h.s. of \eqref{Q_def_GSP} are identical. Thus, these two objective functions will lead to the same final estimator, 
where in \eqref{Q_def} the update is in the vertex domain ($\xvec$),  and  in \eqref{Q_def_GSP} the update is in the graph-frequency domain ($\tilde{\xvec}$).
However, the efficiency and the convergence rate of the specific implementation in each domain may be different.
For example, in our simulations (Subsection \ref{simulation}), the implementation of the MAP estimator in the graph-frequency domain is much faster.


\subsection{Implementation by Gauss-Newton method}
\label{iterative_subsection}
Since $\gvec(\Lmat,\xvec)$ is a nonlinear and nonconvex function, direct minimization of the  objective functions in 
\eqref{Q_def} or \eqref{Q_def_GSP}
is intractable. 
Numerous algorithms have been proposed to minimize nonconvex objectives. Here, we  implement the 
Gauss-Newton  method, which is widely
employed to solve 
nonlinear weighted least squares (WLS) problems and to find the MAP estimator.
The Gauss-Newton method has 
a quadratic rate of convergence under suitable assumptions \cite{Blaschke1997}.

{\bf{Initialization:}} For all the algorithms described in this paper, the estimators can be initialized by 
  the prior mean,   i.e.  $\hat{\xvec}^{(0)}=\muvec_\xvec$ and $\hat{\tilde{\xvec}}^{(0)}=\tilde{\muvec}_\xvec$, where $\tilde{\muvec}_\xvec\define \Vmat^{T}\muvec_\xvec$.
  Alternatively, if training data is available or can be generated, it is possible to initialize the estimators by the LMMSE or the GSP-LMMSE \cite{kroizer2021bayesian} estimators.  
  In addition, 
 the stopping condition is attained when two successive MAP estimates, $\hat{\xvec}^{(t+1)}$
and $\hat{\xvec}^{(t)}$, are sufficiently close, where $t$ is the iteration index.

\subsubsection{Conventional MAP estimator}
\label{subsection_MAP}
 In order to locally minimize the MAP objective function in  \eqref{Q_def},
we use  the
Gauss-Newton method, which relies on Taylor's expansion to
linearize the measurement function, and iteratively updates the estimators until convergence (see Sec. 1.5.1 in \cite{BERTSEKAS_1999}). Specifically, 
 under the assumption that  the given estimator $\hat{{\xvec}}^{(t)}$ is close enough to  ${\xvec}$,
the first-order approximation of $\gvec(\Lmat,\xvec)$ is 
 \be
 \label{g_approx}
 \gvec(\Lmat,\xvec)\approx
 \gvec(\Lmat,\hat{{\xvec}}^{(t)}) + \Gmat(\Lmat,\hat{{\xvec}}^{(t)})(\xvec-\hat{{\xvec}}^{(t)}),
 \ee
 where 
\be
\label{GGG}
\Gmat(\Lmat,\hat{{\xvec}}^{(t)})\define \left.\nabla_\xvec \gvec(\Lmat,\xvec)\right|_{\xvec=\hat{{\xvec}}^{(t)}}
\ee
is the $N\times N$ Jacobian  matrix of the measurement function $\gvec(\Lmat,\xvec)$ evaluated at
$\hat{{\xvec}}^{(t)}$.
By substituting  \eqref{g_approx} in \eqref{Q_def}, we obtain the approximated linearized objective function:
\beqna
\label{Q_def_approx}
Q(\xvec)\approx Q_{lin}(\xvec,\hat{{\xvec}}^{(t)})
\hspace{4.75cm}\nonumber\\\define
\frac{1}{2}(\xvec-\muvec_\xvec)^{T}{\Cmat}_{\xvec\xvec}^{-1}(\xvec-\muvec_\xvec)\hspace{3.5cm}\nonumber\\ + \frac{1}{2}(\yvec-\gvec(\Lmat,\hat{{\xvec}}^{(t)}) - \Gmat(\Lmat,\hat{{\xvec}}^{(t)})(\xvec-\hat{{\xvec}}^{(t)}))^{T}{\Cmat}_{\wvec\wvec}^{-1}
\hspace{0.275cm}\nonumber\\\times
(\yvec-\gvec(\Lmat,\hat{{\xvec}}^{(t)}) - \Gmat(\Lmat,\hat{{\xvec}}^{(t)})(\xvec-\hat{{\xvec}}^{(t)})).\hspace{1.2cm}
\eeqna
The
Gauss-Newton method finds the next iterate $t+1$ by minimizing $ Q_{lin}(\xvec,\hat{{\xvec}}^{(t)})$ w.r.t. $\xvec$, which results in
\beqna\label{iteration_Newton2}
\hat{\xvec}^{(t+1)}=\arg\min_{\xvec\in\mathbb{R}^N} Q_{lin}(\xvec,\hat{{\xvec}}^{(t)})\hspace{3.65cm}\nonumber\\=
\hat{\xvec}^{(t)}-\alpha^{(t)}\left ({\Cmat}_{\xvec\xvec}^{-1}  + \Gmat^{T}(\Lmat,\hat{\xvec}^{(t)}){\Cmat}_{\wvec\wvec}^{-1}\Gmat(\Lmat,\hat{\xvec}^{(t)})\right)^{-1}
\nonumber\\
\times\Big({\Cmat}_{\xvec\xvec}^{-1} (\hat{\xvec}^{(t)}- \muvec_\xvec)\hspace{4.5cm}\nonumber\\
- \Gmat^{T}(\Lmat,\hat{\xvec}^{(t)}){\Cmat}_{\wvec\wvec}^{-1}(\yvec-\gvec(\Lmat,\hat{\xvec}^{(t)}))\Big),\hspace{1.85cm}
\eeqna
with $\alpha^{(t)}=1$. 

In practice, wisely setting the step size,
$\alpha^{(t)}\in(0,1]$, can improve the convergence rate \cite{Bell_Cathey1993}. 
In this paper, we compute the step size by a backtracking 
line search (Algorithm \ref{line_search_stages}). In this strategy, we iteratively reduce the step size, $\alpha^{(t)}$, until $\hat{{\xvec}}^{(t+1)}$ from \eqref{iteration_Newton2} with the tested step size that satisfies 
\begin{equation}\label{line_search1}
Q(\hat{{\xvec}}^{(t)})-Q(\hat{{\xvec}}^{(t+1)})>\Delta |Q(\hat{{\xvec}}^{(t)})|,
\end{equation}
where $\Delta\in\mathbb{R}$ satisfies $\Delta<<1$.
\vspace{-0.25cm}
\begin{algorithm}[hbt]
\SetAlgoLined
	\KwInput{ 
	\begin{itemize}
	\item Current estimator $\hat{\xvec}^{(t)}$
		\item Initial step size $\alpha_{0}\in (0,1]$, tuning parameter $\gamma\in(0,1)$ 
		\item Objective function $Q$
	\end{itemize}}
	{\textbf{Algorithm Steps:}}
	\begin{enumerate}
		\item Compute: $Q(\hat{{\xvec}}^{(t)})$
    	\item 
		\For{$k=0,1,\ldots K_{\max}$}{
		\begin{itemize} \item 
		Compute
		${\hat{\xvec}}^{(t+1)}$ 
		with  $\alpha^{(t)}=\alpha_{k}$
		\item Compute $Q(\hat{{\xvec}}^{(t+1)})$
		\end{itemize}
\If{\eqref{line_search1} does not hold}
{\textcolor{black}{$\alpha_{k+1}=\left\{ \begin{array}{lr} \gamma\alpha_{k} & k< K_{\max}\\
0 & k= K_{\max}\end{array}\right.$}
}\Else{\text{break}}
}
	\end{enumerate}
\KwOutput{ Step size: $\alpha^{(t)} =\alpha_k$ }
	\caption{Backtracking line search}
	\label{line_search_stages}
\end{algorithm}

\subsubsection{MAP estimator in the graph-frequency domain} \label{GSP_estimator_section}
In this subsection,
we evaluate the graph-frequency-domain update of the MAP
estimator.
Similar to the derivation of \eqref{iteration_Newton2}, the minimization of
$Q_{freq}(\tilde{\xvec})$ from \eqref{Q_def_GSP} w.r.t. $\tilde{\xvec}$ by the  Gauss-Newton  method  results in the following update equation:
\beqna\label{iteration_Newton2_GSP}
\hat{\tilde{\xvec}}^{(t+1)}=
\hat{\tilde{\xvec}}^{(t)}\hspace{6.25cm}\nonumber\\-\alpha^{(t)}
\left ({\Cmat}_{\tilde{\xvec}\tilde{\xvec}}^{-1}  + \tilde{\Gmat}^{T}(\Lmat,\Vmat\hat{\tilde{\xvec}}^{(t)}){\Cmat}_{\tilde\wvec\tilde\wvec}^{-1}\tilde{\Gmat}(\Lmat,\Vmat\hat{\tilde{\xvec}}^{(t)})\right)^{-1}
\nonumber\\\times
 \left({\Cmat}_{\tilde{\xvec}\tilde{\xvec}}^{-1} (\hat{\tilde{\xvec}}^{(t)}- \tilde{\muvec}_\xvec)\hspace{4.5cm} \right.
\nonumber\\
\left.- \tilde\Gmat^{T}(\Lmat,\Vmat\hat{\tilde\xvec}^{(t)}){\Cmat}_{\tilde\wvec\tilde\wvec}^{-1}(\tilde\yvec-\tilde\gvec(\Lmat,\Vmat\hat{\tilde\xvec}^{(t)}))\right),\hspace{1.45cm}
\eeqna
where
\be
\label{GGG_tilde}
\tilde\Gmat(\Lmat,\Vmat\tilde\xvec)\define \nabla_{\tilde{\xvec}} \tilde\gvec(\Lmat,\Vmat\tilde{\xvec}) = \Vmat^{T}\Gmat(\Lmat,\xvec)\Vmat.
\ee
It can be verified that by multiplying \eqref{iteration_Newton2} by $\Vmat^T$ from the left, we obtain the iteration in the graph-frequency domain in \eqref{iteration_Newton2_GSP}.
In addition, it is known that the convergence of the Gauss-Newton method is invariant under affine transformations of the domain \cite{deuflhard1979affine}.
An advantage of the update in \eqref{iteration_Newton2_GSP} compared with \eqref{iteration_Newton2}  appears for cases where  $\xvec$ is a
graph bandlimited signal with a cutoff graph frequency $N_s$. In this case,   we substitute $[\hat{\tilde{\xvec}}^{(t)}]_n=0$, $\forall n>N_s$ and  update only 
the first $N_s$ elements of $\hat{\tilde{\xvec}}^{(t+1)}$ in
\eqref{iteration_Newton2_GSP} at each step.
In terms of computational complexity, implementing the MAP estimator by  \eqref{iteration_Newton2} and \eqref{iteration_Newton2_GSP} requires the computation of the inverse of $N\times N$ matrices $ {\Cmat}_{\xvec\xvec}^{-1}  + \Gmat^{T}(\Lmat,\hat{\xvec}^{(t)}){\Cmat}_{\wvec\wvec}^{-1}\Gmat(\Lmat,\hat{\xvec}^{(t)})$ in \eqref{iteration_Newton2},
or ${\Cmat}_{\tilde{\xvec}\tilde{\xvec}}^{-1}  + \tilde{\Gmat}^{T}(\Lmat,\Vmat\hat{\tilde{\xvec}}^{(t)}){\Cmat}_{\tilde\wvec\tilde\wvec}^{-1}\tilde{\Gmat}(\Lmat,\Vmat\hat{\tilde{\xvec}}^{(t)})$ in \eqref{iteration_Newton2_GSP} in addition to multiplications of $N\times N$ matrices
at each iteration, which leads to high computational complexity.

The MAP-estimator in the graph-frequency domain algorithm is summarized in Algorithm \ref{algorithm_MAP}.
\begin{algorithm}[hbt]
\SetAlgoLined
\KwInput{ 
	\begin{itemize}
		\item Function $\tilde\gvec(\Lmat,\Vmat\tilde\xvec)$
		\item Mean and covariance matrices $\tilde\muvec_{{\xvec}}$, ${\Cmat}_{{\tilde\xvec}{\tilde\xvec}}$, and ${\Cmat}_{\tilde\wvec\tilde\wvec}$
		\item Initial step size $\alpha_{0}$, $\gamma$ and $\Delta$
		\item Tolerance $\delta$ 
	\end{itemize}}
	{\textbf{Algorithm Steps:}}
	\begin{enumerate}
		\setlength\itemsep{0.2em}
		\item Initialization: ${\hat{\tilde\xvec}}^{(0)}$
		\item Compute: $\tilde\gvec(\Lmat,\Vmat\hat{\tilde\xvec}^{(t)})$ and $\tilde\Gmat(\Lmat,\Vmat\hat{\tilde\xvec}^{(t)})$ from \eqref{GGG_tilde}
		\item
		Choose step size $\alpha^{(t)}$ such that \eqref{line_search1} holds using Algorithm \ref{line_search_stages} with the objective function $Q$ from \eqref{Q_def_GSP}
		\item 	Compute
		${{\xvec}}^{(t+1)}$ from \eqref{iteration_Newton2_GSP} with step size $\alpha^{(t)}$
		\item Stopping condition: $||\hat{\tilde\xvec}^{(t+1)}-\hat{\tilde\xvec}^{(t)}||<\delta$ 
	\end{enumerate}
\KwOutput{MAP estimator:
	$\hat{\xvec}  = \Vmat{\tilde{\xvec}}^{(t+1)}$}
	\caption{MAP estimator in the graph-frequency domain by Gauss-Newton method}
	\label{algorithm_MAP}
\end{algorithm}

Due to the nonconvexity of $\gvec(\Lmat,\xvec)$ and the quadratic loss function, the Gauss-Newton method is sensitive to initialization and may diverge. 
These challenges inhibit its use for real-time estimation in large-scale networks. 
Moreover, these methods do not utilize the information about the underlying graph structure, e.g. for the initialization. 
Thus, they are less robust to changes and misspecification in the graph topology, e.g. in the graph connectivity. 
Finally, the estimators in \eqref{iteration_Newton2} and \eqref{iteration_Newton2_GSP} ignore the GSP properties and do not exploit additional information on the graph signal, such as smoothness or graph-bandlimitness, that could improve estimation performance.




\section{eGFD-MAP and GSP-MAP estimators}\label{eGFD_MAP_and_GSP_MAP}
In this section, we propose three new estimators that integrate the graph structure: the eGFD-MAP estimator in Subsection \ref{eGFD-MAP_EST}, and the sGSP-MAP and GSP-MAP estimators in Subsection \ref{MAP_GSP_section}.
In Subsection \ref{Disscussion2} we compare the estimators, focusing on their associated objective functions.
In Subsection \ref{separately_model_section}, we present special cases and discuss the conditions under which the proposed eGFD-MAP, sGSP-MAP, and GSP-MAP estimators coincide with the MAP estimator. Finally, in Subsection \ref{computational_complexity} we compare the computational complexity of the different estimators.

\subsection{eGFD-MAP estimator}\label{eGFD-MAP_EST}
In the nonlinear case, numerical optimization methods are used to minimize \eqref{Q_def}   or  \eqref{Q_def_GSP}.
 However, when $N$ is large, the minimization problem is high-dimensional, with high computational complexity and memory demands. 
In order to reduce the complexity and
accelerate the convergence rate of the iterative optimization algorithms, we
 propose here the eGFD-MAP estimator, which is comprised of two steps.
 In the first step, we replace the MAP objective function from \eqref{Q_def_GSP} by the following objective function
  in the graph-frequency domain:
\beqna
\label{Q_def_GSP_incremental}
Q_{freq}^{(d)}(\tilde{\xvec})\define
\frac{1}{2}(\tilde{\xvec}-\tilde{\muvec}_\xvec)^{T}{\Dmat}_{\tilde{\xvec}\tilde{\xvec}}^{(\text{inv})}(\tilde{\xvec}-\tilde{\muvec}_\xvec)\hspace{2.25cm}\nonumber\\ + \frac{1}{2}(\tilde{\yvec}-\tilde\gvec(\Lmat,\Vmat\tilde\xvec))^{T}{\Dmat}_{\tilde\wvec\tilde\wvec}^{(\text{inv})}(\tilde\yvec-\tilde\gvec(\Lmat,\Vmat\tilde\xvec))\nonumber\\
=\frac{1}{2}\sum_{n=1}^N
(\tilde{x}_n-[\tilde{\mu}_\xvec]_n)^2[{\Cmat}_{\tilde{\xvec}\tilde{\xvec}}^{-1}]_{n,n}\hspace{2.1cm}\nonumber\\ + \frac{1}{2}\sum_{n=1}^N(\tilde{y}_n-
[\tilde{\gvec}(\Lmat,\Vmat\tilde{\xvec})]_n
)^2[{\Cmat}_{\tilde\wvec\tilde\wvec}^{-1}]_{n,n},\hspace{0.5cm}
\eeqna
where 
\be
\label{inv_Dx_define}
\Dmat_{\tilde{\xvec}\tilde{\xvec}}^{(\text{inv})}\define {\text{ddiag}}({\Cmat}_{\tilde{\xvec}\tilde{\xvec}}^{-1})
\ee
and 
\be 
\label{inv_Dw_define}
\Dmat_{\tilde{\wvec}\tilde{\wvec}}^{(\text{inv})}\define {\text{ddiag}}({\Cmat}_{\tilde{\wvec}\tilde{\wvec}}^{-1}).
\ee

Similar to the derivations of
\eqref{iteration_Newton2} and \eqref{iteration_Newton2_GSP}, the approximated linearized objective function of $Q_{freq}^{(d)}(\tilde\xvec)$ is given by
\beqna
\label{Q_def_GSP_lin_incremental}
Q_{freq-lin}^{(d)}(\tilde{\xvec})\define
\frac{1}{2}(\tilde{\xvec}-\tilde{\muvec}_\xvec)^{T}{\Dmat}_{\tilde{\xvec}\tilde{\xvec}}^{(\text{inv})}(\tilde{\xvec}-\tilde{\muvec}_\xvec)\hspace{1.6cm}\nonumber\\ + \frac{1}{2}(\tilde{\yvec}-\tilde\gvec(\Lmat,\Vmat\hat{{\tilde\xvec}}^{(t)}) - \tilde\Gmat(\Lmat,\Vmat\hat{{\tilde\xvec}}^{(t)})(\tilde\xvec-\hat{{\tilde\xvec}}^{(t)}))^{T}{\Dmat}_{\tilde\wvec\tilde\wvec}^{(\text{inv})}\nonumber\\\times(\tilde{\yvec}-\tilde\gvec(\Lmat,\Vmat\hat{{\tilde\xvec}}^{(t)}) - \tilde\Gmat(\Lmat,\Vmat\hat{{\tilde\xvec}}^{(t)})(\tilde\xvec-\hat{{\tilde\xvec}}^{(t)})).
\eeqna
 The minimization of $Q_{freq-lin}^{(d)}(\tilde\xvec)$ w.r.t. $\tilde\xvec$ by the Gauss-Newton method results in the following update equation:
\beqna\label{27_old25_before_approx}
\hat{\tilde{\xvec}}^{(t+1)}
=\hat{\tilde{\xvec}}^{(t)}\hspace{6.25cm}\nonumber\\-\alpha^{(t)}\bigg({\Dmat}_{\tilde{\xvec}\tilde{\xvec}}^{(\text{inv})} + \tilde{\Gmat}(\Lmat,\Vmat\hat{\tilde{\xvec}}^{(t)})^{T}{\Dmat}_{\tilde{\wvec}\tilde{\wvec}}^{(\text{inv})}\tilde{\Gmat}(\Lmat,\Vmat\hat{\tilde{\xvec}}^{(t)})\bigg)^{-1}\nonumber\\\times\bigg(\Dmat_{\tilde{\xvec}\tilde{\xvec}}^{(\text{inv})}(\hat{\tilde{\xvec}}^{(t)}- \tilde{\muvec}_\xvec)\hspace{4.75cm}
\nonumber\\
-\tilde{\Gmat}(\Lmat,\Vmat\hat{\tilde{\xvec}}^{(t)})^{T}\Dmat_{\tilde\wvec\tilde\wvec}^{(\text{inv})}(\tilde\yvec-\tilde\gvec(\Lmat,\Vmat\hat{\tilde\xvec}^{(t)}))\bigg).\hspace{1.5cm}
\eeqna


In Subsection \ref{separately_model_section}, we present the orthogonal-graph-frequencies case, in which the estimator from \eqref{27_old25_before_approx} that minimizes the objective function in \eqref{Q_def_GSP_lin_incremental} is separable in the graph-frequency domain and can be implemented with per-coordinate iterations. Thus, in this case, the estimator from \eqref{27_old25_before_approx} has a lower computational complexity than the MAP estimator. 
However,  a major problem in the general case is that
the matrix  $\tilde\Gmat(\Lmat,\Vmat\hat{\tilde\xvec}^{(t)})$ in \eqref{27_old25_before_approx}  is a full matrix that changes at each iteration.
To bypass this hurdle,  we use the Gauss-Newton iteration in \eqref{27_old25_before_approx} with an additional step of neglecting the 
off-diagonal elements of 
$\tilde\Gmat(\Lmat,\Vmat\hat{\tilde\xvec}^{(t)})$.
This approach results in a separable form of the estimator in the graph-frequency domain for the general non-orthogonal case.


In the second step, at each iteration of \eqref{27_old25_before_approx} we neglect the non-diagonal elements of the Jacobian matrix, $\tilde{\Gmat}(\Lmat,\Vmat\hat{\tilde{\xvec}}^{(t)})$, which involves the mixed-derivatives of 
 $\tilde\gvec(\Lmat,\Vmat\tilde\xvec)$.  This results in the following iteration:
\beqna\label{27_old25}
\hat{\tilde{\xvec}}^{(t+1)}
=\hat{\tilde{\xvec}}^{(t)}-\alpha^{(t)}\bigg({\Dmat}_{\tilde{\xvec}\tilde{\xvec}}^{(\text{inv})} + {\Dmat}_{\tilde{\wvec}\tilde{\wvec}}^{(\text{inv})}\bar\Dmat_{\tilde{\Gmat}}(\hat{\tilde{\xvec}}^{(t)})^2\bigg)^{-1}\nonumber\\\times\bigg(\Dmat_{\tilde{\xvec}\tilde{\xvec}}^{(\text{inv})}(\hat{\tilde{\xvec}}^{(t)}- \tilde{\muvec}_\xvec)
\hspace{3.25cm}
\nonumber\\
-\Dmat_{\tilde\wvec\tilde\wvec}^{(\text{inv})}\bar\Dmat_{\tilde{\Gmat}}(\hat{\tilde{\xvec}}^{(t)})(\tilde\yvec-\tilde\gvec(\Lmat,\Vmat\hat{\tilde\xvec}^{(t)}))\bigg),\hspace{0.6cm} \eeqna
where \[\bar\Dmat_{\tilde{\Gmat}}(\tilde\xvec)=\text{ddiag}(\tilde{\Gmat}(\Lmat,\Vmat\tilde\xvec)).\]
It can be seen that
 the estimator in \eqref{27_old25}
is the estimator that minimizes the objective function from \eqref{Q_def_GSP_lin_incremental} after replacing $\tilde{\Gmat}(\Lmat,\Vmat\tilde\xvec)$ with $\bar\Dmat_{\tilde{\Gmat}}(\tilde\xvec)$, i.e. the following objective function:
\beqna
\label{Q_def_GSP_lin_incremental_approx}
Q_{freq-lin}^{(d, approx)}(\tilde{\xvec})\define
\frac{1}{2}(\tilde{\xvec}-\tilde{\muvec}_\xvec)^{T}{\Dmat}_{\tilde{\xvec}\tilde{\xvec}}^{(\text{inv})}(\tilde{\xvec}-\tilde{\muvec}_\xvec)\hspace{1.6cm}\nonumber\\ + \frac{1}{2}(\tilde{\yvec}-\tilde\gvec(\Lmat,\Vmat\hat{{\tilde\xvec}}^{(t)}) - \bar\Dmat_{\tilde{\Gmat}}(\hat{\tilde{\xvec}}^{(t)})(\tilde\xvec-\hat{{\tilde\xvec}}^{(t)}))^{T}{\Dmat}_{\tilde\wvec\tilde\wvec}^{(\text{inv})}\nonumber\\\times(\tilde{\yvec}-\tilde\gvec(\Lmat,\Vmat\hat{{\tilde\xvec}}^{(t)}) - \bar\Dmat_{\tilde{\Gmat}}(\hat{\tilde{\xvec}}^{(t)})(\tilde\xvec-\hat{{\tilde\xvec}}^{(t)})).\hspace{1.25cm}
\eeqna
The estimator in \eqref{27_old25} is denoted as the eGFD-MAP estimator.

Gradient-based iterative optimization
algorithms typically converge slowly for large-scale or poorly-conditioned inverse problems, such as the problem of finding the MAP estimator. 
Preconditioning is a technique that significantly improves the convergence rate by transforming large matrices with alternative matrices that are easy to invert 
\cite{Preconditioning_Clinthorne_1993,hanke1992preconditioned}. A typically employed strategy is to use diagonal approximation as a preconditioning matrix. In this context, the eGFD-MAP estimator can be viewed as diagonal preconditioning applied to the Gauss-Newton iteration of the MAP estimator in the graph frequency domain.

The advantages of the eGFD-MAP estimator compared with the conventional MAP iterative methods from Subsection \ref{iterative_subsection} are that: 1) there is no need to calculate the off-diagonal elements of the Jacobian matrix $\tilde\Gmat(\Lmat,\Vmat\tilde\xvec)$ at each iteration; and 2) there is no need to perform matrix inversion per iteration.
Indeed, in order to compute $\Dmat_{\tilde{\xvec}\tilde{\xvec}}^{(\text{inv})}$ and $\Dmat_{\tilde{\wvec}\tilde{\wvec}}^{(\text{inv})}$ it is necessary to compute the inverse of the prior covariance matrices ${\Cmat}_{\tilde{\xvec}\tilde{\xvec}}$ and ${\Cmat}_{\tilde{\wvec}\tilde{\wvec}}$. However, this calculation does not change at each iteration, and thus can be done offline where ${\Cmat}_{\tilde{\xvec}\tilde{\xvec}}$ and ${\Cmat}_{\tilde{\wvec}\tilde{\wvec}}$ are assumed to be known.
In particular, assuming that the covariance matrices and the Jacobian matrix are given, performing a single iteration of the MAP update equation according to \eqref{iteration_Newton2} or \eqref{iteration_Newton2_GSP} requires $\mathcal{O}(N^{3})$ calculations;  this is since, in general, it is necessary to perform matrix inversion, while the update step of the eGFD-MAP estimator from \eqref{27_old25} requires only $\mathcal{O}(N)$ calculations.  
Due to these two advantages, and the fact that \eqref{27_old25} can be implemented in a component-wise  fashion,
the eGFD-MAP estimator in \eqref{27_old25}
can result in a significant overall speedup of computation and can be used even when the size of the network is large; this is in contrast with the MAP estimator, which becomes intractable for large networks. 
In addition, in cases where the matrices $ {\Cmat}_{\xvec\xvec}^{-1}  + \Gmat^{T}(\Lmat,\hat{\xvec}^{(t)}){\Cmat}_{\wvec\wvec}^{-1}\Gmat(\Lmat,\hat{\xvec}^{(t)})$ in \eqref{iteration_Newton2}
or ${\Cmat}_{\tilde{\xvec}\tilde{\xvec}}^{-1}  + \tilde{\Gmat}^{T}(\Lmat,\Vmat\hat{\tilde{\xvec}}^{(t)}){\Cmat}_{\tilde\wvec\tilde\wvec}^{-1}\tilde{\Gmat}(\Lmat,\Vmat\hat{\tilde{\xvec}}^{(t)})$ in \eqref{iteration_Newton2_GSP}, are ill-conditioned (as may happen when the sample covariance matrices are used instead of the true covariances), the calculation of their inverse is prone to large numerical errors. This can badly affect the performance of the MAP estimator, in contrast to the performance of the eGFD-MAP estimator that does not require matrix inversion.
The eGFD-MAP-estimator algorithm is summarized in Algorithm \ref{algorithm_general}.
\vspace{-0.25cm}
\begin{algorithm}[hbt]
\SetAlgoLined
	\textbf{Input:}	
	\begin{itemize}
		\item Function $\tilde\gvec(\Lmat,\Vmat\tilde\xvec)$
		\item Laplacian matrix $\Lmat$
		\item Initial step size $\alpha^{(0)}$, $\gamma$ and $\Delta$
		\item Mean and the diagonal entries of the inverse covariance matrices: $\tilde\muvec_{\xvec}$, ${\Dmat}_{\tilde{\xvec}\tilde{\xvec}}^{(\text{inv})}$, and ${\Dmat}_{\tilde{\wvec}\tilde{\wvec}}^{(\text{inv})}$
		\item Tolerance $\delta$
	\end{itemize}
	{\textbf{Algorithm Steps:}}
	\begin{enumerate}
		\setlength\itemsep{0.2em}
		\item Initialization: ${\hat{\tilde{\xvec}}}^{(0)}$
		\item Compute 
		$\tilde\gvec(\Lmat,\Vmat\hat{\tilde\xvec}^{(t)})$ and 
		$\bar\Dmat_{\tilde{\Gmat}}(\hat{\tilde{\xvec}}^{(t)})$ from \eqref{GGG_tilde}
		\item Choose step size $\alpha^{(t)}$ according to Algorithm \ref{line_search_stages} with the objective function $Q_{freq}^{(d)}$ from \eqref{Q_def_GSP_incremental}
		\item\label{Step4} Apply the iteration step according to
		\eqref{27_old25}
		\item Stopping condition: $||{\hat{\tilde\xvec}}^{(t+1)}-{\hat{\tilde\xvec}}^{(t)}||<\delta$ 
	\end{enumerate}
	\vspace{1mm}
	\KwOutput{eGFD-MAP estimator
	$\hat{\xvec}  = \Vmat{\hat{\tilde{\xvec}}}^{(t+1)}$}
	\caption{eGFD-MAP estimator by Gauss-Newton method}
	\label{algorithm_general}

\end{algorithm}

 \vspace{-0.5cm}
\subsection{sGSP-MAP and GSP-MAP estimators}
\label{MAP_GSP_section}
While the eGFD-MAP estimator has a lower computational cost, this advantage comes at the expense of neglecting the off-diagonal elements of ${\Cmat}_{\tilde{\xvec}\tilde{\xvec}}^{-1}$, ${\Cmat}_{\tilde{\wvec}\tilde{\wvec}}^{-1}$, and $
\tilde{\Gmat}(\Lmat,\Vmat\tilde\xvec)$.
Consequently, it is expected to have good performance 
in cases where the elements of the signals at the different graph frequencies are (almost) uncorrelated. 
  However, this may not be the case, and in this subsection we propose an alternative approach that does not assume a lack of correlation between the elements of the signals in the graph-frequency domain, but still utilizes the GSP properties. 
As shown in Subsection \ref{iterative_subsection}, the update equation of the MAP estimator is obtained by solving a
linearized WLS problem at each iteration.
As a result,   the update equations under the three objective functions in \eqref{iteration_Newton2}, \eqref{iteration_Newton2_GSP}, and \eqref{27_old25} are all linear functions of $\yvec-\gvec(\Lmat,\hat{\xvec}^{(t)})$ and of $\hat{\xvec}^{(t)}- \muvec_\xvec$.
Based on this representation, 
in the following we remain
with the linearized WLS problem,
but constrain the estimator at each iteration $t$ to be the output of two graph filters. Based on this representation, we now propose the sGSP-MAP and GSP-MAP estimators.

In the following,  
we consider that at the $t$th iteration, the estimator has the form:
\beqna \label{opt_estimator}
	\hat{\xvec}^{(t+1)} =\hat{\xvec}^{(t)}
	+ f_1(\Lmat,\hat{\xvec}^{(t)})(\hat{{\xvec}}^{(t)}- {\muvec}_\xvec)\nonumber\\
	+f_2(\Lmat,\hat{\xvec}^{(t)}) (\yvec-\gvec(\Lmat,\hat{\xvec}^{(t)}))  ,
\eeqna
where
 $f_i(\cdot,\cdot)$, $i=1,2$, are graph filters as defined in \eqref{laplacian_graph_filter}.
By left-multiplying \eqref{opt_estimator}
 by $\Vmat^T$, we obtain that the estimator from \eqref{opt_estimator} can be written in the graph frequency domain  as 
\beqna \label{opt_estimator_graph}
	\hat{\tilde{\xvec}}^{(t+1)}  
	= \hat{\tilde{\xvec}}^{(t)}+  f_1(\Lambdamat,\Vmat\hat{\tilde{\xvec}}^{(t)} )(\hat{\tilde{\xvec}}^{(t)}- \tilde{\muvec}_\xvec)\nonumber\\
	+f_2(\Lambdamat,\Vmat\hat{\tilde{\xvec}}^{(t)} )(\tilde\yvec-\tilde\gvec(\Lmat,\Vmat\hat{\tilde\xvec}^{(t)})),
\eeqna
where $\Vmat$ and $\Lambdamat$ are 
defined in \eqref{SVD_new_eq}.

In this form of estimators, the terms $\hat{\tilde{\xvec}}^{(t)}- \tilde{\muvec}_\xvec$ and $\tilde\yvec-\tilde\gvec(\Lmat,\Vmat\hat{\tilde\xvec}^{(t)})$ are multiplied by {\em{diagonal}} matrices that represent the graph filters.
It can be seen that in the general case,
 the estimators in \eqref{iteration_Newton2} and  \eqref{iteration_Newton2_GSP} cannot be written in the form of \eqref{opt_estimator} or \eqref{opt_estimator_graph}. 
 In contrast, 
the iteration  of the eGFD-MAP estimator from \eqref{27_old25} can be written as the output of graph filters, as described in
\eqref{opt_estimator_graph}, where  $f_1(\cdot,\cdot)$ and $f_2(\cdot,\cdot)$ are the following graph filters:
\beqna
\label{f1_def}
f_1^{\text{eGFD-MAP}}(\Lambdamat,\Vmat\hat{\tilde{\xvec}}^{(t)})\hspace{4cm}\nonumber\\=-
 \alpha^{(t)}\bigg({\Dmat}_{\tilde{\xvec}\tilde{\xvec}}^{(\text{inv})} + {\Dmat}_{\tilde{\wvec}\tilde{\wvec}}^{(\text{inv})}
\bar\Dmat_{\tilde{\Gmat}}^2(\hat{\tilde{\xvec}}^{(t)})\bigg)^{-1} 
\Dmat_{\tilde{\xvec}\tilde{\xvec}}^{(\text{inv})} 
\eeqna
and
\beqna
\label{f2_def}
f_2^{\text{eGFD-MAP}}(\Lambdamat,\Vmat\hat{\tilde{\xvec}}^{(t)})=
\hspace{4cm}\nonumber\\
 \alpha^{(t)}\bigg({\Dmat}_{\tilde{\xvec}\tilde{\xvec}}^{(\text{inv})} + {\Dmat}_{\tilde{\wvec}\tilde{\wvec}}^{(\text{inv})}
\bar\Dmat_{\tilde{\Gmat}}^2(\hat{\tilde{\xvec}}^{(t)})\bigg)^{-1}
 \Dmat_{\tilde\wvec\tilde\wvec}^{(\text{inv})}
\bar\Dmat_{\tilde{\Gmat}}(\hat{\tilde{\xvec}}^{(t)}).
\eeqna

In the following, our goal is to choose the graph filters
$f_1(\cdot,\cdot)$ and $f_2(\cdot,\cdot)$ in the updated equation in \eqref{opt_estimator} (or in \eqref{opt_estimator_graph}) in an optimal way, in the sense that the expected objective function $ Q_{lin}$ from \eqref{Q_def_approx} is minimized for the general case.
 It should be noted that the graph filters can be a function of the previous-iteration estimator, $\hat{{\xvec}}^{(t)}$.
 The following theorem describes the optimal graph filters, in the sense of minimizing  $ Q_{lin}$.
\textcolor{black}{
To this end, by considering the model in \eqref{Model}, i.e. $\yvec = \gvec(\Lmat,\xvec) + \wvec$ with uncorrelated signal and noise,  we define  the sample covariance matrices in the graph-frequency domain that are based on the single samples $\hat{\tilde{\xvec}}^{(t)}$ and $\tilde{\yvec}$ as follows:
\be
\label{Smat_x}
\Smat_{\hat{\tilde{\xvec}}\hat{\tilde{\xvec}}}^{(t)}\define 
(\hat{\tilde{\xvec}}^{(t)}-\tilde\muvec_{\xvec})(\hat{\tilde{\xvec}}^{(t)}-\tilde\muvec_{\xvec})^{T},
\ee
\be
\label{Smat_y}
\Smat_{\tilde{\wvec}\tilde{\wvec}}^{(t)}\define 
({\tilde{\yvec}}-\tilde\gvec(\Lmat, \Vmat\hat{\tilde{\xvec}}^{(t)}))({\tilde{\yvec}}-\tilde\gvec(\Lmat, \Vmat\hat{\tilde{\xvec}}^{(t)})^{T},
\ee
and
\be
\label{Smat_yx}
\Smat_{\tilde{\wvec}\hat{\tilde{\xvec}}}{^{(t)}}
\define ({\tilde{\yvec}}-\tilde\gvec(\Lmat, \Vmat\hat{\tilde{\xvec}}^{(t)}))(\hat{\tilde{\xvec}}^{(t)}-\tilde\muvec_{\xvec})^{T}.
\ee
\begin{Theorem}
\label{claim2}
The graph filters that minimize the objective function
$Q_{lin}(\hat{\xvec}^{(t+1)},\hat{{\xvec}}^{(t)})$ from \eqref{Q_def_approx}
over the subset of GSP estimators defined in \eqref{opt_estimator} under the assumption that $\Smat_{\tilde{\wvec}\hat{\tilde{\xvec}}}^{(t)}$ is negligible are
\beqna
\label{filter1_noE}
f_1^{\text{sGSP-MAP}}(\Lambdamat,\hat{\xvec}^{(t)} ) \nonumber = -
{\emph{diag}}\bigg(\Big(\Smat_{\hat{\tilde{\xvec}}\hat{\tilde{\xvec}}}^{(t)}
\circ ({\Cmat}_{\tilde\xvec\tilde\xvec}^{-1}\hspace{1.5cm} \\+ \tilde\Gmat^T(\Lmat,\Vmat\hat{\tilde{\xvec}}^{(t)}) 
{\Cmat}_{\tilde\wvec\tilde\wvec}^{-1}
\tilde\Gmat(\Lmat,\Vmat\hat{\tilde{\xvec}}^{(t)}))\Big)^{-1}\nonumber\\
\times   {\emph{diag}}\big(\Smat_{\hat{\tilde{\xvec}}\hat{\tilde{\xvec}}}^{(t)}{\Cmat}_{\tilde\xvec\tilde\xvec}^{-1} \big)\bigg)\hspace{2.8cm} 
\eeqna
and
\beqna
\label{filter2_noE}
f_2^{\text{sGSP-MAP}}(\Lambdamat,\hat{\xvec}^{(t)} )=
{\emph{diag}}\bigg(\Big(\Smat_{\tilde{\wvec}\tilde{\wvec}}^{(t)}\nonumber\circ \big({\Cmat}_{\tilde\xvec\tilde\xvec}^{-1} \hspace{1.5cm} \\ \nonumber + \tilde\Gmat^T(\Lmat,\Vmat\hat{\tilde{\xvec}}^{(t)}) 
{\Cmat}_{\tilde\wvec\tilde\wvec}^{-1}
\tilde\Gmat(\Lmat,\Vmat\hat{\tilde{\xvec}}^{(t)})\Big)^{-1}
\nonumber\\
\times
{ \emph{diag}}\big(  \Smat_{\tilde{\wvec}\tilde{\wvec}}^{(t)}{\Cmat}_{\tilde\wvec\tilde\wvec}^{-1}
\tilde\Gmat(\Lmat,\Vmat\hat{\tilde{\xvec}}^{(t)})\big) \bigg) \hspace{0.5cm} 
\eeqna
at the $(t+1)$th iteration.
\end{Theorem}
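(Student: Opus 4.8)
The plan is to work entirely in the graph-frequency domain and to exploit the fact that a graph filter acts there as a \emph{diagonal} matrix, so that optimizing over the filters reduces to choosing two frequency-response vectors. First I would introduce the shorthand residuals $\rvec \define \hat{\tilde\xvec}^{(t)} - \tilde\muvec_\xvec$ and $\evec \define \tilde\yvec - \tilde\gvec(\Lmat,\Vmat\hat{\tilde\xvec}^{(t)})$, and denote by $\fvec_1,\fvec_2\in\mathbb{R}^N$ the diagonals (frequency responses) of $f_1(\Lambdamat,\cdot)$ and $f_2(\Lambdamat,\cdot)$. The elementary identity that drives everything is that a diagonal filter satisfies $f_i\,\vvec = \text{diag}(\vvec)\fvec_i$. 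Substituting the GSP form \eqref{opt_estimator_graph} into $Q_{lin}$ from \eqref{Q_def_approx}, expressed in graph-frequency coordinates, therefore turns the constrained problem over filters into an \emph{unconstrained} quadratic program over the stacked vector $\fvec\define[\fvec_1^T,\fvec_2^T]^T$. Writing $\Mmat\define[\,\text{diag}(\rvec),\ \text{diag}(\evec)\,]$ and $\tilde\Gmat\define\tilde\Gmat(\Lmat,\Vmat\hat{\tilde\xvec}^{(t)})$, the two residuals entering $Q_{lin}$ become $\rvec+\Mmat\fvec$ and $\evec-\tilde\Gmat\Mmat\fvec$.

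Next I would set the gradient with respect to $\fvec$ to zero, which yields the normal equations $\Mmat^T\Kmat\Mmat\,\fvec = -\Mmat^T\Cmat_{\tilde\xvec\tilde\xvec}^{-1}\rvec + \Mmat^T\tilde\Gmat^T\Cmat_{\tilde\wvec\tilde\wvec}^{-1}\evec$, with $\Kmat\define\Cmat_{\tilde\xvec\tilde\xvec}^{-1}+\tilde\Gmat^T\Cmat_{\tilde\wvec\tilde\wvec}^{-1}\tilde\Gmat$. Since $\Kmat\succ0$, the objective is convex in $\fvec$ and this stationary point is the global minimizer, so no second-order check beyond positive-definiteness is required. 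The crux is then to translate this $2N\times2N$ system into the sample-covariance form of the theorem using two identities: $\text{diag}(\uvec)\,\Amat\,\text{diag}(\vvec)=(\uvec\vvec^T)\circ\Amat$ for the coefficient blocks, and $\text{diag}(\uvec)\,\Amat\,\vvec=\text{diag}(\Amat\vvec\uvec^T)$ for the right-hand-side blocks. Applying the first identity, the diagonal blocks of $\Mmat^T\Kmat\Mmat$ become $\Smat_{\hat{\tilde\xvec}\hat{\tilde\xvec}}^{(t)}\circ\Kmat=(\rvec\rvec^T)\circ\Kmat$ and $\Smat_{\tilde\wvec\tilde\wvec}^{(t)}\circ\Kmat=(\evec\evec^T)\circ\Kmat$, while the two off-diagonal blocks are $(\rvec\evec^T)\circ\Kmat$ and $\Smat_{\tilde\wvec\hat{\tilde\xvec}}^{(t)}\circ\Kmat=(\evec\rvec^T)\circ\Kmat$.

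The decoupling step is where the hypothesis enters. Under the assumption that $\Smat_{\tilde\wvec\hat{\tilde\xvec}}^{(t)}=\evec\rvec^T$, and hence its transpose $\rvec\evec^T$, is negligible, both off-diagonal coefficient blocks vanish and the $2N$-dimensional system splits into two independent $N\times N$ systems for $\fvec_1$ and $\fvec_2$. On the right-hand side, the second identity rewrites each entry as the diagonal of a matrix product: the $\fvec_1$-block becomes $-\text{diag}(\Cmat_{\tilde\xvec\tilde\xvec}^{-1}\Smat_{\hat{\tilde\xvec}\hat{\tilde\xvec}}^{(t)})$ plus a term $\text{diag}(\tilde\Gmat^T\Cmat_{\tilde\wvec\tilde\wvec}^{-1}\Smat_{\tilde\wvec\hat{\tilde\xvec}}^{(t)})$ that is again negligible, and the $\fvec_2$-block becomes $\text{diag}(\tilde\Gmat^T\Cmat_{\tilde\wvec\tilde\wvec}^{-1}\Smat_{\tilde\wvec\tilde\wvec}^{(t)})$ plus a negligible term carrying $\Smat_{\tilde\wvec\hat{\tilde\xvec}}^{(t)}$. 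Here I would invoke the symmetry of the covariance matrices together with $\text{diag}(\Amat)=\text{diag}(\Amat^T)$ to reorder the surviving products as $\text{diag}(\Smat_{\hat{\tilde\xvec}\hat{\tilde\xvec}}^{(t)}\Cmat_{\tilde\xvec\tilde\xvec}^{-1})$ and $\text{diag}(\Smat_{\tilde\wvec\tilde\wvec}^{(t)}\Cmat_{\tilde\wvec\tilde\wvec}^{-1}\tilde\Gmat)$, matching \eqref{filter1_noE}--\eqref{filter2_noE}. Solving $(\Smat_{\hat{\tilde\xvec}\hat{\tilde\xvec}}^{(t)}\circ\Kmat)\fvec_1=-\text{diag}(\Smat_{\hat{\tilde\xvec}\hat{\tilde\xvec}}^{(t)}\Cmat_{\tilde\xvec\tilde\xvec}^{-1})$ and $(\Smat_{\tilde\wvec\tilde\wvec}^{(t)}\circ\Kmat)\fvec_2=\text{diag}(\Smat_{\tilde\wvec\tilde\wvec}^{(t)}\Cmat_{\tilde\wvec\tilde\wvec}^{-1}\tilde\Gmat)$ and recalling $f_i(\Lambdamat,\cdot)=\text{diag}(\fvec_i)$ reproduces the stated filters exactly.

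I expect the main obstacle to be organizational rather than conceptual: carefully tracking which of the four blocks (two in the coefficient matrix and two on the right-hand side) are proportional to the cross sample covariance $\Smat_{\tilde\wvec\hat{\tilde\xvec}}^{(t)}$ and may therefore be dropped, and then getting the factor ordering and transposes in the surviving diagonal-of-product expressions to agree verbatim with the claim. The Hadamard and diagonal-extraction identities do all the real work of converting the diagonal congruence $\text{diag}(\rvec)\Kmat\,\text{diag}(\rvec)$ and the diagonal--vector products into the sample-covariance objects $\Smat_{\hat{\tilde\xvec}\hat{\tilde\xvec}}^{(t)}$ and $\Smat_{\tilde\wvec\tilde\wvec}^{(t)}$ appearing in the theorem.
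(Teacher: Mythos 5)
Your proposal is correct, and it reaches \eqref{filter1_noE}--\eqref{filter2_noE} by essentially the same mechanism as the paper's Appendix B: substitute the GSP form \eqref{opt_estimator} into $Q_{lin}$ from \eqref{Q_def_approx}, pass to the graph-frequency domain where the filters act as diagonal matrices, impose first-order optimality, and convert the result into Hadamard/sample-covariance form via ${\text{diag}}(\uvec)\Amat\,{\text{diag}}(\vvec)=(\uvec\vvec^T)\circ\Amat$. The differences are in execution. First, by stacking the frequency responses into $\fvec=[\fvec_1^T,\fvec_2^T]^T$ with $\Mmat=[\,{\text{diag}}(\rvec),\ {\text{diag}}(\evec)\,]$, you obtain the normal equations from an ordinary least-squares gradient; the paper instead differentiates w.r.t. the diagonal matrices $f_i(\Lambdamat)$ directly, invoking trace identities and the matrix-cookbook rule, and arrives at the same coupled system \eqref{der_wrt_f1}--\eqref{der_wrt_f2}, whose coefficient blocks are exactly your $(\rvec\rvec^T)\circ\Kmat$, $(\evec\evec^T)\circ\Kmat$, $(\rvec\evec^T)\circ\Kmat$, and $(\evec\rvec^T)\circ\Kmat$. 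Second---the real structural difference---you impose $\Smat_{\tilde{\wvec}\hat{\tilde{\xvec}}}^{(t)}\approx\zerovec$ at the level of the normal equations, so the $2N\times 2N$ system decouples immediately into two $N\times N$ systems whose solutions are the stated filters. The paper instead solves the coupled system exactly, producing the correction terms $\Amat_1$, $\avec_2$, $\avec_3$ in \eqref{AAA}--\eqref{DDD}, and only afterwards sets the cross sample covariance to zero, which annihilates those terms. The two orderings agree here precisely because every coupling term carries a factor of $\Smat_{\tilde{\wvec}\hat{\tilde{\xvec}}}^{(t)}$; the paper's longer route makes that fact explicit, whereas your route takes ``neglect $\Smat_{\tilde{\wvec}\hat{\tilde{\xvec}}}^{(t)}$'' to mean dropping it before inversion---the natural reading of the theorem statement, and one that yields a noticeably shorter derivation. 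Your convexity remark is a small bonus the paper omits, though strictly the Hessian is $\Mmat^T\Kmat\Mmat\succeq 0$ (only positive semidefinite, since entries of $\rvec$ or $\evec$ may vanish), which still suffices for the stationary point to be a global minimizer.
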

\begin{IEEEproof}
The proof appears in Appendix \ref{appendix_derivation_graph_filters_noE}.
\end{IEEEproof}
We denote the estimator that is obtained by substituting the graph filters from Theorem \ref{claim2} in \eqref{opt_estimator} by sGSP-MAP estimator since it is based on the sample covariance matrices from \eqref{Smat_x}-\eqref{Smat_yx}. 
The  rationale behind neglecting  the sample cross-covariance matrix, i.e. approximating
$\Smat_{\tilde{\wvec}\hat{\tilde{\xvec}}}^{(t)}\approx \zerovec$, is that under the model assumptions,  the true covariance matrices satisfy $\Cmat_{\wvec\xvec}=\Cmat_{\tilde\wvec\tilde\xvec}=\zerovec$.  Without this approximation, the algorithm is very sensitive and unreliable.
In Section \ref{simulation}, we demonstrate that the performance of the sGSP-MAP estimator is close to the MAP estimator's performance, and hence, the approximation is justified. 
The sGSP-MAP estimator uses the singular one-sample covariance matrices $\Smat_{\hat{\tilde{\xvec}}\hat{\tilde{\xvec}}}^{(t)}$ and $\Smat_{{\tilde{\wvec}}{\tilde{\wvec}}}^{(t)}$. 
While these matrices provide an unbiased estimation of the covariance matrices, they have a large variance and are  ill-posed
\cite{Wiesel_Eldar2010}. As a result,
 the calculation of the graph filters from \eqref{filter1_noE} and \eqref{filter2_noE}, which is based on inverting matrices with a low condition number, is unstable.
 In order to alleviate the numerical instability,  the sGSP-MAP estimator is implemented with an ad-hoc diagonal loading approach \cite{Stoica_Wang,Carlson_1988}. 
\\
\indent In order to further increase the overall stability of the estimation method, we change the objective function, $Q_{lin}$,  to a smoother one, obtained by the expected value. Taking the expectation makes the resulting objective function more amenable to iterative techniques such as the Gauss-Newton method. 
The following theorem describes the GSP-MAP graph filters that minimize the expected objective function.
}
\begin{Theorem}
\label{claim1}
The graph filters that minimize the expected objective function, \be
\label{obj1}
{\rm{E}}[Q_{lin}(\hat{\xvec}^{(t+1)},\hat{{\xvec}}^{(t)})|{\hat{{\xvec}}^{(t)}=\xvec}],
\ee 
over the subset of GSP estimators in \eqref{opt_estimator}, 
under
 the approximation that $\tilde\Gmat(\Lmat,\Vmat\hat{\tilde{\xvec}}^{(t)})$ is  a deterministic  matrix\footnote{The rationale behind this approximation is similar to that behind the first-order approximation in the Gauss-Newton method \cite{Bell_Cathey1993}.},
are
\beqna
\label{filter1}
f_1^{\text{GSP-MAP}}(\Lambdamat,\hat{\xvec}^{(t)} )=-
{\emph{diag}}\bigg(\Big({\Cmat}_{\tilde\xvec\tilde\xvec}
\circ ({\Cmat}_{\tilde\xvec\tilde\xvec}^{-1}
\nonumber\\+\tilde\Gmat^T(\Lmat,\Vmat\hat{\tilde{\xvec}}^{(t)}) {\Cmat}_{\tilde\wvec\tilde\wvec}^{-1}\tilde\Gmat(\Lmat,\Vmat\hat{\tilde{\xvec}}^{(t)}))\Big)^{-1}
\onevec\bigg)
\eeqna
and
\beqna
\label{filter2}
f_2^{\text{GSP-MAP}}(\Lambdamat,\hat{\xvec}^{(t)} )=
{\emph{diag}}\bigg(({\Cmat}_{\tilde\wvec\tilde\wvec} \circ 
({\Cmat}_{\tilde\xvec\tilde\xvec}^{-1}
\nonumber\\+
\tilde\Gmat^T(\Lmat,\Vmat\hat{\tilde{\xvec}}^{(t)}) {\Cmat}_{\tilde\wvec\tilde\wvec}^{-1}\tilde\Gmat(\Lmat,\Vmat\hat{\tilde{\xvec}}^{(t)})))^{-1}
\nonumber\\
\times{\emph{diag}}(\tilde\Gmat(\Lmat,\Vmat\hat{\tilde{\xvec}}^{(t)}))\bigg)
\eeqna
at the $(t+1)$th iteration.
\end{Theorem}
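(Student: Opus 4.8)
The plan is to work entirely in the graph-frequency domain and to exploit the fact that the graph filters $f_1(\cdot,\cdot)$ and $f_2(\cdot,\cdot)$ act as \emph{diagonal} matrices on the spectral representation, as in \eqref{opt_estimator_graph}. I therefore parametrize the two filters by the vectors of their graph-frequency responses, $\bvec_1\define\text{diag}(f_1(\Lambdamat,\Vmat\hat{\tilde\xvec}^{(t)}))$ and $\bvec_2\define\text{diag}(f_2(\Lambdamat,\Vmat\hat{\tilde\xvec}^{(t)}))$, and recast the constrained minimization over the GSP estimators in \eqref{opt_estimator_graph} as an unconstrained minimization over $\bvec_1,\bvec_2\in\mathbb{R}^N$. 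Introducing the shorthands $\avec\define\hat{\tilde\xvec}^{(t)}-\tilde\muvec_\xvec$ and $\evec\define\tilde\yvec-\tilde\gvec(\Lmat,\Vmat\hat{\tilde\xvec}^{(t)})$ and using the elementary identity $\text{diag}(\bvec_i)\vvec=\text{diag}(\vvec)\bvec_i$, the update \eqref{opt_estimator_graph} gives $\hat{\tilde\xvec}^{(t+1)}-\tilde\muvec_\xvec=\avec+\text{diag}(\avec)\bvec_1+\text{diag}(\evec)\bvec_2$ and a linearized residual $\evec-\tilde\Gmat(\text{diag}(\avec)\bvec_1+\text{diag}(\evec)\bvec_2)$. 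Substituting these into $Q_{lin}$ from \eqref{Q_def_approx} written in the graph-frequency domain (with ${\Cmat}_{\tilde\xvec\tilde\xvec}$ and ${\Cmat}_{\tilde\wvec\tilde\wvec}$ in place of ${\Cmat}_{\xvec\xvec}$ and ${\Cmat}_{\wvec\wvec}$) expresses the objective as a quadratic form in $(\bvec_1,\bvec_2)$ whose coefficients are themselves quadratic in $\avec$ and $\evec$.

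Next I would take the conditional expectation ${\rm E}[\,\cdot\,|\hat{\xvec}^{(t)}=\xvec]$ in \eqref{obj1}. Under the conditioning, $\avec$ inherits the prior second moment ${\rm E}[\avec\avec^T]={\Cmat}_{\tilde\xvec\tilde\xvec}$, the residual reduces to the transformed noise $\evec=\tilde\wvec$ with ${\rm E}[\evec\evec^T]={\Cmat}_{\tilde\wvec\tilde\wvec}$, and the independence of $\wvec$ and $\xvec$ gives the key cancellation ${\rm E}[\avec\evec^T]=\zerovec$; the Jacobian $\tilde\Gmat(\Lmat,\Vmat\hat{\tilde\xvec}^{(t)})$ is pulled outside the expectation by the stated deterministic-Jacobian approximation. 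This cross-covariance cancellation is what makes the problem tractable: every term coupling $\bvec_1$ with $\bvec_2$ carries a factor ${\rm E}[\avec\evec^T]$ and therefore vanishes, so the expected objective \emph{decouples} into a sum of two independent quadratics, one in $\bvec_1$ and one in $\bvec_2$, which I can minimize separately.

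The central computational step is to evaluate the expected quadratic forms using the Hadamard identity ${\rm E}[\text{diag}(\vvec)\Mmat\,\text{diag}(\vvec)]=\Mmat\circ{\rm E}[\vvec\vvec^T]$, valid for deterministic $\Mmat$. With $\Mmat\define{\Cmat}_{\tilde\xvec\tilde\xvec}^{-1}+\tilde\Gmat^T{\Cmat}_{\tilde\wvec\tilde\wvec}^{-1}\tilde\Gmat$, applying the identity with $\vvec=\avec$ shows that the quadratic coefficient of $\bvec_1$ is ${\Cmat}_{\tilde\xvec\tilde\xvec}\circ\Mmat$, and the analogous computation with $\vvec=\evec$ yields ${\Cmat}_{\tilde\wvec\tilde\wvec}\circ\Mmat$ for $\bvec_2$. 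The two linear terms require the trace identities ${\rm E}[\avec^T{\Cmat}_{\tilde\xvec\tilde\xvec}^{-1}\text{diag}(\avec)]=\onevec^T$, which follows from $\Tr({\Cmat}_{\tilde\xvec\tilde\xvec}^{-1}\text{diag}(\bvec_1){\Cmat}_{\tilde\xvec\tilde\xvec})=\onevec^T\bvec_1$, and ${\rm E}[\evec^T{\Cmat}_{\tilde\wvec\tilde\wvec}^{-1}\tilde\Gmat\,\text{diag}(\evec)]=\text{diag}(\tilde\Gmat)^T$; these produce the linear coefficients $\onevec$ (from the prior cross term) and $\text{diag}(\tilde\Gmat)$ (from the residual cross term), respectively.

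Finally, since each subproblem is an unconstrained convex quadratic in $\bvec_i$ (the Hadamard product of positive definite matrices is positive definite, so ${\Cmat}_{\tilde\xvec\tilde\xvec}\circ\Mmat$ and ${\Cmat}_{\tilde\wvec\tilde\wvec}\circ\Mmat$ are invertible), I set the gradients to zero, obtaining $({\Cmat}_{\tilde\xvec\tilde\xvec}\circ\Mmat)\bvec_1=-\onevec$ and $({\Cmat}_{\tilde\wvec\tilde\wvec}\circ\Mmat)\bvec_2=\text{diag}(\tilde\Gmat)$. Solving and re-diagonalizing via $f_i^{\text{GSP-MAP}}=\text{diag}(\bvec_i)$ recovers exactly \eqref{filter1} and \eqref{filter2}. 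The main obstacle I anticipate is the bookkeeping of the expected quadratic and linear forms—correctly reducing each $\text{diag}(\cdot)$-sandwiched expectation to a Hadamard product and verifying that the two linear terms collapse to $\onevec$ and $\text{diag}(\tilde\Gmat)$—together with justifying the decoupling and the positive-definiteness of the Hadamard-product matrices so that the stationary point is the unique minimizer. The derivation closely mirrors that of Theorem \ref{claim2}, with the single-sample matrices $\Smat_{\hat{\tilde\xvec}\hat{\tilde\xvec}}^{(t)}$ and $\Smat_{\tilde\wvec\tilde\wvec}^{(t)}$ replaced by their expectations ${\Cmat}_{\tilde\xvec\tilde\xvec}$ and ${\Cmat}_{\tilde\wvec\tilde\wvec}$.
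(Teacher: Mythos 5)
Your proposal is correct and takes essentially the same route as the paper's proof: you condition on $\hat{\xvec}^{(t)}=\xvec$ so that the two residuals have second moments ${\Cmat}_{\tilde\xvec\tilde\xvec}$ and ${\Cmat}_{\tilde\wvec\tilde\wvec}$ with vanishing cross-correlation, treat $\tilde\Gmat(\Lmat,\Vmat\hat{\tilde{\xvec}}^{(t)})$ as deterministic, and minimize the resulting decoupled quadratics in the filter diagonals. Your Hadamard-product identities are simply a vectorized restatement of the paper's trace expansion and element-wise differentiation in \eqref{LMMSE_GSP_o3_try2}--\eqref{f2opt}, yielding the identical normal equations $({\Cmat}_{\tilde\xvec\tilde\xvec}\circ\Mmat)\bvec_1=-\onevec$ and $({\Cmat}_{\tilde\wvec\tilde\wvec}\circ\Mmat)\bvec_2={\text{diag}}(\tilde\Gmat)$, with your Schur-product positive-definiteness remark adding a small extra justification of uniqueness that the paper leaves implicit.
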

We denote the estimator
 that is obtained by substituting the graph filters 
from Theorem \ref{claim1}
 in \eqref{opt_estimator} 
by GSP-MAP estimator.
\begin{IEEEproof}
The proof appears in Appendix \ref{appendix_derivation_graph_filters}.
\end{IEEEproof}

Similar to the  iterative  algorithms from Section \ref{iterative_subsection}, in practice, we multiply the sGSP-MAP estimator and the graph filters  by the step size
$\alpha^{(t)}\in(0,1]$ that is computed by a backtracking line search. The sGSP-MAP and the GSP-MAP algorithms are summarized in Algorithm \ref{new_alg}.
\begin{algorithm}[hbt]
\SetAlgoLined
	\textbf{Input:}	
	\begin{itemize}
		\item Function $\tilde\gvec(\Lmat,\Vmat\tilde\xvec)$
		\item Laplacian matrix $\Lmat$
		\item Initial step size $\alpha^{(0)}$, $\gamma$ and $\Delta$
\item 	Mean and covariance matrices $\tilde\muvec_{{\xvec}}$, ${\Cmat}_{\tilde{\xvec}\tilde{\xvec}}$, and ${\Cmat}_{\tilde\wvec\tilde\wvec}$
\item Tolerance $\delta$ 
	\end{itemize}
	{\textbf{Algorithm Steps:}}
	\begin{enumerate}
		\setlength\itemsep{0.2em}
		\item Initialization: ${\hat{\tilde{\xvec}}}^{(0)}$
		\item Compute 
		$\tilde\gvec(\Lmat,\Vmat\hat{\tilde\xvec}^{(t)})$ and $\tilde\Gmat(\Lmat,\Vmat\hat{\tilde\xvec}^{(t)})$ from \eqref{GGG_tilde}
		\item {\textbf{Option A: sGSP-MAP estimator:}}\\
   Compute the graph filters: $f_1^*(\Lambdamat,\hat{\xvec}^{(t)} )=
f_1^{\text{sGSP-MAP}}(\Lambdamat,\hat{\xvec}^{(t)} ) $ from \eqref{filter1_noE}
		$f_2^*(\Lambdamat,\hat{\xvec}^{(t)} )=f_2^{\text{sGSP-MAP}}(\Lambdamat,\hat{\xvec}^{(t)} ) $
		from  \eqref{filter2_noE}\\
  with additional diagonal loading approaches.
  \\
  {\textbf{Option B: GSP-MAP estimator:}}\\
  Compute the graph filters:
  $f_1^*(\Lambdamat,\hat{\xvec}^{(t)} )=f_1^{\text{GSP-MAP}}(\Lambdamat,\hat{\xvec}^{(t)} )$
		from \eqref{filter1}
		$f_2^*(\Lambdamat,\hat{\xvec}^{(t)} )=f_2^{\text{GSP-MAP}}(\Lambdamat,\hat{\xvec}^{(t)} )$
		from \eqref{filter2} 
		\item Choose step size $\alpha^{(t)}$ according to Algorithm \ref{line_search_stages} with the objective function $Q_{freq}$.
		\item Apply iteration step in \eqref{opt_estimator_graph} with the graph filters:
		\beqna \label{opt_estimator_graph_filter}	\hspace{-0.5cm}\hat{\tilde{\xvec}}^{(t+1)}  
	= \hat{\tilde{\xvec}}^{(t)}+ \alpha^{(t)} f_1^*(\Lambdamat,\Vmat\hat{\tilde{\xvec}}^{(t)} )(\hat{\tilde{\xvec}}^{(t)}- \tilde{\muvec}_\xvec)\nonumber\\
	+\alpha^{(t)} f_2^*(\Lambdamat,\Vmat\hat{\tilde{\xvec}}^{(t)} )(\tilde\yvec-\tilde\gvec(\Lmat,\Vmat\hat{\tilde\xvec}^{(t)}))
\eeqna
		\item Stopping condition: $||{\hat{\tilde\xvec}}^{(t+1)}-{\hat{\tilde\xvec}}^{(t)}||<\delta$ 
	\end{enumerate}
	\vspace{1mm}
	\KwOutput{Graph-filtered GSP-MAP estimator
	$\hat{\xvec} \hspace{-0.12cm} =\hspace{-0.13cm} \Vmat{\hat{\tilde{\xvec}}}^{(t+1)}\hspace{-0.2cm}$}
	\caption{sGSP-MAP and GSP-MAP estimators by Gauss-Newton method}
	\label{new_alg}
\end{algorithm}

It can be seen that
replacing $\Smat_{\hat{\tilde{\xvec}}\hat{\tilde{\xvec}}}^{(t)}$  and $\Smat_{{\tilde{\wvec}}{\tilde{\wvec}}}^{(t)}$ 
with ${\Cmat}_{\tilde\xvec\tilde\xvec}$ 
and ${\Cmat}_{\tilde\wvec\tilde\wvec}$ in the graph filters from Theorem \ref{claim2}, \eqref{filter1_noE} and \eqref{filter2_noE},
results in the GSP-MAP optimal filters from \eqref{filter1} and \eqref{filter2}.
Thus, the GSP-MAP estimator can be obtained from the sGSP-MAP estimator by using the true covariance matrices instead of the one-sample covariance matrices from \eqref{Smat_x} and \eqref{Smat_y}.
Consequently, the sGSP-MAP estimator fits better to the data, since it uses the current samples to obtain the graph filters, and since it minimizes the same objective function as the MAP estimator. On the other hand, the GSP-MAP estimator is more stable and uses matrices with a higher condition number.
\subsection{Discussion}
\label{Disscussion2}
In this subsection, we investigate the differences in the objective functions of the different estimators, as
summarized in Table \ref{objective_table}. The right column indicates if this estimator can be represented in the GSP form of \eqref{opt_estimator}. 
\renewcommand{\arraystretch}{2.5}
\begin{table}[hbt!]\centering
\begin{tabular}{|c|l|c|}
\hline
\fontsize{9}{9}\textbf{Estimator} & \fontsize{9}{9}\textbf{Objective function}  & \fontsize{9}{9}\textbf{GSP}                                                                                                                \\ \hline
{MAP}       & $Q_{lin}$    from  \eqref{Q_def_approx}                                                                &\xmark                               \\ \hline
{eGFD-MAP}  & \makecell{$Q_{freq-lin}^{(d,approx)}$ from  \eqref{Q_def_GSP_lin_incremental_approx}, i.e. 
$Q_{lin}$    from  \eqref{Q_def_approx}\\ under diagonal approximations}&\cmark                                                      \\ \hline
{sGSP-MAP}   & $Q_{lin}$    from  \eqref{Q_def_approx}&\cmark \\ \hline
{GSP-MAP}   & ${\rm{E}}\left[\left.Q_{lin}(\hat{\xvec}^{(t+1)},\hat{{\xvec}}^{(t)})\right|{\hat{{\xvec}}^{(t)}=\xvec}\right] \text{  from  } \eqref{obj1}$&\cmark \\ \hline
\end{tabular}
\caption{Comparison of the different estimators. }
\label{objective_table}
\end{table}

As discussed after Algorithm \ref{algorithm_MAP},
the original non-convex objective function of the MAP estimator 
 in \eqref{Q_def_approx}  
is sensitive to the initialization,  may diverge, 
has high computational complexity, 
and does not utilize any information about the underlying graph structure.
The low-complexity eGFD-MAP estimator approximates the covariance and Jacobian matrices in the MAP objective function by diagonal matrices, which can be interpreted as applying preconditioning \cite{Preconditioning_Clinthorne_1993,hanke1992preconditioned}. In that way, it reduces the computational complexity of the estimation approach 
and 
increases its robustness to bad initialization. 
The following claim discusses the relations between the MAP estimator and the eGFD-MAP estimator.
\begin{Claim}
\label{Q_lin_Q_freq_coincide} 
The objective function from \eqref{Q_def_GSP_incremental}, $Q_{freq}^{(d)}(\tilde{\xvec})$,
coincides with the MAP objective function in the graph-frequency domain, 
$Q_{freq}(\tilde\xvec)$, from \eqref{Q_def_GSP} 
if  the elements of $\tilde{\xvec}$ and $\tilde{\wvec}$  in the graph-frequency domain are uncorrelated,
i.e. if 
1)  ${\Cmat}_{\tilde{\xvec}\tilde{\xvec}} = \text{ddiag}({\Cmat}_{\tilde{\xvec}\tilde{\xvec}})$; and
2) ${\Cmat}_{\tilde{\wvec}\tilde{\wvec}} = \text{ddiag}({\Cmat}_{\tilde{\wvec}\tilde{\wvec}})$.
\end{Claim}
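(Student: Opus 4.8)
The plan is to observe that $Q_{freq}^{(d)}(\tilde{\xvec})$ in \eqref{Q_def_GSP_incremental} and $Q_{freq}(\tilde{\xvec})$ in \eqref{Q_def_GSP} are structurally identical expressions: they share the same prior residual $\tilde{\xvec}-\tilde{\muvec}_\xvec$ and the same measurement residual $\tilde{\yvec}-\tilde\gvec(\Lmat,\Vmat\tilde\xvec)$, and differ \emph{only} in the two weighting matrices. Specifically, $Q_{freq}^{(d)}$ uses $\Dmat_{\tilde{\xvec}\tilde{\xvec}}^{(\text{inv})}$ in place of ${\Cmat}_{\tilde{\xvec}\tilde{\xvec}}^{-1}$ and $\Dmat_{\tilde{\wvec}\tilde{\wvec}}^{(\text{inv})}$ in place of ${\Cmat}_{\tilde{\wvec}\tilde{\wvec}}^{-1}$. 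Hence it suffices to show that, under the stated hypotheses, these substitutions leave the matrices unchanged, i.e. $\Dmat_{\tilde{\xvec}\tilde{\xvec}}^{(\text{inv})}={\Cmat}_{\tilde{\xvec}\tilde{\xvec}}^{-1}$ and $\Dmat_{\tilde{\wvec}\tilde{\wvec}}^{(\text{inv})}={\Cmat}_{\tilde{\wvec}\tilde{\wvec}}^{-1}$, after which the two objective functions agree term by term for every $\tilde{\xvec}$.

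First I would recall the definitions $\Dmat_{\tilde{\xvec}\tilde{\xvec}}^{(\text{inv})}={\text{ddiag}}({\Cmat}_{\tilde{\xvec}\tilde{\xvec}}^{-1})$ and $\Dmat_{\tilde{\wvec}\tilde{\wvec}}^{(\text{inv})}={\text{ddiag}}({\Cmat}_{\tilde{\wvec}\tilde{\wvec}}^{-1})$ from \eqref{inv_Dx_define}--\eqref{inv_Dw_define}. The key elementary fact is that the inverse of an invertible diagonal matrix is again diagonal. Thus, assuming condition~1, ${\Cmat}_{\tilde{\xvec}\tilde{\xvec}}={\text{ddiag}}({\Cmat}_{\tilde{\xvec}\tilde{\xvec}})$ is diagonal, so ${\Cmat}_{\tilde{\xvec}\tilde{\xvec}}^{-1}$ is diagonal as well, whence the operator ${\text{ddiag}}(\cdot)$ acts as the identity on it: ${\text{ddiag}}({\Cmat}_{\tilde{\xvec}\tilde{\xvec}}^{-1})={\Cmat}_{\tilde{\xvec}\tilde{\xvec}}^{-1}$, giving $\Dmat_{\tilde{\xvec}\tilde{\xvec}}^{(\text{inv})}={\Cmat}_{\tilde{\xvec}\tilde{\xvec}}^{-1}$. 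Repeating the identical argument with condition~2 for the noise covariance yields $\Dmat_{\tilde{\wvec}\tilde{\wvec}}^{(\text{inv})}={\Cmat}_{\tilde{\wvec}\tilde{\wvec}}^{-1}$.

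Finally, substituting these two equalities into \eqref{Q_def_GSP_incremental} reproduces \eqref{Q_def_GSP} exactly, so $Q_{freq}^{(d)}(\tilde{\xvec})=Q_{freq}(\tilde{\xvec})$ for all $\tilde{\xvec}$, which is the claim. I anticipate no genuine obstacle here: the whole argument rests on the elementary observation that ${\text{ddiag}}(\cdot)$ fixes diagonal matrices, combined with the fact that inversion preserves diagonality. The only point deserving a word of care is invertibility — the derivation uses that ${\Cmat}_{\tilde{\xvec}\tilde{\xvec}}$ and ${\Cmat}_{\tilde{\wvec}\tilde{\wvec}}$ are nonsingular so that ${\Cmat}_{\tilde{\xvec}\tilde{\xvec}}^{-1}$, ${\Cmat}_{\tilde{\wvec}\tilde{\wvec}}^{-1}$, and hence both objective functions, are well defined; this is guaranteed by the modeling assumption that they are valid (positive definite) covariance matrices.
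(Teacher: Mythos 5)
Your proof is correct and follows essentially the same route as the paper's: both reduce the claim to the observation that, under Conditions 1) and 2), inversion preserves diagonality, so that $\Dmat_{\tilde{\xvec}\tilde{\xvec}}^{(\text{inv})}={\text{ddiag}}({\Cmat}_{\tilde{\xvec}\tilde{\xvec}}^{-1})={\Cmat}_{\tilde{\xvec}\tilde{\xvec}}^{-1}$ and $\Dmat_{\tilde{\wvec}\tilde{\wvec}}^{(\text{inv})}={\text{ddiag}}({\Cmat}_{\tilde{\wvec}\tilde{\wvec}}^{-1})={\Cmat}_{\tilde{\wvec}\tilde{\wvec}}^{-1}$, after which the two objective functions agree term by term. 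Your write-up is in fact slightly more explicit than the paper's one-line argument, and the remark on invertibility of the covariance matrices is a harmless extra precaution.
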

\begin{proof}
    It can be seen that when Conditions 1) and 2) are satisfied, 
    $Q_{freq}^{(d)}(\tilde{\xvec})$ from \eqref{Q_def_GSP_incremental}
      is readily obtained 
 from 
$Q_{freq}(\tilde\xvec)$ 
  by replacing the  inverse of the full covariance matrices ${\Cmat}_{\tilde{\xvec}\tilde{\xvec}}$ and  ${\Cmat}_{\tilde{\wvec}\tilde{\wvec}}$ in \eqref{Q_def_GSP} with their
diagonal versions in \eqref{inv_Dx_define} and \eqref{inv_Dw_define}, respectively.
\end{proof}
Although Claim \ref{Q_lin_Q_freq_coincide} does not guarantee that the MAP and the eGFD-MAP estimators coincide (since the mixed derivatives of  $\tilde\gvec(\Lmat, \Vmat\tilde\xvec)$ may be non-negligible), in practice (see Section \ref{simulation}) the performance of the eGFD-MAP estimator is comparable to that of the MAP estimator under Conditions 1) and 2).
However, when the conditions of Claim \ref{Q_lin_Q_freq_coincide} are not satisfied,   
the eGFD-MAP estimator
might 
deviate from the 
MAP estimator. 
 In contrast, the sGSP-MAP and the GSP-MAP estimators do not impose any assumptions on the covariance matrices of $\xvec$ and $\wvec$, nor on the Jacobian $\tilde\Gmat(\Lmat,\Vmat\hat{\tilde\xvec}^{(t)})$. 
The GSP estimators assume the structure of the output of two graph filters, which can be interpreted as regularization that enforces using the graphical information for the iteration of the MAP estimator. Thus, the proposed estimators have the advantage that they can be designed by using parametric graph filter design \cite{sparse_paper} and can be used to obtain a distributed implementation that works locally over the graph 
  \cite{5982158}.
However, 
 the eGFD-MAP estimator may not fully utilize the GSP information, since it employs the diagonal structure. Moreover, 
the sGSP-MAP estimator is adversely affected by the use of the sample covariance matrices that are only based on a single observation. 
Thus, the GSP-MAP estimator, which minimizes the expectation of the objective function of the MAP estimator \eqref{obj1}, is a smoother version of the sGSP-MAP estimator that is more robust to poor initialization.
It is important to mention that applying the expectation to the objective function of the MAP estimator from \eqref{iteration_Newton2} or to the eGFD-MAP estimator from \eqref{Q_def_GSP_lin_incremental_approx} in a similar manner as performed in Theorem \ref{claim1}, results in the same MAP and eGFD-MAP estimators, and does not improve the robustness of these estimators. Thus, the expected objective function can only be used in developing the GSP-MAP estimator, i.e. with the MAP objective function, $Q_{lin}$, together with the specific graph-filter structure from \eqref{opt_estimator}.




\subsection{Orthogonal-graph-frequencies models} 
\label{separately_model_section}
In this subsection,  we present the special case of orthogonal graph frequencies.
We show that in this case, 
 the proposed estimators coincide with the MAP estimator. 
The orthogonal-graph-frequencies model is defined as follows.
\begin{definition}
\label{orthogonal_frequencies}
The nonlinear measurements function $\gvec(\Lmat,\xvec)$ is separable in the graph-frequency domain (``orthogonal graph frequencies") if it satisfies
    \begin{equation} \label{separately_Model}
        [\tilde{\gvec}(\Lmat,\Vmat\tilde\xvec)]_n  = [\tilde{\gvec}(\Lmat,\tilde{x}_n\vvec_n)]_n,~n=1,\ldots,N,
    \end{equation}
    where $\tilde{x}_n$ is the $n$th element of $\tilde{\xvec}$ and $\tilde{\gvec}(\Lmat,\Vmat\tilde\xvec)=\Vmat^T\gvec(\Lmat,\xvec)$.
\end{definition}
Definition \ref{orthogonal_frequencies} is satisfied, for example, if the associated Jacobian matrix, $\Gmat(\Lmat,\xvec)$, is diagonalized by the eigenvector matrix  of  $\Lmat$, $\Vmat$.

By substituting \eqref{separately_Model} in \eqref{Q_def_GSP_incremental}, we obtain that in this case
\beqna
\label{Q_def_GSP_incremental2}
Q_{freq}^{(d)}(\tilde{\xvec})
=\frac{1}{2}\sum_{n=1}^N
(\tilde{x}_n-[\tilde{\mu}_\xvec]_n)^2[{\Cmat}_{\tilde{\xvec}\tilde{\xvec}}^{-1}]_{n,n}\hspace{1.25cm}\nonumber\\ + \frac{1}{2}\sum_{n=1}^N(\tilde{y}_n-
[\tilde{\gvec}(\Lmat,\tilde{x}_n\vvec_n)]_n
)^2[{\Cmat}_{\tilde\wvec\tilde\wvec}^{-1}]_{n,n}.
\eeqna
Thus, in the case described by Definition \ref{orthogonal_frequencies}, the objective function $Q_{freq}^{(d)}(\tilde{\xvec})$ is separable in the graph-frequency domain. The  component-wise formulation  of \eqref{Q_def_GSP_incremental2} results in a separable nonlinear WLS problem \cite{golub2003separable}, which
simplifies the task of designing the MAP estimator in the graph-frequency domain without the need for neglecting off-diagonal elements (as performed in the eGFD-MAP estimator).
 The minimization of \eqref{Q_def_GSP_incremental2} instead of  \eqref{Q_def_GSP} 
 results in $N$ independent optimization problems that 
 can result in an overall significant speedup of computations.
 Minimization of \eqref{Q_def_GSP_incremental2}  can be simplified even further if
 the graph input signal is a graph-bandlimited signal (see \eqref{bandlimited_def}).
In this case,  by substituting $\tilde{x}_n=0$, $\forall n>N_s$, the sums in \eqref{Q_def_GSP_incremental2} 
 can be computed only over $n=1,\ldots,N_s$.

The following theorem states sufficient conditions for the proposed 
estimators to coincide with the MAP estimator for the case of orthogonal-graph frequencies.
\begin{Theorem}\label{claim_coincides}
If the measurement function satisfies Definition \ref{orthogonal_frequencies} and, in addition, the following conditions hold:
\renewcommand{\theenumi}{C.\arabic{enumi}}
\begin{enumerate}
	\setcounter{enumi}{0}
    \item\label{cond1} The elements of the input graph signal, $\xvec$, are statistically independent in the graph-frequency domain, i.e. $\Cmat_{\tilde{\xvec}\tilde{\xvec}}$ is a diagonal matrix; 
    \item\label{cond2} The noise vector, $\wvec$, is uncorrelated in the graph-frequency domain, i.e. $\Cmat_{\tilde{\wvec}\tilde{\wvec}}$ is a diagonal matrix; 
\end{enumerate}
then the proposed eGFD-MAP, sGSP-MAP, and GSP-MAP estimators coincide with the MAP estimator.
\end{Theorem}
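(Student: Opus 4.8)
The plan is to show that all three proposed estimators reduce, iteration-by-iteration, to the Gauss-Newton MAP update \eqref{iteration_Newton2_GSP}, so that---starting from the common initialization $\hat{\tilde{\xvec}}^{(0)}=\tilde{\muvec}_\xvec$---they generate identical sequences and hence the same estimator. The first and decisive observation is that the hypotheses force every matrix appearing in the updates to be \emph{diagonal}. By Definition \ref{orthogonal_frequencies}, $[\tilde{\gvec}(\Lmat,\Vmat\tilde{\xvec})]_n$ depends on $\tilde{\xvec}$ only through $\tilde{x}_n$, so $\partial[\tilde{\gvec}]_n/\partial\tilde{x}_m=0$ for $m\neq n$ and the Jacobian $\tilde{\Gmat}(\Lmat,\Vmat\hat{\tilde{\xvec}}^{(t)})$ from \eqref{GGG_tilde} is diagonal. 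Conditions \ref{cond1} and \ref{cond2} make $\Cmat_{\tilde{\xvec}\tilde{\xvec}}$ and $\Cmat_{\tilde{\wvec}\tilde{\wvec}}$, hence their inverses, diagonal. Consequently the ``normal matrix'' $\Mmat\define\Cmat_{\tilde{\xvec}\tilde{\xvec}}^{-1}+\tilde{\Gmat}^{T}\Cmat_{\tilde{\wvec}\tilde{\wvec}}^{-1}\tilde{\Gmat}$ that is inverted in \eqref{iteration_Newton2_GSP} is itself diagonal, and the MAP step can be rewritten in the graph-filter form \eqref{opt_estimator_graph} with the diagonal filters $f_1^{\text{MAP}}=-\Mmat^{-1}\Cmat_{\tilde{\xvec}\tilde{\xvec}}^{-1}$ and $f_2^{\text{MAP}}=\Mmat^{-1}\tilde{\Gmat}^{T}\Cmat_{\tilde{\wvec}\tilde{\wvec}}^{-1}$. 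Thus, in this special case the MAP estimator already belongs to the GSP class of \eqref{opt_estimator}, and it remains only to check that each proposed estimator's filters collapse to $f_1^{\text{MAP}},f_2^{\text{MAP}}$.

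For the eGFD-MAP estimator I would substitute these facts directly into \eqref{27_old25}: since $\Cmat_{\tilde{\xvec}\tilde{\xvec}}^{-1}$ and $\Cmat_{\tilde{\wvec}\tilde{\wvec}}^{-1}$ are diagonal we have $\Dmat_{\tilde{\xvec}\tilde{\xvec}}^{(\text{inv})}=\Cmat_{\tilde{\xvec}\tilde{\xvec}}^{-1}$ and $\Dmat_{\tilde{\wvec}\tilde{\wvec}}^{(\text{inv})}=\Cmat_{\tilde{\wvec}\tilde{\wvec}}^{-1}$ from \eqref{inv_Dx_define}--\eqref{inv_Dw_define}, and since $\tilde{\Gmat}$ is diagonal, $\bar{\Dmat}_{\tilde{\Gmat}}=\tilde{\Gmat}=\tilde{\Gmat}^{T}$; using that diagonal matrices commute then turns \eqref{27_old25} verbatim into \eqref{iteration_Newton2_GSP}. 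For the GSP-MAP filters \eqref{filter1}--\eqref{filter2}, the Hadamard products $\Cmat_{\tilde{\xvec}\tilde{\xvec}}\circ\Mmat$ and $\Cmat_{\tilde{\wvec}\tilde{\wvec}}\circ\Mmat$ are diagonal (products of diagonals), so their inverses act entrywise; using $[\Cmat_{\tilde{\xvec}\tilde{\xvec}}^{-1}]_{n,n}=1/[\Cmat_{\tilde{\xvec}\tilde{\xvec}}]_{n,n}$ and the analogous identity for $\tilde{\wvec}$, a short entrywise computation gives $f_1^{\text{GSP-MAP}}=f_1^{\text{MAP}}$ and $f_2^{\text{GSP-MAP}}=f_2^{\text{MAP}}$.

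The main obstacle is the sGSP-MAP filters \eqref{filter1_noE}--\eqref{filter2_noE}, because the single-sample covariances $\Smat_{\hat{\tilde{\xvec}}\hat{\tilde{\xvec}}}^{(t)}=\uvec\uvec^{T}$ and $\Smat_{\tilde{\wvec}\tilde{\wvec}}^{(t)}=\vvec\vvec^{T}$, with $\uvec\define\hat{\tilde{\xvec}}^{(t)}-\tilde{\muvec}_\xvec$ and $\vvec\define\tilde{\yvec}-\tilde{\gvec}(\Lmat,\Vmat\hat{\tilde{\xvec}}^{(t)})$, are rank-one, not diagonal. The key is that the Hadamard product of a rank-one matrix with the \emph{diagonal} $\Mmat$ collapses to a diagonal matrix: $\uvec\uvec^{T}\circ\Mmat=\text{diag}(u_n^2 M_{n,n})$, while $\text{diag}(\uvec\uvec^{T}\Cmat_{\tilde{\xvec}\tilde{\xvec}}^{-1})$ has entries $u_n^2[\Cmat_{\tilde{\xvec}\tilde{\xvec}}^{-1}]_{n,n}$, so the singular factors $u_n^2$ cancel and $f_1^{\text{sGSP-MAP}}$ acquires $n$th entry $-[\Cmat_{\tilde{\xvec}\tilde{\xvec}}^{-1}]_{n,n}/M_{n,n}$, i.e. exactly $f_1^{\text{MAP}}$; the same $v_n^2$-cancellation gives $f_2^{\text{sGSP-MAP}}=f_2^{\text{MAP}}$. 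Here I would flag the well-definedness caveat---the cancellation requires $u_n,v_n\neq 0$, which is precisely what the diagonal-loading safeguard of the sGSP-MAP implementation handles---and note that the cross-term $\Smat_{\tilde{\wvec}\hat{\tilde{\xvec}}}^{(t)}$ was already neglected in Theorem \ref{claim2}. Finally, to conclude coincidence of the full algorithms rather than just single steps, I would invoke Claim \ref{Q_lin_Q_freq_coincide}: under \ref{cond1}--\ref{cond2} the line-search objectives $Q_{freq}^{(d)}$ and $Q_{freq}$ agree, so all four algorithms use the same objective for backtracking, select the same step sizes $\alpha^{(t)}$, and---sharing the initialization and the per-iteration update---produce identical iterates, hence the same final estimator.
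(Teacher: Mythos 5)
Your proposal is correct and follows essentially the same route as the paper's proof: both exploit that Definition \ref{orthogonal_frequencies} makes the Jacobian $\tilde{\Gmat}$ diagonal and Conditions \ref{cond1}--\ref{cond2} make the covariances diagonal, then use Hadamard-product-with-diagonal identities and the cancellation of the (almost surely nonzero) single-sample covariance entries to collapse the eGFD-MAP, GSP-MAP, and sGSP-MAP filters to the MAP update. Your only departures are organizational---you write the MAP step itself in GSP-filter form and compare all three estimators to it directly (the paper instead chains GSP-MAP to eGFD-MAP to MAP), and you make explicit the line-search/initialization argument that the paper leaves implicit via the coincidence of objective functions.
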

\begin{proof}
By substituting \eqref{separately_Model}, together with $\Cmat_{\tilde{\wvec}\tilde{\wvec}}^{-1}=\Dmat_{\tilde{\wvec}\tilde{\wvec}}^{(\text{inv})}$ and
$\Cmat_{\tilde{\xvec}\tilde{\xvec}}^{-1}=\Dmat_{\tilde{\xvec}\tilde{\xvec}}^{(\text{inv})}$ from Conditions \ref{cond1} and \ref{cond2}, respectively,  in the MAP objective function in the graph-frequency domain in \eqref{Q_def_GSP}, we obtain that in this case
\beqna
\label{Q_orth}
Q_{freq}(\tilde{\xvec})=
\frac{1}{2}(\tilde{\xvec}-\tilde{\muvec}_\xvec)^{T}{\Dmat}_{\tilde{\xvec}\tilde{\xvec}}^{(\text{inv})}(\tilde{\xvec}-\tilde{\muvec}_\xvec)\hspace{1.8cm}\nonumber\\ + \frac{1}{2}(\tilde{\yvec}-\tilde{\gvec}(\Lmat,\xvec))^{T}{\Dmat}_{\tilde{\wvec}\tilde{\wvec}}^{(\text{inv})}(\tilde\yvec-\tilde\gvec(\Lmat,\Vmat\tilde\xvec))
\nonumber\\=
\frac{1}{2}\sum_{n=1}^N
(\tilde{x}_n-[\tilde{\mu}_\xvec]_n)^2[\Cmat_{\tilde{\xvec}\tilde{\xvec}}^{-1}]_{n,n}\hspace{1.68cm}\nonumber\\ + \frac{1}{2}\sum_{n=1}^N(\tilde{y}_n-
[\tilde{\gvec}(\Lmat,\tilde{x}_n\vvec_n)]_n
)^2[\Cmat_{\tilde{\wvec}\tilde{\wvec}}^{-1}]_{n,n}.
\eeqna
Thus,  the eGFD-MAP objective function from \eqref{Q_def_GSP_incremental2} coincides with the MAP objective function from \eqref{Q_orth}.
From Definition \ref{orthogonal_frequencies}, we have that $\tilde{\Gmat}(\Lmat,\Vmat\tilde\xvec)=\bar\Dmat_{\tilde{\Gmat}}(\tilde\xvec)$. Therefore, we can conclude that the eGFD-MAP estimator coincides with the MAP estimator.
By substituting, $\Cmat_{\tilde{\xvec}\tilde{\xvec}}^{-1}=\Dmat_{\tilde{\xvec}\tilde{\xvec}}^{(\text{inv})}$,
$\Cmat_{\tilde{\wvec}\tilde{\wvec}}^{-1}={\Dmat}_{\tilde\wvec\tilde\wvec}^{(\text{inv})}$,
and $\tilde\Gmat(\Lmat,\Vmat\hat{\tilde{\xvec}}^{(t)})=\bar\Dmat_{\tilde{\Gmat}}(\hat{\tilde\xvec}^{(t)})$ in
the optimal graph filters from \eqref{filter1}
and \eqref{filter2}, and using the fact that $\Imat\circ\Amat={\text{ddiag}}(\Amat)$, ${\text{diag}}(\Dmat \avec)=\Dmat {\text{diag}}(\avec)$, and ${\text{ddiag}}(\Dmat)=\Dmat$ for any 
 matrix $\Amat$,
diagonal matrix, $\Dmat$, and vector $\avec$, we obtain the graph filters of the eGFD-MAP estimator from \eqref{f1_def} and \eqref{f2_def}. Thus, 
the GSP-MAP estimator coincides with the eGDF-MAP estimator, which is also the MAP estimator.
Moreover, since ${\Cmat}_{\tilde\xvec\tilde\xvec}^{-1}$ 
and $\big({\Cmat}_{\tilde\xvec\tilde\xvec}^{-1}+\tilde\Gmat^T(\Lmat,\Vmat\hat{\tilde{\xvec}}^{(t)}) 
{\Cmat}_{\tilde\wvec\tilde\wvec}^{-1}
\tilde\Gmat(\Lmat,\Vmat\hat{\tilde{\xvec}}^{(t)})\big)$ are diagonal matrices under the conditions of this theorem,
 one obtains 
 \be
 \label{ddiag_eq} {\text{diag}}\big(\Smat_{\hat{\tilde{\xvec}}\hat{\tilde{\xvec}}}^{(t)}{\Cmat}_{\tilde\xvec\tilde\xvec}^{-1}\big)= \text{diag}
\big({\text{ddiag}}\big(\Smat_{\hat{\tilde{\xvec}}\hat{\tilde{\xvec}}}^{(t)}\big){\Cmat}_{\tilde\xvec\tilde\xvec}^{-1}\big)
\ee
and 
\beqna
 \label{ddiag_eq2} 
\Smat_{\hat{\tilde{\xvec}}\hat{\tilde{\xvec}}}^{(t)}
\circ ({\Cmat}_{\tilde\xvec\tilde\xvec}^{-1}+ \tilde\Gmat^T(\Lmat,\Vmat\hat{\tilde{\xvec}}^{(t)}) 
{\Cmat}_{\tilde\wvec\tilde\wvec}^{-1}
\tilde\Gmat(\Lmat,\Vmat\hat{\tilde{\xvec}}^{(t)}))
\nonumber\\
={\text{ddiag}}\big(\Smat_{\hat{\tilde{\xvec}}\hat{\tilde{\xvec}}}^{(t)}\big)({\Cmat}_{\tilde\xvec\tilde\xvec}^{-1}+ \tilde\Gmat^T(\Lmat,\Vmat\hat{\tilde{\xvec}}^{(t)}) 
{\Cmat}_{\tilde\wvec\tilde\wvec}^{-1}
\tilde\Gmat(\Lmat,\Vmat\hat{\tilde{\xvec}}^{(t)})).
\eeqna
By substituting \eqref{ddiag_eq} and \eqref{ddiag_eq2} in the graph filter from \eqref{filter1_noE}, we obtain
\beqna
f_1^{\text{sGSP-MAP}}(\Lambdamat,\hat{\xvec}^{(t)} ) = 
 ({\Cmat}_{\tilde\xvec\tilde\xvec}^{-1} \hspace{3cm}\nonumber\\+ \tilde\Gmat^T(\Lmat,\Vmat\hat{\tilde{\xvec}}^{(t)}) 
{\Cmat}_{\tilde\wvec\tilde\wvec}^{-1}
\tilde\Gmat(\Lmat,\Vmat\hat{\tilde{\xvec}}^{(t)})\big)^{-1} {\Cmat}_{\tilde\xvec\tilde\xvec}^{-1},
\nonumber
\eeqna
where we use the fact that the elements of $\Smat_{\hat{\tilde{\xvec}}\hat{\tilde{\xvec}}}^{(t)}$ 
are nonzero with probability 1. Similarly, it can be shown that under the conditions of this theorem, the graph filter from \eqref{filter2_noE} satisfies 
\beqna
f_2^{\text{sGSP-MAP}}(\Lambdamat,\hat{\xvec}^{(t)} ) = 
 ({\Cmat}_{\tilde\xvec\tilde\xvec}^{-1} \hspace{3.6cm}\nonumber\\+ \tilde\Gmat^T(\Lmat,\Vmat\hat{\tilde{\xvec}}^{(t)}) 
{\Cmat}_{\tilde\wvec\tilde\wvec}^{-1}
\tilde\Gmat(\Lmat,\Vmat\hat{\tilde{\xvec}}^{(t)})\big)^{-1} {\Cmat}_{\tilde\wvec\tilde\wvec}^{-1}
\tilde\Gmat(\Lmat,\Vmat\hat{\tilde{\xvec}}^{(t)}).\nonumber
\eeqna
By substituting these filters in \eqref{opt_estimator_graph} we obtain the MAP estimator in the graph frequency domain. Hence, the sGSP-MAP estimator also coincides with the MAP estimator.
\end{proof}

The following corollary presents a special case of  Theorem \ref{claim_coincides}.
\begin{corollary}\label{claim_graphical_Model}
The proposed eGFD-MAP, sGSP-MAP, and GSP-MAP estimators coincide with the MAP estimator
if Conditions \ref{cond1} and \ref{cond2} hold, and 
	\renewcommand{\theenumi}{C.\arabic{enumi}}
\begin{enumerate}
	 \setcounter{enumi}{2}
    \item\label{cond3} The measurement function, $\gvec(\Lmat,\xvec)$, is the output of  a linear graph filter as defined in \eqref{laplacian_graph_filter}, i.e.
    \be
    \label{g_filter1}
    \gvec(\Lmat,\xvec) = \Vmat f(\Lambdamat)\Vmat^T \xvec.
    \ee
\end{enumerate}
\end{corollary}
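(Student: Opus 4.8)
The plan is to recognize Corollary \ref{claim_graphical_Model} as a direct specialization of Theorem \ref{claim_coincides}. Since Conditions \ref{cond1} and \ref{cond2} are already assumed in the corollary, the only thing left to establish is that the linear-graph-filter form \eqref{g_filter1} of Condition \ref{cond3} forces the measurement function to have orthogonal graph frequencies in the sense of Definition \ref{orthogonal_frequencies}. Once that implication is in hand, the conclusion that the eGFD-MAP, sGSP-MAP, and GSP-MAP estimators all coincide with the MAP estimator follows verbatim from Theorem \ref{claim_coincides}, with no further work.

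First I would compute the GFT of the measurement function. Substituting \eqref{g_filter1} into $\tilde\gvec(\Lmat,\Vmat\tilde\xvec)=\Vmat^T\gvec(\Lmat,\xvec)$ and using $\Vmat^T\Vmat=\Imat$ together with $\Vmat^T\xvec=\tilde\xvec$ from \eqref{GFT} gives
\begin{equation*}
\tilde\gvec(\Lmat,\Vmat\tilde\xvec)=\Vmat^T\Vmat f(\Lambdamat)\Vmat^T\xvec=f(\Lambdamat)\tilde\xvec,
\end{equation*}
so that, because $f(\Lambdamat)$ is diagonal (cf. \eqref{laplacian_graph_filter}), the $n$th entry is $[\tilde\gvec(\Lmat,\Vmat\tilde\xvec)]_n=f(\lambda_n)\tilde{x}_n$, depending on $\tilde\xvec$ only through $\tilde{x}_n$. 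Then I would verify the right-hand side of the separability condition \eqref{separately_Model} by evaluating the same filter at the single-frequency input $\tilde{x}_n\vvec_n$: using orthonormality of the eigenvectors, $\Vmat^T\vvec_n=\evec_n$, I obtain $\gvec(\Lmat,\tilde{x}_n\vvec_n)=\Vmat f(\Lambdamat)\tilde{x}_n\evec_n=\tilde{x}_n f(\lambda_n)\vvec_n$, whence $[\tilde\gvec(\Lmat,\tilde{x}_n\vvec_n)]_n=f(\lambda_n)\tilde{x}_n$. This matches the expression for $[\tilde\gvec(\Lmat,\Vmat\tilde\xvec)]_n$ above, so Definition \ref{orthogonal_frequencies} holds. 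Equivalently, one may simply observe that the Jacobian of \eqref{g_filter1} is $\nabla_\xvec\gvec(\Lmat,\xvec)=\Vmat f(\Lambdamat)\Vmat^T$, which is diagonalized by $\Vmat$; by the remark following Definition \ref{orthogonal_frequencies}, this already implies orthogonal graph frequencies.

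There is no genuine obstacle here, as the argument is essentially immediate once the reduction to Theorem \ref{claim_coincides} is made. The only point requiring care is the single-frequency evaluation $\Vmat^T\vvec_n=\evec_n$, where orthonormality of the columns of $\Vmat$ is what makes all off-frequency components vanish and leaves a purely elementwise map. With Definition \ref{orthogonal_frequencies} verified and Conditions \ref{cond1} and \ref{cond2} assumed, I would finally invoke Theorem \ref{claim_coincides} directly to conclude that the three proposed estimators coincide with the MAP estimator, completing the proof.
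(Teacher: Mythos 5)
Your proposal is correct and follows essentially the same route as the paper: multiply \eqref{g_filter1} by $\Vmat^T$ to obtain $\tilde\gvec(\Lmat,\Vmat\tilde\xvec)=f(\Lambdamat)\tilde\xvec$, conclude that diagonality of $f(\Lambdamat)$ gives separability in the sense of Definition \ref{orthogonal_frequencies}, and then invoke Theorem \ref{claim_coincides} under Conditions \ref{cond1} and \ref{cond2}. Your explicit check of the single-frequency evaluation $[\tilde\gvec(\Lmat,\tilde{x}_n\vvec_n)]_n=f(\lambda_n)\tilde{x}_n$ merely spells out a step the paper leaves implicit, so there is no substantive difference.
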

\begin{proof}
By multiplying \eqref{g_filter1}  by $\Vmat^{T}$ and using the fact that $\Vmat$ is a unitary matrix, i.e. $\Vmat\Vmat^T = \Imat$, we obtain
\be
    \label{g_filter2}
    \tilde\gvec(\Lmat,\Vmat\tilde\xvec) = f(\Lambdamat) \tilde\xvec.
\ee
Since $f(\Lambdamat)$ is a diagonal matrix, we obtain that the measurement function satisfies  \eqref{separately_Model} in Definition \ref{orthogonal_frequencies}. Since, in addition, we assume that  Conditions \ref{cond1} and \ref{cond2}  of Theorem \ref{claim_coincides} hold,  this is a special case of Theorem \ref{claim_coincides}. Thus, the eGFD-MAP and GSP-MAP estimators coincide with the MAP estimator.
\end{proof}

The special case in Corollary \ref{claim_graphical_Model} fits the model behind
the graphical Wiener filter \cite{7891646,kroizer2021bayesian}, where
under Condition \ref{cond1}, the signal $ \xvec - \EX[\xvec]$ is a graph wide-sense stationary signal (see Definition 3 and Theorem 1 in \cite{7891646}).
Therefore, if the conditions of Corollary \ref{claim_graphical_Model} hold, an
 estimator that is obtained by minimizing \eqref{Q_def_GSP_incremental2} coincides with the graphical Wiener filter \cite{7891646}, the MAP estimator, the LMMSE estimator,   and the GSP-LMMSE estimator from \cite{kroizer2021bayesian}.
In this case,  the Gauss-Newton iterative approach converges in a single iteration with a step size $\alpha=1$, since $\gvec$ is a linear function (see Sec. 1.5.1 in \cite{BERTSEKAS_1999}).
\subsection{Computational complexity}\label{computational_complexity}
The computational complexity and the run-time of the proposed iterative estimators mainly depend on: 1) the total number of iterations until convergence of the estimator; 2) the total subiterations in the backtracking line search algorithm (Algorithm \ref{line_search_stages}); and 3) the matrix multiplications in the update step. If we assume that 1) and 2) are roughly similar among the different estimators (given that the parameters $\alpha^{(0)}$, $\gamma$, $\Delta$, and $\delta$ are the same among the estimators), then the differences in the complexity and run-time are due to the update steps of the different estimators. In addition, the computations of the inverse covariance matrices of $\xvec$ and $\wvec$ in the graph and graph-frequency domains, and the computation of the eigenvalue decomposition of the Laplacian matrix, which is of order $\mathcal{O}(N^3)$, can be done offline. Therefore, we do not consider them in the computational complexity. 

The update rule of the MAP estimator in \eqref{iteration_Newton2} consists of full matrix multiplications with a computational complexity of $\mathcal{O}(N^3)$, and inversion of an $N\times N$ full matrix, which has a complexity of $\mathcal{O}(N^3)$. This also holds for the implementation of the MAP estimator in the graph-frequency domain; however, as mentioned, one implementation of the MAP (in the graph or graph-frequency domain) may be more efficient than another. The computational complexity of the update steps of the sGSP-MAP and GSP-MAP estimators are similar to that of the MAP estimator, since the reconstruction of each filter in \eqref{filter1}, \eqref{filter2} and in \eqref{filter1_noE}, \eqref{filter2_noE} demands full matrix multiplications in the order of $\mathcal{O}(N^3)$, and inversion of a $N\times N$ matrix. The update step of the eGFD-MAP estimator in \eqref{27_old25} can be implemented in a vectorized form, and thus requires only $\mathcal{O}(N)$ multiplications, without the need to invert a matrix. In addition, in order to perform the update steps of the MAP and GSP-MAP estimators, it is necessary to calculate $N^2$ elements of the Jacobian matrix in \eqref{GGG} (or in \eqref{GGG_tilde}), compared to $N$ diagonal elements for the eGFD-MAP estimator in \eqref{27_old25}.
As the size of the network increases, the differences in the computational complexities of the estimators become more significant, so that in very large networks the MAP and GSP-MAP estimators may become intractable.
The total floating point operations (FLOPs) and order of the update steps of the different estimators is summarized in Table \ref{computational_complexity_tab}.
\begin{table}[hbt!]\centering
\begin{adjustbox}{max width=0.489\textwidth}
\renewcommand{\arraystretch}{2}
\begin{tabular}{|c|c|c|c|c|}
\hline
Estimator & MAP                & eGFD-MAP         & sGSP-MAP & GSP-MAP            \\ \hline
FLOPs     & $11N^{3}+2.5N^{2}+1.5N$ & $11N$ & $10N^3+10.5N^2+4N$ & $10N^3+6N^{2}+5N$ \\ \hline
Order     & $\mathcal{O}(N^3)$ & $\mathcal{O}(N)$ & $\mathcal{O}(N^3)$ & $\mathcal{O}(N^3)$\\ \hline
\end{tabular}
\end{adjustbox}
\caption{The FLOPs and order of complexity required for the update rules of the different estimators.}
\label{computational_complexity_tab}
 \vspace{-0.5cm}
\end{table}


\section{Simulation} \label{simulation}
In this section, we evaluate the performance of the different estimators for
a synthetic example in Subsection \ref{toy} and for the PSSE problem in Subsection \ref{seeting_sec}.
The estimators that are used in this section are:
\begin{itemize}
    \item MAP estimator, implemented by Algorithm \ref{algorithm_MAP},  while using \eqref{iteration_Newton2} (i.e. in the vertex domain); 
    \item  MAP estimator in the graph-frequency domain (MAP-FD), implemented by Algorithm \ref{algorithm_MAP};
    \item eGFD-MAP estimator, implemented by Algorithm
    \ref{algorithm_general};
        \item sGSP-MAP estimator, implemented by Algorithm \ref{new_alg} with Option A;
    \item GSP-MAP estimator, implemented by Algorithm \ref{new_alg} with Option B;
    \item LMMSE estimator and GSP-LMMSE estimator from \cite{kroizer2021bayesian}; In  Subsection \ref{toy}, the analytic-versions of LMMSE and GSP-LMMSE estimators are implemented, while
    in Subsections 
    \ref{seeting_sec} and \ref{sensativity2initialization} these estimators are implemented by their sample-mean version with $P=500$ training samples (i.e. where the intractable covariance matrices are replaced by the sample covariance matrices), as discussed in \cite{kroizer2021bayesian}. 
\end{itemize}
All iterative estimators (MAP, eGFD-MAP, sGSP-MAP, and GSP-MAP) were initialized by the GSP-LMMSE estimator,
unless written otherwise.
The performance of the different estimators is calculated by performing 10,000 Monte Carlo simulations for each scenario.

\subsection{Example A: Synthetic data - orthogonal graph frequencies}
\label{toy}
In this example we use random graphs that were generated by the Watts-Strogatz small-world graph
model \cite{Watts_Strogatz}, with different number of vertices, $N$, and a mean degree of $K=5$.
We evaluate the different estimators under the model from \eqref{Model} with the following nonlinear measurement function in the graph-frequency domain:
\begin{equation}\label{sep_func}
    [\tilde\gvec(\Lmat,\Vmat\tilde\xvec)]_{n} = \tilde\xvec_{n}^{3} \quad n=1,\ldots,N.
\end{equation}
We assume that the {\em{a-priori}} probability density function (pdf) of $\xvec$ is given by $\tilde{\xvec} \sim \pazocal{N}(\zerovec,\sigma_{\xvec}^{2}\Imat)$. The noise from \eqref{Model}, $\wvec$, is white Gaussian noise with $\Cmat_{\wvec\wvec}=\sigma_{\wvec}^{2}\Imat$. In the simulations, we use $\sigma_{\xvec}^{2}=0.5$ and $\sigma_{\wvec}^{2}=0.05$.
It can be verified that this nonlinear model has {\em{orthogonal graph frequencies}}, as defined in Definition \ref{orthogonal_frequencies}, and that the conditions of Theorem \ref{claim_coincides} are satisfied since $\tilde\xvec$ and $\tilde\wvec$ are white Gaussian noise signals.
Thus, according to Theorem \ref{claim_coincides} the MAP estimator coincides with the eGFD-MAP and the GSP-MAP estimators. 
Under this setting, the LMMSE and GSP-LMMSE estimators coincide and can be computed analytically by using  $\Cmat_{\yvec\yvec}=(\sigma_{\wvec}^{2}+15\sigma_{\xvec}^{6})\Imat$ and $\Cmat_{\xvec\yvec}=3\sigma_{\xvec}^{4}\Imat$. 
In this subsection, we also present the MAP-FD estimator in order to have a fair run-time comparison.

In Fig. \ref{fig:MSE_SYN} we present the normalized MSE (NMSE), i.e. the MSE divided by the number of vertices, $N$, of the different estimators versus  $N$.
In this case,
 the nonlinear estimators  (MAP, MAP-FD, eGFD-MAP, sGSP-MAP, and GSP-MAP estimators) significantly outperform the linear estimators (LMMSE and GSP-LMMSE), which have an almost constant NMSE for any $N$ around the value $0.208$ with a standard deviation of $0.001$.  Therefore, and due to resolution reasons, the linear estimators are omitted from this figure. 
Since in this case  the conditions  of Theorem \ref{claim_coincides} are satisfied, the MSEs of the nonlinear estimators (MAP, MAP-FD, eGFD-MAP, sGSP-MAP,
 and GSP-MAP estimators) are all equal, as expected from Theorem \ref{claim_coincides}.
 It can be seen that the MSE of the nonlinear estimators increases as  $N$ increases. 
 \vspace{-0.25cm}
 \begin{figure}[hbt]
  \centering
\includegraphics[width=0.65\linewidth]{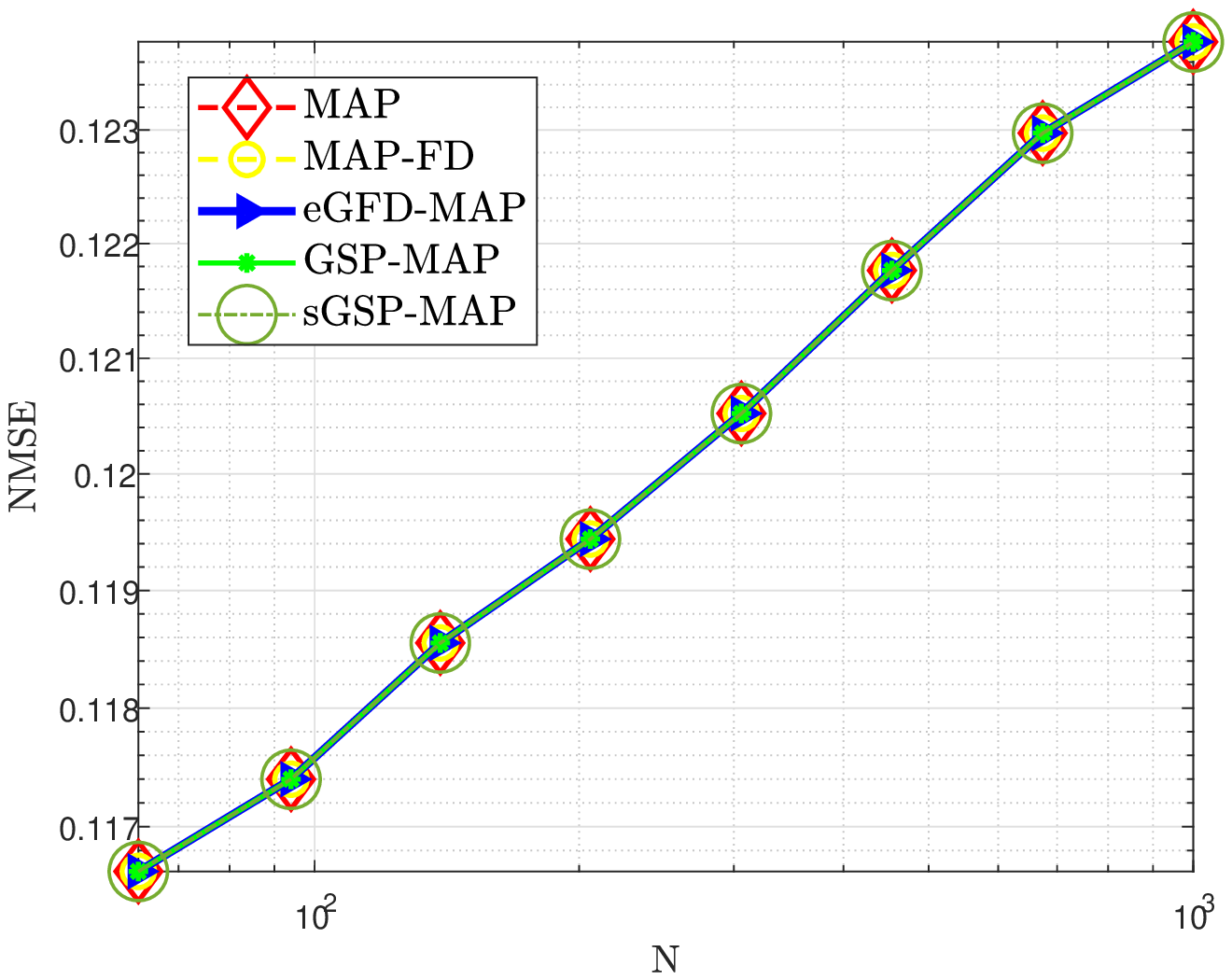}
  \caption{Example A: The NMSE of the different estimators versus  $N$. }
  \label{fig:MSE_SYN}
  \end{figure}

In Fig. \ref{fig:TIME_SYN}  we present the averaged run-time of any estimator till convergence versus $N$, which is evaluated using MATLAB on an Intel(R) Core(TM) i7-6800K CPU computer, 3.4 GHz. 
 It can be seen that the eGFD-MAP estimator, which does not require matrix inversion per iteration, has the lowest run-time. 
This complexity reduction becomes more evident as the dimension of
the network increases.
Hence, the eGFD-MAP estimator is a good alternative to the MAP estimator in cases where the measurement function is close to separable in the graph-frequency domain, i.e. has almost {\em{orthogonal graph frequencies}}. It can be seen that, in this case, implementation in the graph frequency domain has lower computational complexity, since the run-time of the MAP-FD estimator is lower than that of the MAP estimator. In addition, the run-time of the sGSP-MAP estimator is higher than the run-time of the GSP-MAP estimator and similar to that of the MAP estimator, since it involves the additional computation of two covariance matrices and the factors $\Smat_{\hat{\tilde{\xvec}}\hat{\tilde{\xvec}}}^{(t)}{\Cmat}_{\tilde\xvec\tilde\xvec}^{-1}$ and $\Smat_{\tilde{\wvec}\tilde{\wvec}}^{(t)}{\Cmat}_{\tilde\wvec\tilde\wvec}^{-1}
\tilde\Gmat(\Lmat,\Vmat\hat{\tilde{\xvec}}^{(t)})$  at each iteration.
The averaged run-time of the linear estimators (not shown in this figure due to resolution reasons) is significantly lower than the iterative estimators (around $1 \cdot 10^{-6}$ [sec]). This run-time would significantly increase in complex models where there is a need to compute the sample covariance matrices, as in the following subsection. 
\begin{figure}[hbt]
  \centering
\includegraphics[width=0.65\linewidth]{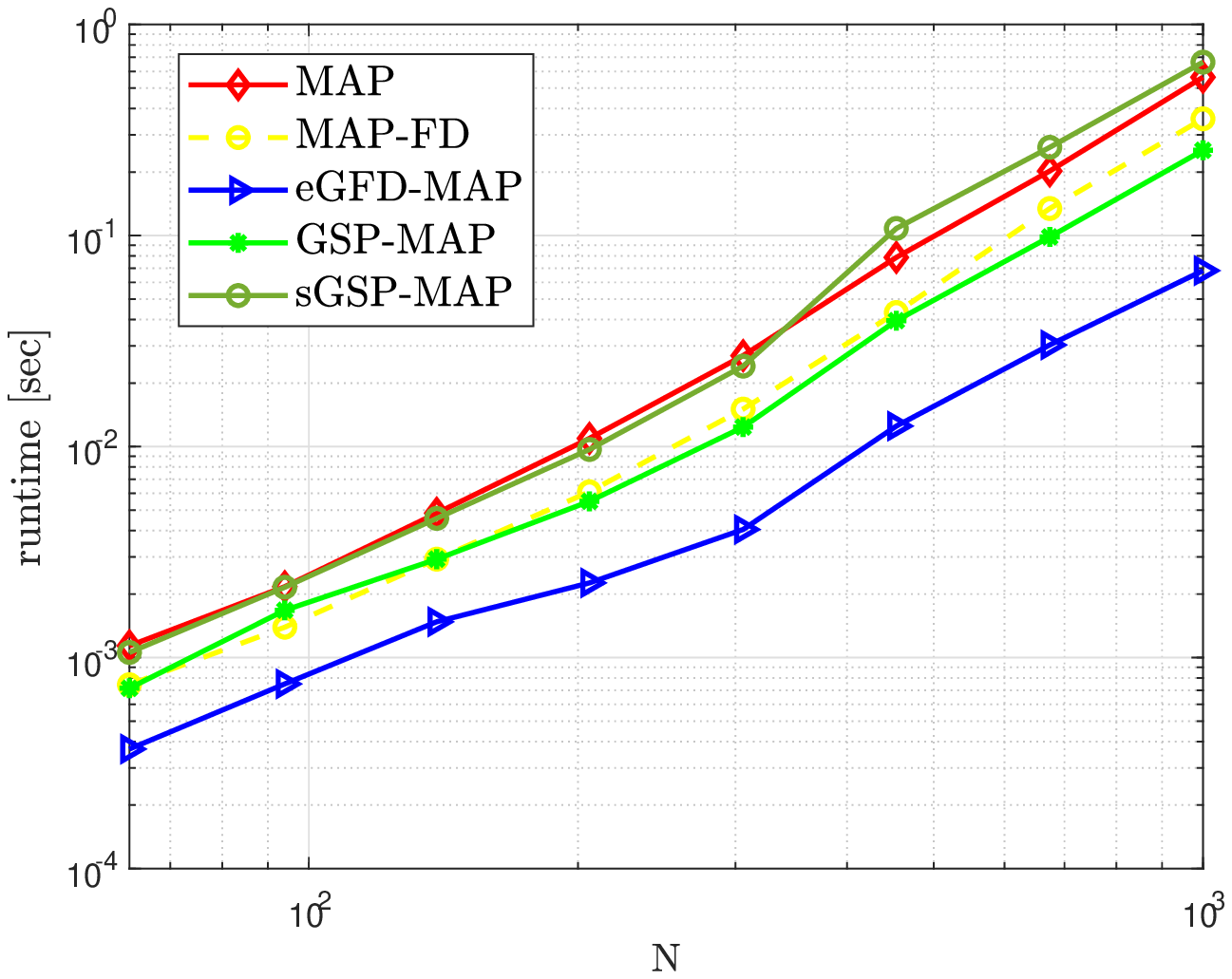}
  \caption{Example A: The time complexity of the different estimators versus  $N$.}
  \label{fig:TIME_SYN}
  \end{figure}
\subsection{Example B: PSSE in electrical networks}
\label{seeting_sec}
A power system can be represented as an undirected weighted graph, ${\pazocal{G}}({\pazocal{V}},\xi,\Wmat)$, where the set of vertices, $\pazocal{V}$, is the set of buses (generators/loads) and the edge set, $\xi$, is the set of transmission lines between these buses.
The measurement vector of the active powers at the buses, $\yvec$,
can be described by the model in \eqref{Model} with the  measurement function \cite{Abur_book}:
\begin{eqnarray} \label{g_AC}
\left[\gvec(\Lmat,\xvec)\right]_n\
 \define \sum\nolimits_{m=1}^N |v_n||v_m|(G_{n,m}\cos(x_n -x_{m})
\nonumber\\+B_{n,m}\sin(x_n -x_{m})),\hspace{1.5cm}
\end{eqnarray}
$n=1,\ldots,N$. Here, $x_n$ and $|v_n|$ are the voltage phase and amplitude at the $n$th bus,
and $G_{n,m}$ and $B_{n,m}$ are the conductance and susceptance of the transmission line between the buses $n$ and $m$ \cite{Abur_book}, where $(n,m)\in\xi$.
We assume that $|v_n|=1$, which is a common assumption in normalized power systems \cite{Abur_book}, and that $G_{n,m}$ and $B_{n,m}$ are all known. In the graph modeling of the electrical network, the Laplacian matrix, $\Lmat$, is usually constructed as follows (Subsection II-C in \cite{Grotas2019}):
\beqna
\label{L_B}
[\Lmat]_{n,k}=
\left\{
     \begin{array}{lr}
       -B_{n,k} , & k \neq n\\
    \sum_{m=1,~m\neq n}^{N} B_{n,m} , & k = n.  
     \end{array}\right.
\end{eqnarray}

The goal of PSSE is to recover the state vector, $\xvec$, from the power measurements,
$\yvec$, described by \eqref{Model} with the nonlinear measurement function 
$\gvec(\Lmat,\xvec)$ in \eqref{g_AC}.
This estimation task is known to be an NP-hard problem \cite{bienstock2019strong},
and is essential for various monitoring purposes 
\cite{Abur_book}.
PSSE is traditionally solved by iterative methods, 
where the Gauss-Newton method is a commonly-used choice for this task \cite{Abur_book,Monticelli_2000,Cosovic_Vukobratovic_2019}.
By substituting \eqref{g_AC} in \eqref{GGG}, we obtain that the associated Jacobian matrix as
\beqna
\label{Jacobian_g}
[\Gmat(\Lmat,\xvec)]_{n,k}= 
\frac{\partial \gvec_{n}(\Lmat,\xvec)}{\partial x_{k}}=
\hspace{4cm}
\nonumber\\\hspace{-0.65cm}\left\{
     \begin{array}{lr}
       G_{n,k} \sin(x_{n}-x_{k})-B_{n,k} \cos(x_{n}-x_{k}), & k \neq n\\
       \mathlarger{\sum}_{\underset{m\neq n}{m=1}}^{N} -G_{n,m} \sin(x_{n}-x_{m})+B_{n,m} \cos(x_{n}-x_{m}), & k = n  
     \end{array}\right.\hspace{-0.15cm}.\hspace{-0.1cm}
\end{eqnarray}
In particular, since $G_{n,k}=0$  and $B_{n,k}=0$ for any $(n,k) \notin \xi$, it can be seen that $[\Gmat(\Lmat,\xvec)]_{n,k}= 0$ for any $(n,k) \notin \xi$. 
It should be noted that, in the general case, \eqref{Jacobian_g} implies that the conditions of Theorem \ref{claim_coincides} are not satisfied.

The input graph signal, $\xvec$, has been shown to be smooth w.r.t. the graph \cite{drayer2018detection,dabush2021state}.
Therefore, we model the distribution of  $\xvec$ in the graph-frequency domain as a smooth, normal distribution \cite{Dong_Vandergheynst_2016,ramezani2019graph}, as defined in \eqref{x_distribution} in Appendix \ref{WLS_Interpretation}.
Since in this case study $\xvec$ represents phases, the value of $\beta$ in \eqref{x_distribution} is taken such that the probability that the elements of $|\xvec|$ are larger than $\pi$ is smaller than 0.01.
We assume that the covariance matrix of the noise,  $\wvec$, from the model in \eqref{Model} is 
$\Cmat_{\wvec\wvec}=\sigma_{\wvec}^2\Imat$.
The values of the different physical parameters in \eqref{g_AC}, e.g. the conductance and the susceptance matrices, and the voltages,  are all taken from the IEEE   118-bus test case 
\cite{iEEEdata}\footnote{We repeated the simulations in this example for the 57-bus test case \cite{iEEEdata}, where $N=57$, and obtained similar results. In this test case, not shown here, the GSP-MAP estimator outperforms the eGFD-MAP estimator.}, 
 which is a simple approximation of the American Electric Power system (in the U.S. Midwest) as of December 1962. This system has $N=118$ buses (vertices) with power measurements, contains generators, synchronous condensers, lines, transformers, and loads.
In Table \ref{hyperparameters} we present the hyperparameters used in this and in the following subsection.

\begin{table}[hbt!]\centering
\renewcommand{\arraystretch}{0.9}
\begin{tabular}{|c|c|c|c|}
\hline
Estimator & MAP & eGFD-MAP & GSP-MAP \\ \hline
$\alpha_{0}$         & 0.5     & 0.5          & 0.5     \\ \hline
$\gamma$         & 0.83     & 0.83          & 0.83     \\ \hline
$\delta$         & 0.1     & 0.1          & 0.1     \\ \hline
$\Delta$         & 0.01     & 0.01         & 0.01    \\ \hline
\end{tabular}
\caption{The hyperparameters used for the simulations in Subsections \ref{seeting_sec} and \ref{sensativity2initialization}.
}
\label{hyperparameters}
\end{table}
Finally, it can be seen that in the model in \eqref{g_AC}, there is an inherent ambiguity, and one can recover the phases in $\xvec$ only up to modulo $2\pi$ errors. Therefore,  in the following simulations, the error is presented in terms of normalized mean-squared-periodic-error (NMSPE)  \cite{Routtenberg_Tabrikian_Bayesian}:
\begin{equation}
\label{eq:rmspe}
    {\text{NMSPE}}(\xvec, \hat{\xvec}) = \frac{1}{N}{\rm{E}}[ \left({\text{mod}}_{2\pi}(\xvec-\hat{\xvec})\right)^2],
\end{equation}
where  $\text{mod}_{\pi}$ denotes the element-wise modulo operator. 
The units of the NMSPE are $[\text{rad}^{2}]$. 
Similarly, the periodic bias can be computed \cite{Routtenberg_Tabrikian_Bayesian}. However, due to space limitations, and since the bias is negligible for the simulations below, it is omitted from the results below. 
In this case Conditions 1) and 2) of Claim \ref{Q_lin_Q_freq_coincide} hold, and thus, the objective functions from \eqref{Q_def_GSP} and \eqref{Q_def_GSP_lin_incremental} coincide, which gives an advantage to the eGFD-MAP estimator over the GSP-MAP estimator.

In Fig. \ref{fig:Diff N} we present the NMSPE of the different estimators
versus the inverse of the noise variance, $\frac{1}{\sigma_{\wvec}^2}$, for the 118-bus case. 
It can be seen that the MAP, eGFD-MAP, sGSP-MAP, and  GSP-MAP estimators significantly outperform the linear estimators (LMMSE and GSP-LMMSE estimators) for $\sigma_w^2<1$.
As the noise variance, $\sigma_{\tilde\wvec}^2$, increases, the NMSPE of the LMMSE and the GSP-LMMSE achieved the NMSPE of the nonlinear estimators, since in the presence of significant measurement noise, 
all estimators are reduced to the prior-mean estimator, $\muvec_\xvec$, which is a linear estimator.
Therefore, this figure shows that we can achieve good performance with the low-complexity eGFD-MAP and with the GSP-MAP estimators even when the measurement function does not have orthogonal graph frequencies. This is noteworthy, considering the restriction of the output of graph filters of the GSP estimators.  The sGSP-MAP archives the performance of the MAP estimator since it shares the same objective function.
Furthermore, it can be seen that the eGFD-MAP estimator outperforms the GSP-MAP estimator in this figure. 
However, it should be noted that in other tested cases (not shown here), the GSP-MAP estimator outperforms the eGFD-MAP estimator.  The sGSP-MAP estimator achieves similar NMSPE to those of the MAP estimator, while being an output of graph filters. 
\begin{figure}[hbt]
\centering
{ \includegraphics[width=0.75\linewidth]{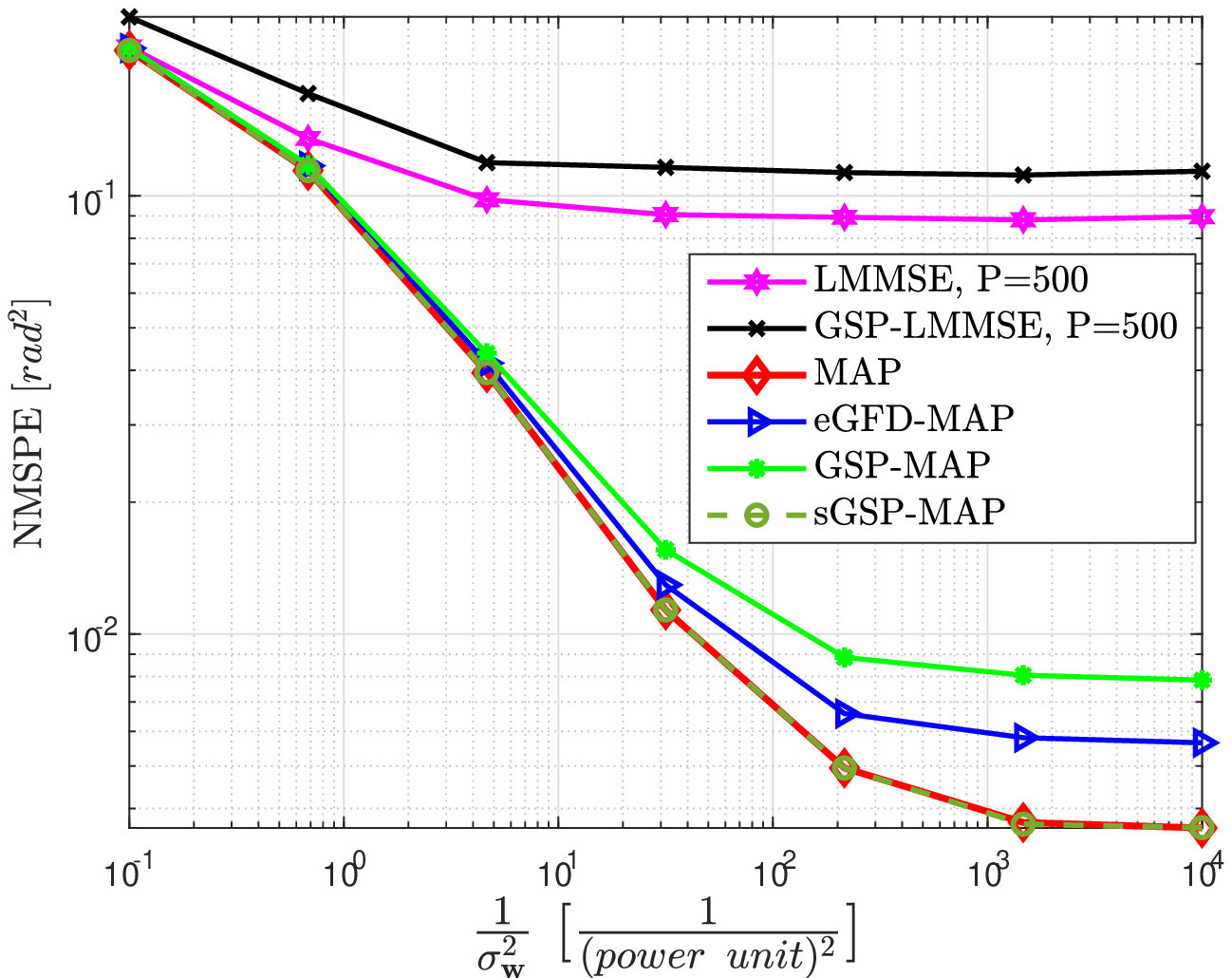}}
\caption{The NMSPE of the different estimators versus $\frac{1}{\sigma_{\wvec}^2}$ for 118-bus test case with $\beta = 3$.\label{fig:Diff N}}
\end{figure}

In order to examine the robustness of the iterative estimators under different phase distributions, and in particular under different levels of separability of the measurement function, we changed the value of $\beta$ from \eqref{x_distribution}, which directly affects the separability of the measurement function in the sense of Definition \ref{orthogonal_frequencies}.
As $\beta$ increases, the total variation of the signal $\xvec$, $\xvec^{T}\Lmat\xvec$, increases. 
From the distribution in \eqref{x_distribution},  and recalling that $\xvec=\Vmat\tilde\xvec$, the variance of $x_{i}$ can be expressed as
\begin{equation}\label{phase_var}
    \text{var}\big({x_{i}}\big)=\beta\ \sum\nolimits_{j=1}^{N} \frac{V^{2}_{i,j}}{\lambda_{j}}.
\end{equation}
Thus,  as $\beta$ decreases, the phases  are more concentrated around their mean, which is assumed to be the same for any $n=1,\ldots,N$.
Hence, as $\beta$ decreases, the difference $x_n-x_m$ is  smaller, and when it is sufficiently small, it is possible to use the first-order Taylor approximation: 
\begin{equation}
\label{approx}
    \sin(x_n-x_m)\approx x_n-x_m,~
    \cos(x_n-x_m)\approx 1.
\end{equation}
Therefore, by substituting \eqref{approx} in \eqref{g_AC} and taking the derivative w.r.t. $\xvec$, we obtain that the  Jacobian of $\gvec(\Lmat,\xvec)$ from \eqref{g_AC} can be approximated by
\begin{equation}
\label{G_L}
    \Gmat(\Lmat,\xvec) \approx  -\Lmat = \Vmat\Lambdamat\Vmat^{T},
\end{equation}
where we used \eqref{L_B} and the eigenvalue decomposition of $\Lmat$.
By multiplying \eqref{G_L} by $\Vmat^T$ and $\Vmat$ from the left and right, respectively, and by using \eqref{GGG_tilde}, one obtains
\begin{equation}\label{G_tilde_small_beta}
    \tilde\Gmat(\Lmat,\Vmat\tilde\xvec)  \approx -\Lambdamat, 
\end{equation}
i.e. $\tilde\Gmat(\Lmat,\Vmat\tilde\xvec)$ is a diagonal matrix.
Equation \eqref{G_tilde_small_beta} implies that as $\beta$ decreases, $\tilde\gvec(\Lmat,\Vmat\tilde\xvec)$ becomes more separable in the sense of Definition \ref{orthogonal_frequencies}. 
However, for a general $\beta$,  $\tilde\gvec(\Lmat,\Vmat\tilde\xvec)$ may not be separable as required by Definition \ref{orthogonal_frequencies}.

In Fig. \ref{fig:beta_118} we present the NMSPE versus the parameter $\beta$. It can be seen that
the NMSPE of the different estimators increases as $\beta$ increases, since $\beta$ is proportional to the variance of the unknown parameters. Moreover, the iterative estimators have significantly lower NMSPE than the linear estimators for most $\beta$ values. 
In addition, it can be seen that the MAP, eGFD-MAP, and sGSP-MAP estimators have NMSPE values that are similar to that of the GSP-MAP estimator until a certain value of $\beta$. Above this value, the GSP-MAP estimator outperforms the other estimators, as the eGFD-MAP estimator performance is degraded due to the fact that for large values of $\beta$, $\tilde\Gmat(\Lmat,\Vmat\tilde\xvec)$ cannot be approximated with small errors.
Finally, it should be noted that, in this case, the proposed estimators perform well even when the conditions of Theorem \ref{claim_coincides} are not satisfied.
\begin{figure}[hbt!]
\centering
{ \includegraphics[width=0.75\linewidth]{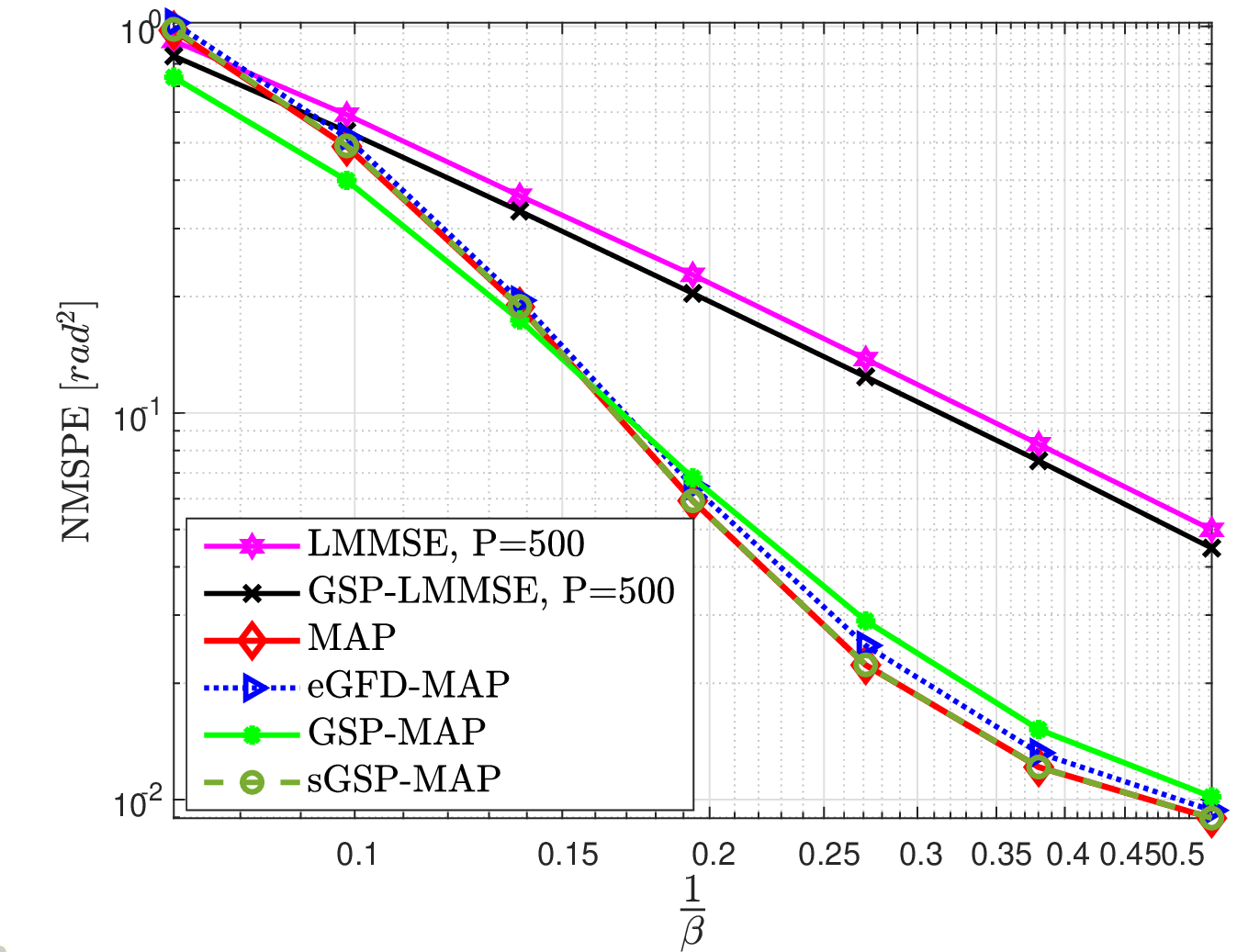}}
\caption{The NMSPE of the different estimators versus $\frac{1}{\beta}$  for 118-bus test case, where  ${\sigma}_{\wvec}^2=0.05$.\label{fig:beta_118}}
\end{figure}

\subsection{Sensitivity to initialization}\label{sensativity2initialization}
The implementation of the MAP estimator by the Gauss-Newton method is known to be sensitive to the initialization of the algorithm \cite{Fatemi_Svensson_Morelande2012}. In this subsection, we examine the robustness of the MAP, eGFD-MAP, sGSP-MAP, and GSP-MAP  estimators to perturbed initialization under the setting of the PSSE problem from Subsection \ref{seeting_sec}.
\subsubsection{Scenario I -  noisy initialization}
In the first scenario we use perturbed initialization, in which the estimators are initialized with
\begin{equation}
    \hat{\xvec}^{(0)} = \Vmat(\tilde\xvec_{0} + \tilde\pvec_{0}), ~~~\tilde\pvec_{0} \sim \pazocal{N}(\zerovec,\sigma_{p}^{2}\Imat),
\end{equation}
where $\tilde\xvec_{0}$ is the original initialization (i.e. the prior mean in the graph-frequency domain, $\tilde\muvec_{\xvec}$, or the GSP-LMMSE estimator in the graph frequency domain), and $\tilde\pvec_{0}$ is zero-mean Gaussian noise with covariance $\sigma_{p}^{2}\Imat$. Thus, as $\sigma_p$ increases, the initialization becomes more perturbed. For $\sigma_p\rightarrow 0$, i.e. no perturbation, we obtain the results from Subsection \ref{seeting_sec}.
The NMSPE of the estimators versus $\frac{1}{\sigma_{p}^{2}}$ is  presented   in Fig. 
\ref{fig:noise_INIT_118} for the 
118-bus test case with this perturbed initialization. 
It can be seen that the NMSPE decreases as $\sigma_p$ decreases, due to the influence of bad initialization. In addition, since the NMSPE from \eqref{eq:rmspe} is bounded, for large values of $\sigma_p$ all the estimators converge to the value of random estimation, $\frac{\pi^{2}}{{3}}$.
\begin{figure}[hbt!]
\centering
{ \includegraphics[width=0.75\linewidth]{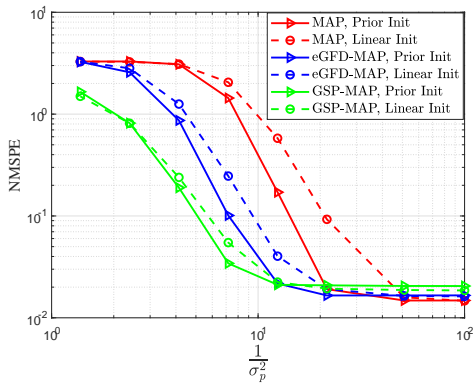}}
\caption{The NMSPE of the different estimators versus $\frac{1}{\sigma_{p}^{2}}$, where the noisy GSP-LMMSE and the noisy prior mean were used to initialize the estimators, where ${\sigma}_{\wvec}^2=0.05$ and $\beta=3$.}\label{fig:noise_INIT_118}
\end{figure}

\subsubsection{Scenario II -  Perturbed topology}
In the second scenario, the GSP-LMMSE and the LMMSE estimators were calculated under a change in the topology. In particular,  the sample-mean versions of these estimators were calculated under a misspecified model. While the true dataset was generated with a given topology, the linear estimators assume a different topology obtained from the original topology after removing $M$ edges. Then,  the misspecified sample-mean GSP-LMMSE estimator was used as the initial estimator for the iterative estimators (similar results were obtained by using the misspecified sample-mean LMMSE estimator). Thus, this scenario describes a perturbed initialization, which affects the initialization. 
The results are presented in  Fig. \ref{fig:MSE_PER_LIN_INIT_118} versus the number of removed edges, $M$. For $M= 0$, i.e. no misspecification in the topology that was used to initialize the algorithms, we obtain the results from Subsection \ref{seeting_sec}.
\begin{figure}[hbt!]
\centering
{ \includegraphics[width=0.75\linewidth]{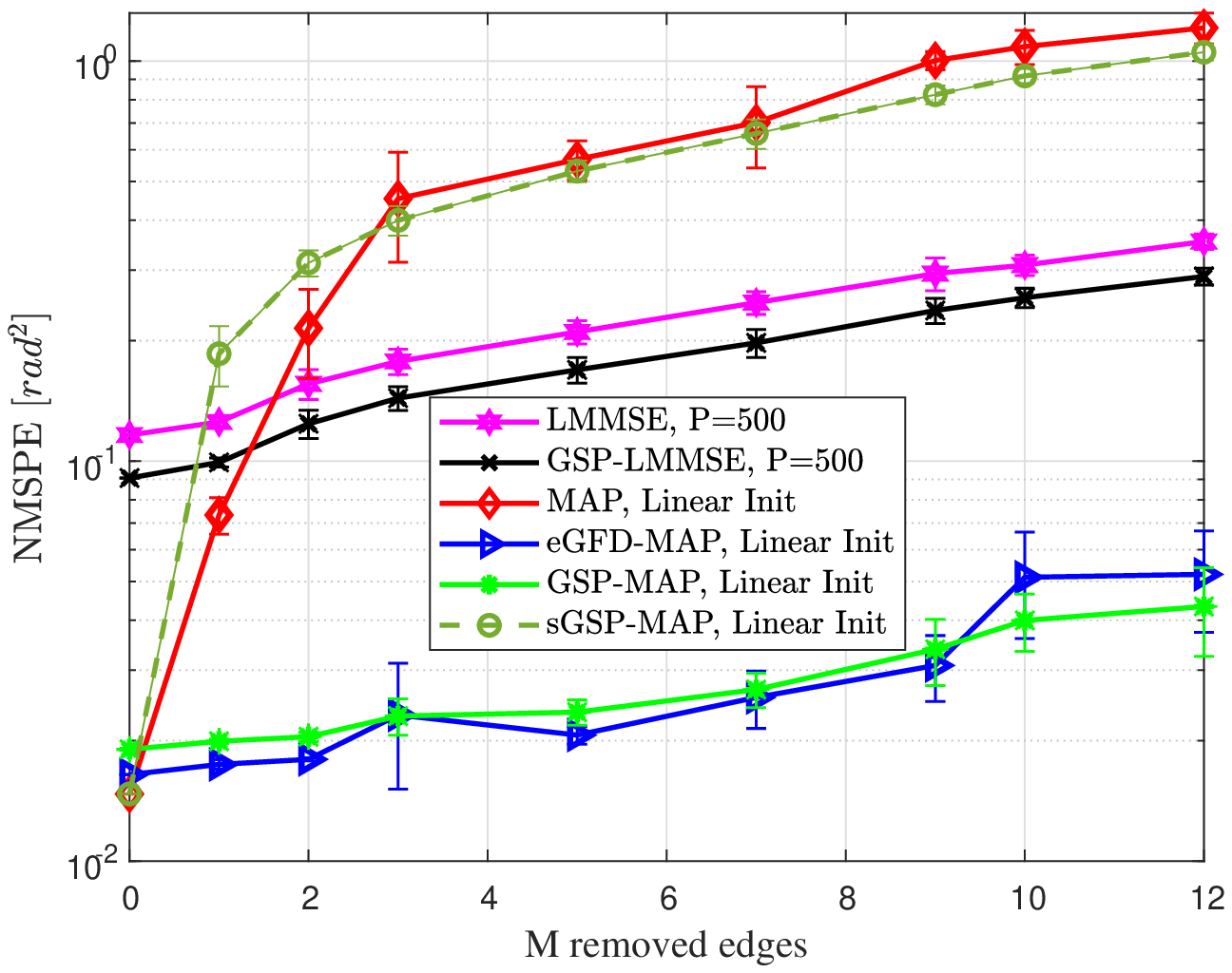}}
\caption{The NMSPE of the different estimators versus the number of removed edges $M$, where the linear estimators (that were used to initialize the iterative estimators) were perturbed by removing edges,  and ${\sigma}_{\wvec}^2=0.05$, $\beta=3$.}\label{fig:MSE_PER_LIN_INIT_118}
\end{figure}

\vspace{-0.2cm}
It can be seen in Figs. \ref{fig:noise_INIT_118} and \ref{fig:MSE_PER_LIN_INIT_118}, associated with Scenarios I and II, respectively, that the eGFD-MAP and GSP-MAP estimators are more robust to perturbed initialization than the sGSP-MAP and MAP estimators. Specifically, when the noise variance is small or the number of removed edges is low, the NMSPE of the MAP and the sGSP-MAP estimators are significantly larger than the NMSPE of the proposed eGFD-MAP and GSP-MAP estimators.
Thus, we can conclude that the methods that obtain superior robustness to graph perturbation leverage a combination of integrated graph structural information and utilization of the expectation of the MAP objective function. Moreover, the robustness observed in the low-complexity eGFD-MAP estimator can be attributed to the necessity of estimating fewer parameters and its function as a preconditioning approach, as elaborated upon in the concluding remarks of  Subsection \ref{eGFD-MAP_EST}.
When comparing the initialization methods, it can be seen that the eGFD-MAP and GSP-MAP estimators are more robust to the initialization for both initialization methods (GSP-LMMSE and prior mean). Thus, we can conclude that when the initialization is problematic, the proposed methods that use GSP information and preconditioning are preferable.  Furthermore, by comparing the robustness of the sGSP-MAP and GSP-MAP estimators in Figs. \ref{fig:noise_INIT_118} and \ref{fig:MSE_PER_LIN_INIT_118}, 
it can be seen that 
the restriction of the MAP iteration to be the output of two graph filters does not degrade the estimator performance, while the minimization of the expected objective function of the  GSP-MAP estimator enhances the resilience.     

\section{Conclusion}
\label{conclusion}
In this paper, we discuss the recovery of random graph signals from nonlinear measurements using a GSP-based MAP approach. We formulate the MAP estimator via the Gauss-Newton method implementation in both the vertex and the graph-frequency domains, 
which leads to the same estimator, but the efficiency and convergence rate of each implementation may be different. 
In order to accommodate the complexity and sensitivity to initialization of the Gauss-Newton MAP estimator, we develop the eGFD-MAP estimator that does not require a matrix inversion per iteration and updates the graph signal elements in the graph-frequency domain
independently. Thus, it can be interpreted as a preconditioning approach, which improves the convergence and robustness of the algorithm.
In addition, we derive the iterative sGSP-MAP and the GSP-MAP estimators with update equations that are composed of the output of two graph filters that are optimal in the sense of the non-expected and expected objective functions, respectively.
We show that when the measurement function is separable in the graph-frequency domain, and the input graph signal and the noise are uncorrelated, the proposed eGFD-MAP, sGSP-MAP, and  GSP-MAP estimators coincide with the MAP estimator. 

Our numerical simulations show that the proposed estimators outperform the linear estimators.
In addition, the eGFD-MAP and the GSP-MAP estimators achieve similar MSEs,  while the eGFD-MAP estimator has the lowest computational complexity, rendering its calculation tractable in large networks. The sGSP-MAP estimator almost achieves the performance of the MAP estimator, while keeping the GSP structure. However, it is sensitive and less robust to bad initialization.
Finally, we examine the sensitivity to initialization of the Gauss-Newton MAP estimator and show that the eGFD-MAP and the GSP-MAP estimators are significantly more robust to perturbed and noisy initialization.
Future work includes 
extension to dynamic systems \cite{10096261} and 
the implementation of the new GSP estimators with graph filter parameterization in order to increase the robustness to topology changes and to enable distributed implementations of the proposed estimators.


\appendices
	\renewcommand{\thesectiondis}[2]{\Alph{section}:}
	
\section{Interpretation of the MAP Estimation Problem as a Regularized WLS Problem}\label{WLS_Interpretation}
We consider the special case where the distribution of the input graph signal, $\xvec$, in the graph-frequency domain is a smooth, zero-mean Gaussian distribution \cite{Dong_Vandergheynst_2016,ramezani2019graph}:
\begin{equation} \label{x_distribution}
		\tilde{\xvec} \sim \pazocal{N}(\zerovec,\beta \Lambdamat^{\dagger}),
\end{equation}
where $\beta$ is the smoothness level.
In this case, $\muvec_\xvec=\zerovec$ and ${\Cmat}_{\xvec\xvec}=\beta \Lmat^\dagger$, where $(\cdot)^\dagger$ denotes the  pseudo-inverse operator. By substituting this prior in \eqref{Q_def}, we obtain
\beqna
\label{Q_def_smooth_prior}
Q(\xvec)=
\frac{1}{2}\beta \xvec^{T}\Lmat\xvec + \frac{1}{2}(\yvec-\gvec(\Lmat,\xvec))^{T}{\Cmat}_{\wvec\wvec}^{-1}(\yvec-\gvec(\Lmat,\xvec)).
\eeqna
 The left  term  on the r.h.s. of \eqref{Q_def_smooth_prior}, $\frac{1}{2}\beta \xvec^{T}\Lmat\xvec$, can be interpreted as a
regularization term,   which corresponds to assuming that the graph signal,
$\xvec$, is smooth on the graph.

If we take the model in \eqref{Model} where the prior information can be neglected (e.g. when ${\Cmat}_{\xvec\xvec}$ is significantly larger than ${\Cmat}_{\wvec\wvec}$ in the positive-definite matrix sense), then we can develop the WLS estimator instead of the MAP estimator, as considered, for example, in \cite{Sahu_Kar_Moura_Poor2016}. In this case, we do not have the first term in the objective functions (e.g. 
$\frac{1}{2}(\xvec-\muvec_\xvec)^{T}{\Cmat}_{\xvec\xvec}^{-1}(\xvec-\muvec_\xvec)$ in \eqref{Q_def}). However, we can add a Laplacian regularization term, $\mu \xvec^{T}\Lmat\xvec$, to obtain the regularized WLS problem in \eqref{Q_def_smooth_prior}.
Signal recovery with a 
Laplacian regularization term
  has been used in various applications 
 \cite{Zheng_Cai2011,Elmoataz_2008,dabush2021state}.

\section{Proof of Theorem \ref{claim2}}
\label{appendix_derivation_graph_filters_noE}
For the sake of simplicity, in this and in the following appendix we use the notation $||\avec||_{\Bmat}=\avec^T\Bmat\avec$ for any vector $\avec$ and a positive semi-definite matrix $\Bmat$. In addition, we use the short notations $f_i(\Lmat)$, $f_i(\Lambdamat)$, $i=1,2$, without writing explicitly the dependency on $\hat{{\xvec}}^{(t)}$.

By substituting \eqref{opt_estimator}
 in \eqref{Q_def_approx}, we obtain
 \begin{eqnarray}
\label{LMMSE_GSP_o2}
 Q_{lin}\Big(\hat{\xvec}^{(t+1)},\hat{\xvec}^{(t)}\Big)=
 Q_{lin}\Big(\hat{\xvec}^{(t)}
	 +f_1(\Lmat)(\hat{{\xvec}}^{(t)}- {\muvec}_\xvec)
	 \nonumber\\
	+f_2(\Lmat) (\yvec-\gvec(\Lmat,\hat{\xvec}^{(t)})),\hat{{\xvec}}^{(t)}\Big)\hspace{2.4cm}\nonumber\\=
\frac{1}{2}\left|\left|(\Imat+
	 f_1(\Lmat))(\hat{{\xvec}}^{(t)}- {\muvec}_\xvec)\right.\right.\hspace{2.7cm}
	\nonumber\\\left.\left.
+f_2(\Lmat) (\yvec-\gvec(\Lmat,\hat{\xvec}^{(t)}))\right|\right|_{{\Cmat}_{\xvec\xvec}^{-1}}\hspace{2.5cm}\nonumber\\ + \frac{1}{2}\left|\left|-\Gmat(\Lmat,\hat{{\xvec}}^{(t)}) f_1(\Lmat)(\hat{{\xvec}}^{(t)}- {\muvec}_\xvec)\right.\right.\hspace{1.6cm}
	\nonumber\\\left.\left.
	+(\Imat-\Gmat(\Lmat,\hat{{\xvec}}^{(t)})f_2(\Lmat)) (\yvec-\gvec(\Lmat,\hat{\xvec}^{(t)})))\right|\right|_{{\Cmat}_{\wvec\wvec}^{-1}}
\nonumber\\
=\frac{1}{2}\left|\left|(\Imat+
	 f_1(\Lambdamat))(\hat{\tilde{\xvec}}^{(t)}- {\tilde\muvec}_\xvec)\right.\right.\hspace{2.75cm}
	\nonumber\\\left.\left.
	+f_2(\Lambdamat) (\tilde\yvec-\tilde\gvec(\Lmat,\Vmat\hat{\tilde\xvec}^{(t)}))\right|\right|_{{\Cmat}_{\tilde\xvec\tilde\xvec}^{-1}}\hspace{2.25cm}\nonumber\\ + \frac{1}{2}\left|\left| -\tilde\Gmat(\Lmat,\Vmat\hat{\tilde{\xvec}}^{(t)}) f_1(\Lambdamat)(\hat{\tilde{\xvec}}^{(t)}- \tilde{\muvec}_\xvec)\right.\right.\hspace{1.3cm}
	\nonumber\\\left.\left.
	+(\Imat-\tilde\Gmat(\Lmat,\Vmat\hat{\tilde{\xvec}}^{(t)})f_2(\Lambdamat)) (\tilde\yvec-\tilde\gvec(\Lmat,\Vmat\hat{\tilde\xvec}^{(t)}))\right|\right|_{{\Cmat}_{\tilde\wvec\tilde\wvec}^{-1}},
\end{eqnarray}
where the last equality is in the graph-frequency domain, which holds
since $\Vmat\Vmat^T = \Imat$.
By equating the derivative of \eqref{LMMSE_GSP_o2} w.r.t. $f_1(\Lambdamat)$ and $f_2(\Lambdamat)$ to zero, and using the trace operator properties, as well as the derivatives w.r.t. a diagonal matrix (see Eq. (142) in \cite{petersen2008matrix}), we obtain
\beqna
\label{der_wrt_f1}
\text{diag}\Big(\Smat_{\hat{\tilde{\xvec}}\hat{\tilde{\xvec}}}^{(t)}{\Cmat}_{\tilde\xvec\tilde\xvec}^{-1}
-\tilde\Gmat^T(\Lmat,\Vmat\hat{\tilde{\xvec}}^{(t)}) 
{\Cmat}_{\tilde\wvec\tilde\wvec}^{-1}\Smat_{\tilde{\wvec}\hat{\tilde{\xvec}}}^{(t)}\Big)
\nonumber\\
+\Big(\Smat_{\hat{\tilde{\xvec}}\hat{\tilde{\xvec}}}^{(t)}\circ
\big({\Cmat}_{\tilde\xvec\tilde\xvec}^{-1}
+\Amat_{0}\big)\Big)\text{diag}\big(f_1^*(\Lambdamat)\big)
\hspace{1.05cm}\nonumber\\
+
\Big(\Smat_{\tilde{\wvec}\hat{\tilde{\xvec}}}^{(t)}\circ\big({\Cmat}_{\tilde\xvec\tilde\xvec}^{-1}
+\Amat_{0}\big)\Big)\text{diag}\big(f_2^*(\Lambdamat)\big)=\zerovec,\hspace{0.2cm}
\eeqna
and
\beqna
\label{der_wrt_f2}
\text{diag}\Big(\Smat_{\tilde{\wvec}\hat{\tilde{\xvec}}}^{(t)}{\Cmat}_{\tilde\xvec\tilde\xvec}^{-1}
-\Smat_{\tilde{\wvec}\tilde{\wvec}}^{(t)}
{\Cmat}_{\tilde\wvec\tilde\wvec}^{-1}\tilde\Gmat(\Lmat,\Vmat\hat{\tilde{\xvec}}^{(t)}) 
\Big)
\nonumber\\
+\Big(\Smat_{\tilde{\wvec}\hat{\tilde{\xvec}}}^{(t)}\circ\big({\Cmat}_{\tilde\xvec\tilde\xvec}^{-1}
+\Amat_{0}\big)\Big)\text{diag}\big(f_1^*(\Lambdamat)\big)
\hspace{0.95cm}\nonumber\\
+
\Big(
\Smat_{\tilde{\wvec}\tilde{\wvec}}^{(t)}\circ\big({\Cmat}_{\tilde\xvec\tilde\xvec}^{-1}
+\Amat_{0}\big)\Big)\text{diag}\big(f_2^*(\Lambdamat)\big)=\zerovec,\hspace{0.1cm}
\eeqna
where 
$\Smat_{\hat{\tilde{\xvec}}\hat{\tilde{\xvec}}}^{(t)}$, $\Smat_{\tilde{\wvec}\tilde{\wvec}}^{(t)}$, and  $\Smat_{\tilde{\wvec}\hat{\tilde{\xvec}}}^{(t)}$ are defined in \eqref{Smat_x}, \eqref{Smat_y}, and \eqref{Smat_yx}, respectively, 
{\textcolor{black}{and 
\be
\label{Adef}
\Amat_{0}\define\tilde\Gmat^T(\Lmat,\Vmat\hat{\tilde{\xvec}}^{(t)}) 
	{\Cmat}_{\tilde\wvec\tilde\wvec}^{-1}
	\tilde\Gmat(\Lmat,\Vmat\hat{\tilde{\xvec}}^{(t)}).
	\ee}}
The results in \eqref{der_wrt_f1} and \eqref{der_wrt_f2} consist of a linear equation system of 
 $\text{diag}(f_i(\Lambdamat))$, $i=1,2$, and thus, 
 we obtain
\beqna
\label{filter1_noE_proof}
f_1^*(\Lambdamat,\hat{\xvec}^{(t)} ) = 
{\text{diag}}\bigg(\Big[\Smat_{\hat{\tilde{\xvec}}\hat{\tilde{\xvec}}}^{(t)}
\circ ({\Cmat}_{\tilde\xvec\tilde\xvec}^{-1} + \Amat_{0})+\Amat_1\Big]^{-1}\hspace{0.75cm}\nonumber\\
 \times \Big[\avec_2 - {\text{diag}}\big(\Smat_{\hat{\tilde{\xvec}}\hat{\tilde{\xvec}}}^{(t)}{\Cmat}_{\tilde\xvec\tilde\xvec}^{-1} -\tilde\Gmat(\Lmat,\Vmat\hat{\tilde{\xvec}}^{(t)})^{T}{\Cmat}_{\tilde\wvec\tilde\wvec}^{-1}\Smat_{\tilde{\wvec}\hat{\tilde{\xvec}}}^{(t)}\big)\Big]\bigg)
\eeqna
and
\beqna
\label{filter2_noE_proof}
f_2^*(\Lambdamat,\hat{\xvec}^{(t)} )=
-{\text{diag}}\bigg(\Big[\Smat_{\tilde{\wvec}\tilde{\wvec}}^{(t)} \circ \big({\Cmat}_{\tilde\xvec\tilde\xvec}^{-1}  + \Amat_{0}\big)\Big]^{-1}\hspace{1cm}
\nonumber\\
\times\Big[{\text{diag}}\Big(\Smat_{\tilde{\wvec}\hat{\tilde{\xvec}}}^{(t)}{\Cmat}_{\tilde\xvec\tilde\xvec}^{-1} - \Smat_{\tilde{\wvec}\tilde{\wvec}}^{(t)} {\Cmat}_{\tilde\wvec\tilde\wvec}^{-1}
\tilde\Gmat(\Lmat,\Vmat\hat{\tilde{\xvec}}^{(t)})\Big) + \avec_3\Big]\bigg),
\eeqna
where
\beqna
\label{AAA}
\Amat_1 = -
\Smat_{\hat{\tilde{\wvec}}\tilde{\xvec}}^{(t)}\circ\big({\Cmat}_{\tilde\xvec\tilde\xvec}^{-1} + \Amat_{0}\big)
\Smat_{\tilde{\wvec}\tilde{\wvec}}^{(t)}\circ\big({\Cmat}_{\tilde\xvec\tilde\xvec}^{-1}  + \Amat_{0}\big)\nonumber
\\
\times
\Smat_{\hat{\tilde{\wvec}}\tilde{\xvec}}^{(t)} \circ\big({\Cmat}_{\tilde\xvec\tilde\xvec}^{-1}  + \Amat_{0}\big),\hspace{2.65cm}
\\
\label{BBB}
\avec_2 =
\Smat_{\tilde{\wvec}\hat{\tilde{\xvec}}}^{(t)}
\Big(\Smat_{\tilde{\wvec}\tilde{\wvec}}^{(t)}\nonumber\circ\big({\Cmat}_{\tilde\xvec\tilde\xvec}^{-1} + \Amat_{0}\big)\Big)^{-1} \hspace{1.6cm}\\ \times
\text{diag}\Big(\Smat_{\tilde{\wvec}\hat{\tilde{\xvec}}}^{(t)}{\Cmat}_{\tilde\xvec\tilde\xvec}^{-1} - \Smat_{\tilde{\wvec}\tilde{\wvec}}^{(t)} {\Cmat}_{\tilde\wvec\tilde\wvec}^{-1}
\tilde\Gmat(\Lmat,\Vmat\hat{\tilde{\xvec}}^{(t)})
\Big)\hspace{-0.5cm}
\\
\label{DDD}
\avec_3 =\Smat_{\tilde{\wvec}\hat{\tilde{\xvec}}}^{(t)}\nonumber \circ\big({\Cmat}_{\tilde\xvec\tilde\xvec}^{-1} + \Amat_{0}\big)\hspace{3.1cm}\\\nonumber \times
\bigg[\Smat_{\hat{\tilde{\xvec}}\hat{\tilde{\xvec}}}^{(t)}\circ\big({\Cmat}_{\tilde\xvec\tilde\xvec}^{-1}  + \Amat_{0}\big)+\Amat_1 \bigg]^{-1}
\hspace{1.5cm}\\\nonumber\times
\bigg[\Smat_{\tilde{\wvec}\hat{\tilde{\xvec}}}^{(t)}
\circ\big({\Cmat}_{\tilde\xvec\tilde\xvec}^{-1} + \Amat_{0}\big)
\Big(\Smat_{\tilde{\wvec}\tilde{\wvec}}^{(t)}
\circ\big({\Cmat}_{\tilde\xvec\tilde\xvec}^{-1} + \Amat_{0}\big)\Big)^{-1}\hspace{-0.65cm}\\\nonumber\times
\text{diag}\Big(\Smat_{\tilde{\wvec}\hat{\tilde{\xvec}}}^{(t)}{\Cmat}_{\tilde\xvec\tilde\xvec}^{-1}-
\Smat_{\tilde{\wvec}\tilde{\wvec}}^{(t)}
{\Cmat}_{\tilde\wvec\tilde\wvec}^{-1}
\tilde\Gmat(\Lmat,\Vmat\hat{\tilde{\xvec}}^{(t)})\Big)\hspace{-0.3cm}\\-\text{diag}\Big(\Smat_{\hat{\tilde{\xvec}}\hat{\tilde{\xvec}}}^{(t)}{\Cmat}_{\tilde\xvec\tilde\xvec}^{-1} - \tilde\Gmat(\Lmat,\Vmat\hat{\tilde{\xvec}}^{(t)})^{T}{\Cmat}_{\tilde\wvec\tilde\wvec}^{-1} 
\Smat_{\tilde{\wvec}\hat{\tilde{\xvec}}}^{(t)}\Big)
\bigg].\hspace{-0.65cm}
\eeqna
By substituting 
the approximation $\Smat_{\hat{\tilde{\xvec}}\tilde{\wvec}}^{(t)}\approx\zerovec$
in \eqref{AAA}, \eqref{BBB}, and \eqref{DDD}, we obtain $\Amat_1=\zerovec
\in\mathbb{R}^{N \times N}$ and $\avec_2=\avec_3=\zerovec\in\mathbb{R}^{N}$.
By substituting these values and $\Smat_{\hat{\tilde{\xvec}}\tilde{\wvec}}^{(t)}\approx\zerovec$ in \eqref{filter1_noE_proof} and \eqref{filter2_noE_proof}, and using the notation $f_i^{\text{sGSP-MAP}}(\Lambdamat,\hat{\xvec}^{(t)} )$ instead of $f_i^*(\Lambdamat )$, $i=1,2$, we obtain the results in
\eqref{filter1_noE} and \eqref{filter2_noE}.

\section{Proof of Theorem \ref{claim1}}
\label{appendix_derivation_graph_filters}
In this appendix 
our goal is to choose the filters $f_1(\cdot)$ and $f_2(\cdot)$ such that the objective function
 $ Q_{lin}^{approx}$ is minimized on  average. 
The main course of the proof is composed of two steps; in the first step, it is shown that under the theorem conditions, 
we can replace the minimization of \eqref{obj1} by the minimization of
 \begin{eqnarray}
\label{LMMSE_GSP_o3_try2}
{\rm{E}}[  Q_{lin}^{approx}(\tilde\xvec,\hat{{\tilde\xvec}}^{(t)})] \hspace{4.75cm}
\nonumber\\=
\frac{1}{2}	{\text{trace}}\left({\Cmat}_{\tilde\xvec\tilde\xvec}(\Imat+
	 f_1(\Lambdamat)){\Cmat}_{\tilde\xvec\tilde\xvec}^{-1}((\Imat
	+ f_1(\Lambdamat))\right)    \nonumber\\ 
+\frac{1}{2}	{\text{trace}}\left(
{\Cmat}_{\tilde\wvec\tilde\wvec}
f_2(\Lambdamat) {\Cmat}_{\tilde\xvec\tilde\xvec}^{-1}f_2(\Lambdamat)\right)
\hspace{1.5cm}
\nonumber\\
+\frac{1}{2}	{\text{trace}}\left({\Cmat}_{\tilde\xvec\tilde\xvec}f_1(\Lambdamat)
 \Amat_{0} f_1(\Lambdamat)\right)\hspace{2.1cm}
	\nonumber\\
	+	{\text{trace}}\left(\frac{1}{2}\Imat-\tilde\Gmat(\Lmat,\Vmat\hat{\tilde{\xvec}}^{(t)})f_2(\Lambdamat)\right) \hspace{1.25cm}
\nonumber\\
	+\frac{1}{2}	{\text{trace}}\left({\Cmat}_{\tilde\wvec\tilde\wvec}f_2(\Lambdamat)\Amat_{0} f_2(\Lambdamat)\right), \hspace{1.85cm}
\eeqna
where ${\mathcal{D}}_N$ is the set of diagonal matrices of size $N \times N$ and
$\Amat_{0}$ is defined in \eqref{Adef}.
	In the second step, it is shown that this minimization w.r.t. the graph filters results in \eqref{filter1} and \eqref{filter2}.

First, we note that based on the assumption that $\hat{\xvec}^{(t)}$ is close enough to $\xvec$,  we replace $\tilde\gvec(\Lmat,\Vmat\hat{\tilde\xvec}^{(t)}))$ by $\tilde\gvec(\Lmat,\Vmat\tilde\xvec)$, and 
$\hat{\tilde{\xvec}}^{(t)}- \tilde{\muvec}_\xvec$ by
$\hat{\tilde{\xvec}}- \tilde{\muvec}_\xvec$ in \eqref{LMMSE_GSP_o2}, but keep $\tilde\Gmat(\Lmat,\Vmat\hat{\tilde{\xvec}}^{(t)})$ evaluated at $\hat{\tilde{\xvec}}^{(t)}$ (similar to  the rationale behind the first-order approximation in the Gauss-Newton method \cite{Bell_Cathey1993}), to obtain the approximation
 \begin{eqnarray}
\label{LMMSE_GSP_o2_approx}
 Q_{lin}^{approx}(\tilde{\xvec},\hat{\tilde\xvec}^{(t)})=\hspace{5.25cm}\nonumber\\
\frac{1}{2}\left|\left|(\Imat+
	 f_1(\Lambdamat)({\tilde{\xvec}}- {\tilde\muvec}_\xvec)
	+f_2(\Lambdamat) (\tilde\yvec-\tilde\gvec(\Lmat,\Vmat\hat{\tilde\xvec}))\right|\right|_{{\Cmat}_{\tilde\xvec\tilde\xvec}^{-1}}\nonumber\\ + \frac{1}{2}\left|\left| -\tilde\Gmat(\Lmat,\Vmat\hat{\tilde{\xvec}}^{(t)}) f_1(\Lambdamat)({\tilde{\xvec}}- \tilde{\muvec}_\xvec)\right.\right.\hspace{3cm}
	\nonumber\\\left.\left.
	+(\Imat-\tilde\Gmat(\Lmat,\Vmat\hat{\tilde{\xvec}}^{(t)})f_2(\Lambdamat)) (\tilde\yvec-\tilde\gvec(\Lmat,\Vmat{\tilde\xvec}))\right|\right|_{{\Cmat}_{\tilde\wvec\tilde\wvec}^{-1}}.
\eeqna
Thus,  we minimize  the expected objective function from \eqref{LMMSE_GSP_o2_approx}, under the assumption that $\tilde{\xvec}$ and $\tilde{\wvec}$ are independent, and treating $\tilde\Gmat(\Lmat,\Vmat\hat{\tilde{\xvec}}^{(t)})$ as a deterministic matrix:
 \begin{eqnarray}
\label{LMMSE_GSP_o3_try}
\{f_1^*(\Lambdamat ),f_2^*(\Lambdamat )\}=
 \arg\hspace{-0.2cm}\min_{f_1(\Lambdamat ) ,f_2(\Lambdamat ) \in {\mathcal{D}}_N}\hspace{-0.2cm} {\rm{E}}[  Q_{lin}^{approx}(\tilde{\xvec},\hat{\tilde\xvec}^{(t)})],
 \eeqna
 where ${\rm{E}}[  Q_{lin}^{approx}(\tilde{\xvec},\hat{\tilde\xvec}^{(t)})]$ is defined in \eqref{LMMSE_GSP_o3_try2}.
Since the minimization is separable w.r.t. $f_1$ and $f_2$, we can solve it independently. 
Thus,
 \begin{eqnarray}
 \label{f1_term}
f_1^*(\Lambdamat )=
 \arg\min_{f_1(\Lambdamat ) \in {\mathcal{D}}_N}
 	\sum_{n=1}^N [f_1(\Lambdamat)]_{n,n}
\hspace{1.5cm}\nonumber\\
+
\frac{1}{2}	\sum_{n=1}^N \sum_{k=1}^N 
[{\Cmat}_{\tilde\xvec\tilde\xvec}]_{n,k} [{\Cmat}_{\tilde\xvec\tilde\xvec}^{-1}]_{k,n} [f_1(\Lambdamat)]_{k,k}
[f_1(\Lambdamat))]_{n,n}
\nonumber\\
+
\frac{1}{2}	\sum_{n=1}^N\sum_{k=1}^N
[{\Cmat}_{\tilde\xvec\tilde\xvec}]_{n,k}[\Amat_{0}]_{k,n}
[f_1(\Lambdamat)]_{k,k}
 [ f_1(\Lambdamat)]_{n,n},
	\hspace{0.4cm}\end{eqnarray}
where $\Amat_{0}$ is defined in \eqref{Adef}.
By equating the derivative of \eqref{f1_term} w.r.t. $[f_1(\Lambdamat)]_{l,l}$ to zero, one obtains
 \begin{eqnarray}
 \label{derivative1}
0=	1+\sum_{n=1}^N
[{\Cmat}_{\tilde\xvec\tilde\xvec}]_{n,l}
[{\Cmat}_{\tilde\xvec\tilde\xvec}^{-1}]_{l,n}
	 [f_1^*(\Lambdamat)]_{n,n}
\hspace{0.7cm}\nonumber\\
+	\sum_{n=1}^N
[{\Cmat}_{\tilde\xvec\tilde\xvec}]_{n,l}[\Amat_{0}]_{l,n}
 [ f_1^*(\Lambdamat)]_{n,n}, ~\forall l=1,\ldots,N,
	\eeqna
	which results in
	 \begin{eqnarray}
	 \label{f1opt}
{\text{diag}}(f_1^*(\Lambdamat))=-
({\Cmat}_{\tilde\xvec\tilde\xvec}
\circ {\Cmat}_{\tilde\xvec\tilde\xvec}^{-1}
+{\Cmat}_{\tilde\xvec\tilde\xvec}
\circ \Amat_{0})^{-1}
\onevec.
\end{eqnarray}
By applying the diag operator on both sides of \eqref{f1opt}, substituting \eqref{Adef}, 
and using the notation $f_1^{\text{GSP-MAP}}(\Lambdamat,\hat{\xvec}^{(t)} )$ instead of $f_1^*(\Lambdamat )$,  we obtain
the graph filter in \eqref{filter1}.

Similarly, the optimization w.r.t. the filter $f_2(\Lambdamat )$ is 
 \begin{eqnarray}
 \label{optf2}
f_2^*(\Lambdamat )=
 \arg\min_{f_2(\Lambdamat ) \in {\mathcal{D}}_N}\hspace{2.9cm}\nonumber\\
 \frac{1}{2}	\sum_{n=1}^N\sum_{k=1}^N
[{\Cmat}_{\tilde\wvec\tilde\wvec}]_{n,k}[{\Cmat}_{\tilde\xvec\tilde\xvec}^{-1}]_{k,n}
[f_2(\Lambdamat)]_{k,k}
 [ f_2(\Lambdamat)]_{n,n}
 \nonumber\\
 -\sum_{n=1}^N [\tilde\Gmat(\Lmat,\Vmat\hat{\tilde{\xvec}}^{(t)})]_{n,n}[f_2(\Lambdamat)]_{n,n}
\hspace{2.3cm}\nonumber\\
+
\frac{1}{2}	\sum_{n=1}^N\sum_{k=1}^N
[{\Cmat}_{\tilde\wvec\tilde\wvec}]_{n,k}[\Amat_{0}]_{k,n}
[f_2(\Lambdamat)]_{k,k}
 [ f_2(\Lambdamat)]_{n,n}.
	\end{eqnarray}
By equating the derivative of \eqref{optf2} w.r.t. $[f_2(\Lambdamat)]_{l,l}$ to zero, one obtains
 \begin{eqnarray}
 \label{derivative2}
0=
	\sum_{n=1}^N
[{\Cmat}_{\tilde\wvec\tilde\wvec}]_{n,l}[{\Cmat}_{\tilde\xvec\tilde\xvec}^{-1}]_{l,n}
 [ f_2^*(\Lambdamat)]_m\hspace{2.5cm}
 \nonumber\\
-	[\tilde\Gmat(\Lmat,\Vmat\hat{\tilde{\xvec}}^{(t)})]_{l,l}
+	\sum_{n=1}^N
[{\Cmat}_{\tilde\wvec\tilde\wvec}]_{n,l}[\Amat_{0}]_{l,n}
 [ f_2^*(\Lambdamat)]_m, 
	\eeqna
	$\forall l=1,\ldots,N$,
	which results in
	 \begin{eqnarray}
	 \label{f2opt}
{\text{diag}}(f_2^*(\Lambdamat))\hspace{5.75cm}
\nonumber\\=({\Cmat}_{\tilde\wvec\tilde\wvec} \circ {\Cmat}_{\tilde\xvec\tilde\xvec}^{-1} + {\Cmat}_{\tilde\wvec\tilde\wvec} \circ\Amat_{0})^{-1}{\text{diag}}(\tilde\Gmat(\Lmat,\Vmat\hat{\tilde{\xvec}}^{(t)})).
	\end{eqnarray}
By applying the diag operator on both sides of \eqref{f2opt},  substituting \eqref{Adef}, and using the notation $f_2^{\text{GSP-MAP}}(\Lambdamat,\hat{\xvec}^{(t)} )$ instead of $f_2^*(\Lambdamat )$,  we obtain
	the graph filter in \eqref{filter2}. 


\begin{thebibliography}{10}
\providecommand{\url}[1]{#1}
\csname url@samestyle\endcsname
\providecommand{\newblock}{\relax}
\providecommand{\bibinfo}[2]{#2}
\providecommand{\BIBentrySTDinterwordspacing}{\spaceskip=0pt\relax}
\providecommand{\BIBentryALTinterwordstretchfactor}{4}
\providecommand{\BIBentryALTinterwordspacing}{\spaceskip=\fontdimen2\font plus
\BIBentryALTinterwordstretchfactor\fontdimen3\font minus
  \fontdimen4\font\relax}
\providecommand{\BIBforeignlanguage}[2]{{%
\expandafter\ifx\csname l@#1\endcsname\relax
\typeout{** WARNING: IEEEtran.bst: No hyphenation pattern has been}%
\typeout{** loaded for the language `#1'. Using the pattern for}%
\typeout{** the default language instead.}%
\else
\language=\csname l@#1\endcsname
\fi
#2}}
\providecommand{\BIBdecl}{\relax}
\BIBdecl

\bibitem{8347162}
A.~Ortega, P.~Frossard, J.~Kovačević, J.~M.~F. Moura, and P.~Vandergheynst,
  ``Graph signal processing: Overview, challenges, and applications,''
  \emph{Proc. IEEE}, vol. 106, no.~5, pp. 808--828, May 2018.

\bibitem{Shuman_Ortega_2013}
D.~I. Shuman, S.~K. Narang, P.~Frossard, A.~Ortega, and P.~Vandergheynst, ``The
  emerging field of signal processing on graphs: {E}xtending high-dimensional
  data analysis to networks and other irregular domains,'' \emph{IEEE Signal
  Processing Magazine}, vol.~30, no.~3, pp. 83--98, May 2013.

\bibitem{Isufi_Leus2017}
E.~{Isufi}, A.~{Loukas}, A.~{Simonetto}, and G.~{Leus}, ``Autoregressive moving
  average graph filtering,'' \emph{IEEE Trans. Signal Process.}, vol.~65,
  no.~2, pp. 274--288, 2017.

\bibitem{6319640}
D.~I. {Shuman}, B.~{Ricaud}, and P.~{Vandergheynst}, ``A windowed graph fourier
  transform,'' in \emph{Proc. of SSP}, Aug. 2012, pp. 133--136.

\bibitem{9244650}
Y.~{Tanaka}, Y.~C. {Eldar}, A.~{Ortega}, and G.~{Cheung}, ``Sampling signals on
  graphs: From theory to applications,'' \emph{IEEE Signal Processing
  Magazine}, vol.~37, no.~6, pp. 14--30, 2020.

\bibitem{6854325}
A.~{Anis}, A.~{Gadde}, and A.~{Ortega}, ``Towards a sampling theorem for
  signals on arbitrary graphs,'' in \emph{Proc. of ICASSP}, May 2014, pp.
  3864--3868.

\bibitem{7352352}
A.~G. {Marques}, S.~{Segarra}, G.~{Leus}, and A.~{Ribeiro}, ``Sampling of graph
  signals with successive local aggregations,'' \emph{IEEE Trans. Signal
  Process.}, vol.~64, no.~7, pp. 1832--1843, Apr. 2016.

\bibitem{9343697}
Z.~{Xiao}, H.~{Fang}, and X.~{Wang}, ``Distributed nonlinear polynomial graph
  filter and its output graph spectrum: Filter analysis and design,''
  \emph{IEEE Trans. Signal Processing}, vol.~69, pp. 1--15, 2021.

\bibitem{8347160}
G.~B. {Giannakis}, Y.~{Shen}, and G.~V. {Karanikolas}, ``Topology
  identification and learning over graphs: Accounting for nonlinearities and
  dynamics,'' \emph{Proc. IEEE}, vol. 106, no.~5, pp. 787--807, 2018.

\bibitem{shen2016nonlinear}
Y.~{Shen}, G.~B. {Giannakis}, and B.~{Baingana}, ``Nonlinear structural vector
  autoregressive models with application to directed brain networks,''
  \emph{IEEE Trans. Signal Process.}, vol.~67, no.~20, pp. 5325--5339, 2019.

\bibitem{8496842}
Z.~{Xiao} and X.~{Wang}, ``Nonlinear polynomial graph filter for signal
  processing with irregular structures,'' \emph{IEEE Trans. Signal Processing},
  vol.~66, no.~23, pp. 6241--6251, 2018.

\bibitem{drayer2018detection}
E.~{Drayer} and T.~{Routtenberg}, ``Detection of false data injection attacks
  in smart grids based on graph signal processing,'' \emph{IEEE Systems
  Journal}, vol.~14, no.~2, pp. 1886--1896, 2020.

\bibitem{Grotas2019}
S.~{Grotas}, Y.~{Yakoby}, I.~{Gera}, and T.~{Routtenberg}, ``Power systems
  topology and state estimation by graph blind source separation,'' \emph{IEEE
  Trans. Signal Process.}, vol.~67, no.~8, pp. 2036--2051, 2019.

\bibitem{shaked2021identification}
S.~Shaked and T.~Routtenberg, ``Identification of edge disconnections in
  networks based on graph filter outputs,'' \emph{IEEE Trans. Signal and
  Information Processing over Networks}, 2021.

\bibitem{Edfors_Borjesson_1996}
O.~{Edfors}, M.~{Sandell}, J.~J. {van de Beek}, S.~K. {Wilson}, and P.~O.
  {Borjesson}, ``{OFDM} channel estimation by singular value decomposition,''
  in \emph{Proc. of Vehicular Technology Conference}, vol.~2, 1996, pp.
  923--927.

\bibitem{dd}
N.~{Geng}, X.~{Yuan}, and L.~{Ping}, ``Dual-diagonal {LMMSE} channel estimation
  for {OFDM} systems,'' \emph{IEEE Trans. Signal Process.}, vol.~60, no.~9, pp.
  4734--4746, 2012.

\bibitem{Berman_letter}
I.~E. Berman and T.~Routtenberg, ``Partially linear bayesian estimation using
  mixed-resolution data,'' \emph{IEEE Signal Processing Letters}, vol.~28, pp.
  2202--2206, 2021.

\bibitem{kroizer2021bayesian}
A.~Kroizer, T.~Routtenberg, and Y.~C. Eldar, ``Bayesian estimation of graph
  signals,'' \emph{IEEE Trans. Signal Processing}, vol.~70, pp. 2207--2223,
  2022.

\bibitem{Amar_Routtenberg}
A.~Amar and T.~Routtenberg, ``Widely-linear mmse estimation of complex-valued
  graph signals,'' \emph{IEEE Trans. Signal Processing}, vol.~71, pp.
  1770--1785, 2023.

\bibitem{Fatemi_Svensson_Morelande2012}
M.~Fatemi, L.~Svensson, L.~Hammarstrand, and M.~Morelande, ``A study of {MAP}
  estimation techniques for nonlinear filtering,'' in \emph{International
  Conference on Information Fusion}, 2012, pp. 1058--1065.

\bibitem{Abur_book}
A.~Abur and A.~Gomez-Exposito, \emph{Power System State Estimation: Theory and
  Implementation}.\hskip 1em plus 0.5em minus 0.4em\relax Marcel Dekker, 2004.

\bibitem{Monticelli_2000}
A.~Monticelli, ``Electric power system state estimation,'' \emph{Proceedings of
  the IEEE}, vol.~88, no.~2, pp. 262--282, 2000.

\bibitem{Cosovic_Vukobratovic_2019}
M.~Cosovic and D.~Vukobratovic, ``Distributed {G}auss–{N}ewton method for
  state estimation using belief propagation,'' \emph{IEEE Trans. Power
  Systems}, vol.~34, no.~1, pp. 648--658, 2019.

\bibitem{Mensing_Plass_2006}
C.~Mensing and S.~Plass, ``Positioning algorithms for cellular networks using
  {TDOA},'' in \emph{Proc. of ICASSP}, vol.~4, 2006, pp. IV--IV.

\bibitem{Stoica_Moses_Friedlande1989}
P.~Stoica, R.~Moses, B.~Friedlander, and T.~Soderstrom, ``Maximum likelihood
  estimation of the parameters of multiple sinusoids from noisy measurements,''
  \emph{IEEE Trans. Acoustics, Speech, and Signal Processing}, vol.~37, no.~3,
  pp. 378--392, 1989.

\bibitem{Bell_Cathey1993}
B.~Bell and F.~Cathey, ``The iterated {K}alman filter update as a
  {G}auss-{N}ewton method,'' \emph{IEEE Trans. Automatic Control}, vol.~38,
  no.~2, pp. 294--297, 1993.

\bibitem{schweiger2005gauss}
M.~Schweiger, S.~R. Arridge, and I.~Nissil{\"a}, ``Gauss--{N}ewton method for
  image reconstruction in diffuse optical tomography,'' \emph{Physics in
  Medicine \& Biology}, vol.~50, no.~10, p. 2365, 2005.

\bibitem{bjorck1996numerical}
{\AA}.~Bj{\"o}rck, \emph{Numerical methods for least squares problems}.\hskip
  1em plus 0.5em minus 0.4em\relax SIAM, 1996.

\bibitem{Li_Scaglione2013}
X.~Li and A.~Scaglione, ``Convergence and applications of a gossip-based
  {G}auss-{N}ewton algorithm,'' \emph{IEEE Trans. Signal Process.}, vol.~61,
  no.~21, pp. 5231--5246, 2013.

\bibitem{Blaschke1997}
B.~Blaschke, A.~Neubauer, and O.~Scherzer, ``On convergence rates for the
  iteratively regularized {G}auss-{N}ewton method,'' \emph{IMA Journal of
  Numerical Analysis}, vol.~17, pp. 421--436, 1997.

\bibitem{7032244}
S.~{Chen}, A.~{Sandryhaila}, J.~M.~F. {Moura}, and J.~{Kovacevic}, ``Signal
  denoising on graphs via graph filtering,'' in \emph{Proc. of GlobalSIP},
  2014, pp. 872--876.

\bibitem{ZHANG20083328}
F.~Zhang and E.~R. Hancock, ``Graph spectral image smoothing using the heat
  kernel,'' \emph{Pattern Recognition}, vol.~41, pp. 3328 -- 3342, 2008.

\bibitem{6778068}
S.~{Chen}, F.~{Cerda}, P.~{Rizzo}, J.~{Bielak}, J.~H. {Garrett}, and
  J.~{Kovačević}, ``Semi-supervised multiresolution classification using
  adaptive graph filtering with application to indirect bridge structural
  health monitoring,'' \emph{IEEE Trans. Signal Process.}, vol.~62, no.~11, pp.
  2879--2893, 2014.

\bibitem{7117446}
S.~{Chen}, A.~{Sandryhaila}, J.~M.~F. {Moura}, and J.~{Kovačević}, ``Signal
  recovery on graphs: Variation minimization,'' \emph{IEEE Trans. Signal
  Process.}, vol.~63, no.~17, pp. 4609--4624, Sept. 2015.

\bibitem{7891646}
N.~{Perraudin} and P.~{Vandergheynst}, ``Stationary signal processing on
  graphs,'' \emph{IEEE Trans. Signal Process.}, vol.~65, pp. 3462--3477, 2017.

\bibitem{Isufi_Ribeiro}
L.~Ruiz, F.~Gama, and A.~Ribeiro, ``Graph neural networks: Architectures,
  stability, and transferability,'' \emph{Proceedings of the IEEE}, vol. 109,
  no.~5, pp. 660--682, 2021.

\bibitem{gama2020graphs}
F.~Gama, E.~Isufi, G.~Leus, and A.~Ribeiro, ``Graphs, convolutions, and neural
  networks: From graph filters to graph neural networks,'' \emph{IEEE Signal
  Processing Magazine}, vol.~37, no.~6, pp. 128--138, 2020.

\bibitem{Yang_Giannakis_Sun2020}
Q.~Yang, A.~Sadeghi, G.~Wang, G.~B. Giannakis, and J.~Sun, ``Power system state
  estimation using {G}auss-{N}ewton unrolled neural networks with trainable
  priors,'' in \emph{Proc. of SmartGridComm}, 2020, pp. 1--6.

\bibitem{7763882}
J.~{Mei} and J.~M.~F. {Moura}, ``Signal processing on graphs: Causal modeling
  of unstructured data,'' \emph{IEEE Trans. Signal Process.}, vol.~65, no.~8,
  pp. 2077--2092, 2017.

\bibitem{Hua_Sayed_2020}
F.~{Hua}, R.~{Nassif}, C.~{Richard}, H.~{Wang}, and A.~H. {Sayed}, ``Online
  distributed learning over graphs with multitask graph-filter models,''
  \emph{IEEE Trans. Signal Inf. Process. Netw.}, vol.~6, pp. 63--77, 2020.

\bibitem{confPaper}
A.~Kroizer, Y.~C. Eldar, and T.~Routtenberg, ``Modeling and recovery of graph
  signals and difference-based signals,'' in \emph{Proc. of GlobalSIP}, Nov.
  2019, pp. 1--5.

\bibitem{lecture_notes}
\BIBentryALTinterwordspacing
N.~Shlezinger and T.~Routtenberg, ``Discriminative and generative learning for
  linear estimation of random signals (lecture notes),'' \emph{Accepted to IEEE
  Signal Processing Magazine}, 2022. [Online]. Available:
  \url{http://arxiv.org/abs/2206.04432}
\BIBentrySTDinterwordspacing

\bibitem{dabush2023verifying}
\BIBentryALTinterwordspacing
L.~Dabush and T.~Routtenberg, ``Verifying the smoothness of graph signals: A
  graph signal processing approach,'' 2023. [Online]. Available:
  \url{http://arxiv.org/abs/2305.19618}
\BIBentrySTDinterwordspacing

\bibitem{sahu2018communication}
A.~K. Sahu, D.~Jakovetic, D.~Bajovic, and S.~Kar, ``Communication efficient
  distributed weighted non-linear least squares estimation,'' \emph{EURASIP
  Journal on Advances in Signal Processing}, vol. 2018, no.~1, pp. 1--15, 2018.

\bibitem{Sahu_Kar_Moura_Poor2016}
A.~K. Sahu, S.~Kar, J.~M.~F. Moura, and H.~V. Poor, ``Distributed constrained
  recursive nonlinear least-squares estimation: Algorithms and asymptotics,''
  \emph{IEEE Trans. Signal and Information Processing over Networks}, vol.~2,
  no.~4, pp. 426--441, 2016.

\bibitem{BERTSEKAS_1999}
D.~P. Bertsekas, Ed., \emph{Nonlinear Programming}, 2nd~ed.\hskip 1em plus
  0.5em minus 0.4em\relax MA, USA: Athena Scientific, 1999.

\bibitem{deuflhard1979affine}
P.~Deuflhard and G.~Heindl, ``Affine invariant convergence theorems for
  {N}ewton’s method and extensions to related methods,'' \emph{SIAM Journal
  on Numerical Analysis}, vol.~16, no.~1, pp. 1--10, 1979.

\bibitem{Preconditioning_Clinthorne_1993}
N.~Clinthorne, T.-S. Pan, P.-C. Chiao, W.~Rogers, and J.~Stamos,
  ``Preconditioning methods for improved convergence rates in iterative
  reconstructions,'' \emph{IEEE Trans. Medical Imaging}, vol.~12, no.~1, pp.
  78--83, 1993.

\bibitem{hanke1992preconditioned}
M.~Hanke, J.~Nagy, and R.~Plemmons, ``Preconditioned iterative
  regularization,'' \emph{Numerical linear algebra}, pp. 141--163, 1992.

\bibitem{Wiesel_Eldar2010}
Y.~Chen, A.~Wiesel, Y.~C. Eldar, and A.~O. Hero, ``Shrinkage algorithms for
  mmse covariance estimation,'' \emph{IEEE Trans.Signal Process.}, vol.~58,
  no.~10, pp. 5016--5029, 2010.

\bibitem{Stoica_Wang}
J.~Li, P.~Stoica, and Z.~Wang, ``On robust capon beamforming and diagonal
  loading,'' \emph{IEEE Trans. Signal Process.}, vol.~51, no.~7, pp.
  1702--1715, 2003.

\bibitem{Carlson_1988}
B.~Carlson, ``Covariance matrix estimation errors and diagonal loading in
  adaptive arrays,'' \emph{IEEE Transactions on Aerospace and Electronic
  Systems}, vol.~24, no.~4, pp. 397--401, 1988.

\bibitem{sparse_paper}
J.~{Liu}, E.~{Isufi}, and G.~{Leus}, ``Filter design for autoregressive moving
  average graph filters,'' \emph{IEEE Trans. Signal Inf. Process. Netw.},
  vol.~5, no.~1, pp. 47--60, 2019.

\bibitem{5982158}
D.~I. {Shuman}, P.~{Vandergheynst}, and P.~{Frossard}, ``Chebyshev polynomial
  approximation for distributed signal processing,'' in \emph{Proc. of DCOSS},
  2011, pp. 1--8.

\bibitem{golub2003separable}
G.~Golub and V.~Pereyra, ``Separable nonlinear least squares: the variable
  projection method and its applications,'' \emph{Inverse problems}, vol.~19,
  no.~2, p.~R1, 2003.

\bibitem{Watts_Strogatz}
W.~D. J., Strogatz, and S.~H., ``Collective dynamics of small-world networks,''
  \emph{Nature}, pp. 393--440, 1998.

\bibitem{bienstock2019strong}
D.~{Bienstock} and A.~{Verma}, ``Strong {NP}-hardness of {AC} power flows
  feasibility,'' \emph{Oper. Res. Lett.}, vol.~47, no.~6, pp. 494--501, 2019.

\bibitem{dabush2021state}
L.~Dabush, A.~Kroizer, and T.~Routtenberg, ``State estimation in partially
  observable power systems via graph signal processing tools,'' \emph{Sensors
  (MDPI)}, vol.~23, no.~3, p. 1387, 2023.

\bibitem{Dong_Vandergheynst_2016}
X.~Dong, D.~Thanou, P.~Frossard, and P.~Vandergheynst, ``Learning {L}aplacian
  matrix in smooth graph signal representations,'' \emph{IEEE Trans. Signal
  Process.}, vol.~64, no.~23, pp. 6160--6173, Dec. 2016.

\bibitem{ramezani2019graph}
M.~Ramezani-Mayiami, M.~Hajimirsadeghi, K.~Skretting, R.~S. Blum, and H.~V.
  Poor, ``Graph topology learning and signal recovery via {B}ayesian
  inference.'' in \emph{DSW}, 2019, pp. 52--56.

\bibitem{iEEEdata}
\BIBentryALTinterwordspacing
``Power systems test case archive.'' [Online]. Available:
  \url{http://www.ee.washington.edu/research/pstca/}
\BIBentrySTDinterwordspacing

\bibitem{Routtenberg_Tabrikian_Bayesian}
T.~Routtenberg and J.~Tabrikian, ``Bayesian parameter estimation using periodic
  cost functions,'' \emph{IEEE Trans. Signal Process.}, vol.~60, no.~3, pp.
  1229--1240, 2012.

\bibitem{10096261}
G.~Sagi, N.~Shlezinger, and T.~Routtenberg, ``Extended {K}alman filter for
  graph signals in nonlinear dynamic systems,'' in \emph{Proc. of ICASSP},
  2023.

\bibitem{Zheng_Cai2011}
M.~{Zheng}, J.~{Bu}, C.~{Chen}, C.~{Wang}, L.~{Zhang}, G.~{Qiu}, and D.~{Cai},
  ``Graph regularized sparse coding for image representation,'' \emph{IEEE
  Trans. Image Process.}, vol.~20, no.~5, pp. 1327--1336, May 2011.

\bibitem{Elmoataz_2008}
A.~{Elmoataz}, O.~{Lezoray}, and S.~{Bougleux}, ``Nonlocal discrete
  regularization on weighted graphs: A framework for image and manifold
  processing,'' \emph{IEEE Trans. Image Process.}, vol.~17, no.~7, pp.
  1047--1060, July 2008.

\bibitem{petersen2008matrix}
K.~B. Petersen, M.~S. Pedersen \emph{et~al.}, ``The matrix cookbook,''
  \emph{Technical University of Denmark}, vol.~7, no.~15, p. 510, version:
  2012.

\end{thebibliography}
\end{document}